\newtheorem{theorem}{Theorem}
\newtheorem{corollary}[theorem]{Corollary}
\newtheorem{proposition}[theorem]{Proposition}
\newtheorem{definition}{Definition}
\newcommand{\BibTeX}{B\kern-.05em{\sc i\kern-.025em b}\kern-.08em\TeX}
\newtheorem{remark}{Remark}[]
\newtheorem{example}[]{Example}
\newcommand{\Imp}[0]{\mathit{IMP}}
\begin{document}


\begin{frontmatter}


\paperid{1827} 


\title{
Limited Voting for Better Representation?
}


\author[A]{\fnms{Maaike}~\snm{Venema-Los}\orcid{0000-0002-0704-2081}}
\author[A]{\fnms{Zo\'{e}}~\snm{Christoff}\orcid{0000-0003-2412-8458}}
\author[A,B]{\fnms{Davide}~\snm{Grossi}\orcid{0000-0002-9709-030X}\thanks{Corresponding Author. Email: d.grossi@rug.nl}} 

\address[A]{
University of Groningen
}
\address[B]{
University of Amsterdam
}


\begin{abstract}
Limited Voting (LV) is an approval-based method for multi-winner elections where all ballots are required to have a same fixed size. 
While it appears to be used as voting method in corporate governance and has some political applications, to the best of our knowledge, no formal analysis of the rule exists to date. We provide such an analysis here, prompted by a request for advice about this voting rule by a health insurance company in the Netherlands, which uses it to elect its work council.
We study conditions under which LV would improve representation over standard approval voting and when it would not. We establish the extent of such an improvement, or lack thereof, both in terms of diversity and proportionality notions. These results help us understand if, and how, LV may be used as a low-effort fix of approval voting in order to enhance representation.  
\end{abstract}

\end{frontmatter}



\section{Introduction}
    Limited Voting (LV) is a form of multi-winner approval voting where ballots are limited: 
    voters can only submit a ballot of exactly $l$ votes (where $l$ is at most the committee size $k$); the $k$ candidates with largest number of votes are selected. Although the rule is used in corporate governance and some political contexts,\footnote{See, for instance, \url{https://en.wikipedia.org/wiki/Limited_voting} for an overview of some political use cases.} it has not yet been analysed formally, as far as we know. 
    We have been approached by 
    a health insurance 
    company from the Netherlands, that uses LV to elect its work council and wonders about its pros and cons.
    In this paper we report on an initial analysis we made of LV in order to gain a better understanding of the behavior of the rule. Since the main goal of a work council is to represent the employees of the company, we are concerned with representation in LV. Our working hypothesis is that LV is being deployed as an easy adjustment to the naive baseline of standard approval voting (AV), with the intuition that it can limit its over-representation effects, at least in settings where voters are clustered in different groups with similar opinions. 
   The aim of the paper is not to argue in favor of LV but rather to rationalize the above intuition and understand its exact scope.
    
\paragraph{Motivation}
At first sight, 
LV does not seem very appealing. 
Indeed, from the point of view of the voters, the rule is literally limiting them in what they are allowed to express
. Some voters may have less than $l$ approved candidates, and LV forces them to vote for candidates they do not approve. Other
s may 
approve more than $l$ 
candidates, and cannot vote for all candidates they approve. 
This loss of information about voters' 
preferences, compared to AV,
makes 
the rule inefficient. It does not, for example, satisfy Pareto efficiency
(see Example \ref{ex:pareto} in Appendix \ref{app:additional}). 
We can establish the worst possible welfare loss of using LV instead of AV as follows.  Let us assume that the voters' welfare increases linearly with the number of approved candidates they have in the winning committee: for each candidate that they approve who is elected, they get one `unit of welfare'. We can measure the welfare performance of a rule by looking at the sum of all voters' welfare, given the elected committee. With this measure---called the `AV-score' of a committee---it is easy to construct examples where LV performs arbitrarily worse than AV. 

\begin{example}\label{ex:bad_LV}
   In the election shown in Table \ref{tab:AV-guarantee},  the committee $W=\{c_1, c_2, c_3, c_4\}$ is a winning committee by LV (all its members get one vote and no candidate in the total election gets more than one vote), while any committee $W'\subseteq \{c_5, c_6, c_7, c_8, c_9\}$ (with $|W'| = k = 4$) wins by AV. The AV-score of $W$ is 4, since every elected candidate is only approved by one voter, while the AV-score of $W'$ is 24, since every elected candidate is approved by all six voters. 
\end{example}

\begin{table}[b]
    \small
    \centering
    \renewcommand{\arraystretch}{0.7}
    \setlength{\tabcolsep}{3pt}
        \begin{tabular}{c|c|c|c|c|c|c|c|c|c|}
                    \multicolumn{1}{c}{}& \multicolumn{1}{c}{
                    $c_1$} & \multicolumn{1}{c}{$c_2$} & \multicolumn{1}{c}{$c_3$} & \multicolumn{1}{c}{$c_4$} & \multicolumn{1}{c}{
                    $c_5$} & \multicolumn{1}{c}{$c_6$} & \multicolumn{1}{c}{$c_7$} & \multicolumn{1}{c}{$c_8$} & \multicolumn{1}{c}{$c_9$}  \\
                    \hhline{~---------}
            $v_1$ & \cellcolor[HTML]{C1C1C1} x &  &  &    &\cellcolor[HTML]{C1C1C1} &\cellcolor[HTML]{C1C1C1}  & \cellcolor[HTML]{C1C1C1}  &\cellcolor[HTML]{C1C1C1} &\cellcolor[HTML]{C1C1C1}\\ 
            \hhline{~---------}
            $v_2$ & & \cellcolor[HTML]{C1C1C1} x   &    &  &\cellcolor[HTML]{C1C1C1} &\cellcolor[HTML]{C1C1C1}  & \cellcolor[HTML]{C1C1C1}  &\cellcolor[HTML]{C1C1C1} &\cellcolor[HTML]{C1C1C1}\\ 
            \cline{2-10} 
            \hhline{~---------}
            $v_3$ & &  & \cellcolor[HTML]{C1C1C1} x    &  &\cellcolor[HTML]{C1C1C1} &\cellcolor[HTML]{C1C1C1}  & \cellcolor[HTML]{C1C1C1}  &\cellcolor[HTML]{C1C1C1} &\cellcolor[HTML]{C1C1C1}\\ 
            \hhline{~---------}
            $v_4$ & &  &  &  \cellcolor[HTML]{C1C1C1} x   &\cellcolor[HTML]{C1C1C1} &\cellcolor[HTML]{C1C1C1}  & \cellcolor[HTML]{C1C1C1}  &\cellcolor[HTML]{C1C1C1} &\cellcolor[HTML]{C1C1C1}\\ 
            \hhline{~---------}
           $v_5$ & &  &  &  & \cellcolor[HTML]{C1C1C1} x  &\cellcolor[HTML]{C1C1C1} &\cellcolor[HTML]{C1C1C1}  & \cellcolor[HTML]{C1C1C1}  &\cellcolor[HTML]{C1C1C1} \\ 
           \hhline{~---------}
           $v_6$ & &  &  &  & \cellcolor[HTML]{C1C1C1}   &\cellcolor[HTML]{C1C1C1} x &\cellcolor[HTML]{C1C1C1}  & \cellcolor[HTML]{C1C1C1}  &\cellcolor[HTML]{C1C1C1} \\ \hline \hline
           LV & \cellcolor[HTML]{1d53ab}&  \cellcolor[HTML]{1d53ab}&  \cellcolor[HTML]{1d53ab}&  \cellcolor[HTML]{1d53ab}&   & &  &  & \\ 
           \hhline{~---------}
           AV &    & &  &  & \cellcolor[HTML]{d64238}&  \cellcolor[HTML]{d64238}&  \cellcolor[HTML]{d64238}&  \cellcolor[HTML]{d64238} & \\ 
           \hhline{~---------}
        \end{tabular}
        \vspace{0.2cm}
           \caption{Rows indicate voters $v_i$, columns candidates $c_i$, $l=1$, $k=4$. Approvals in grey, votes indicated by x. In the two bottom rows, a winning committee according to LV (first row, blue) and a winning committee according to AV (second row, red) are indicated.
   }
   \label{tab:AV-guarantee}
    \end{table}

The example generalizes to obtain arbitrarily large differences in AV-score, where any voter $v_i$ approves $A_i = \{c_i\cup\{c_j: k+1\leq j \leq m\}\}$ (with $m$ the number of candidates) and votes for $c_i$ only. Then, $W=\{c_1, ..., c_k\}$ is a winning committee by LV: no candidate gets more than one vote and all candidates in $W$ get one vote. Yet, its AV-score is only $k$, while any committee $W'\subseteq \{c_j: k+1\leq j \leq m\}$ (with $|W'|=k$) has an AV-score of $k\cdot n$ (with $n$ the number of voters). The worst case proportion between a rule's AV-score in an election and the maximum AV-score possible in that election is a rule's \emph{AV-guarantee} \citep{LacknerSkowron2019}. The above construction shows that the AV-guarantee of LV tends to $0$ as $n$ grows. 

\medskip
 
So, why is LV deployed in practice? The motivation may lie in the fact that LV can restrict the influence of large pluralities on voting outcomes, without changing the way in which outcomes are computed. Example \ref{ex:introduction_example} below illustrates this intuition: under some assumptions on the voters' preferences and ballots, LV gives representation to voters that would otherwise not be represented under AV.
    
    \begin{example}\label{ex:introduction_example}
        Suppose we have three sets of candidates (parties) $C_a = \{a_1, a_2, ...\}$, $C_b = \{b_1, b_2, ...\}$, and $C_c = \{c_1, c_2, ...\}$ and three groups of voters $V_a, V_b, V_c$ 
        who approve the respective parties, and suppose the desired committee size $k=8$.  
        Then, if $|V_a|> |V_b|>|V_c|$ and $|C_a| \geq 8$, with AV, the winning committee will be completely filled with candidates from $C_a$. With LV and limit $l=3$, if all voters vote for the first (by index) $3$ candidates of their party, any winning committee consists of $8$ candidates from $\{a_1, a_2, a_3, b_1, b_2, b_3, c_1, c_2, c_3\}$, thereby providing representation also to $V_b$ and $V_c$. 
    \end{example}

\paragraph{Contribution}
    In this paper, we study the extent to which the intuition illustrated above can rationalize the use of LV in specific voting settings.
    As a measure of diversity, we use the Chamberlin-Courant (CC) score \cite{thiele1895,chamberlincourant1983},
    and we compare LV's diversity to that of AV to get LV's \emph{CC-improvement} in a given election (Section \ref{sec:div}). 
    We show that in elections where parties are able to coordinate their voters about which candidates to include in their ballots---to which we refer as {\em broadcasted party-list elections}---LV can indeed provide higher diversity than AV. 
    We show that this diversity increase can be measured as a function of the number of parties, the committee size and the ballot size. 
    We then consider LV's proportionality (Section \ref{sec:prop}) and show that it 
    fails to satisfy all common proportionality axioms. 
    We use a similar quantitative measure, the 
    \emph{PAV-improvement}, to compare LV's proportionality to that of AV, and show that in broadcasted party-list elections, if the size of the parties is not too dissimilar, 
    LV  is more proportional than AV.
    We then show with computational experiments (Section \ref{sec:experiment}) that the more an election deviates from a broadcasted party-list election, the smaller the benefit of LV over AV. Finally (Section \ref{sec:games}), we study how voters within parties could strategically coordinate on whom to vote for, in order to maximize the number of their representatives. Interestingly, we observe that the committees selected at equilibria of this type of games do provide good proportionality guarantees. 
    
    We present the main proofs in the body of the paper. Remaining proofs as well as additional results and examples are provided as supplementary material in the appendices. 


\section{Related work}

Although LV has not yet been the object of direct analysis, 
similar multi-winner voting rules are discussed in the literature. One is Bloc/Block Voting: in \citet[chapter 2]{endriss_2017} and \citet{Elkind2017}, `Bloc (Voting)' refers to the case where the ballot size $l$ is exactly equal to the committee size $k$. (In \citet{Kamwa_Merlin_2015}, this is called `Limited Voting'.) In \citet{Bredereck2021}, $\ell$-Bloc is used: every voter \textit{approves} (and votes for) her $\ell$ most favorite candidates, assuming ordinal preferences. 
Similarly, but for single-winner, \citet{moulin2016} mentions $k$-approval voting, where a voter votes for the first $k$ candidates in their ranking. In \citet{Janson2016}, `Block Vote' allows voters to submit a ballot of length \emph{at most} $k$ (so the ballots of different voters can have different lengths), instead of insisting on length $l$. In \citet{Baharad2005} another voting rule similar to LV, Restricted Approval Voting (RAV), is introduced, in the setting where voters have \emph{ordinal} preferences, in single-winner elections. RAV is similar to LV in the idea of setting a limit on the ballot sizes, but it  gives more freedom to voters: they can approve between a lower bound $l$ and an upper bound $u$ of candidates. We expand on this specific model and its relation to ours in Remark \ref{remark:ranked}.

In \citet{Lackner2023}, Block Voting, Limited Voting, and Single Non-Transferable Vote (SNTV) are mentioned, where Limited Voting refers to the case where voters may approve \emph{at most} $l$ candidates for some $l<k$, and SNTV is Limited Voting for $l=1$. We study here the `strict' case of limited voting where voters are requested to provide \emph{exactly} $l$ approved candidates.


There is also a version of the Chamberlin Courant rule with which LV bears resemblance. This is the Thiele rule known as $l$-CC \cite{BrillLaslierSkowron2016}. Formally, $l$-CC is the approval based scoring rule with scoring vector $(1, ..., 1, 0, ..., 0)$, where the first $l$ entries are $1$ and the rest $0$. On first sight this rule is very similar to LV, since both rules are approval based rules where only $l$ of a voter's approvals are counted. The difference lays in the moment when the limit is applied. LV applies an \emph{ex ante} limit, where a voter has to specify on which $l$ of her approvals she votes before the rule is applied. $l$-CC has an \emph{ex post} limit, where information about \emph{all} approvals is the input for the voting rule, and the mechanism of the rule decides what is the optimal subset of candidates for every voter to consider. 


With respect to the reasons for limiting ballot sizes, \citet{Lee2017} observes that ballot length restrictions affect different voters to different extents, and therefore may be hard to justify 
to the voters. However, when voters have non-dichotomous preferences, restricting ballots to the size of the committee can be justified in light of strategy-proofness. In contrast, in this paper, we do assume dichotomous preferences, and we explore the extent to which ballot size restrictions may still be justified, but now on representation grounds.


\section{Preliminaries}
An {\em election} $E = ( N, C, k, l, A, L)$ is a tuple consisting of a set of voters $N = \{v_1, ..., v_n\}$ ($n \in \mathbb{N}$) and a set of candidates $C = \{c_1, ..., c_m\}$ ($m \in \mathbb{N}$), a committee size $k\leq m$, a ballot limit $1\leq l\leq k$, an approval preference profile 
$A = (A_i)_{v_i \in N}$
where $A_i\subseteq C$ is
the set of candidates that $v_i$ approves, 
and a ballot profile  
$L = (L_i)_{v_i \in N}$
where $L_i\subseteq C$ with $|L_i|=l$ is 
the set of candidates that $v_i$ votes for.
We call an election without the ballot profile $L$ an {\em election frame} (denoted by $F$).
If it is not clear from the context, we sometimes use the notation $A_E$, $k_E$, etc. to refer to the $A$, $k$, etc. of election $E$.
While in approval-based committee rules (ABC-rules, \citet{Lackner2023}), $A=L$, in our setting $A$ and $L$ may be different. Indeed, while we assume that each voter has dichotomous preferences (their approval set), their ballot is in general not equal to their approval set. 
Still, we assume that the ballot of a voter $v_i$ 
remains 
consistent with their approval set, in the following sense: if $l \leq |A_i|$, then $L_i\subseteq A_i$, and if $l>|A_i|$, $A_i\subset L_i$.  

Given such an election, 
LV
selects $k$ candidates with the highest number of votes. 
Let the LV-score of a candidate $c$ be  $s_{LV}(c)=|\{v_i\in N : c\in L_i\}|$. 
Then: 
\begin{definition}[Limited Voting (LV)]
Given election $E = ( N, C, k, l, A, L)$,  
LV elects 
a committee $W$ with size $k$ that has the highest LV-score $s_{LV}(W)=\sum_{c\in W}s_{LV}(c)$.
\end{definition}
We compare LV with 
AV, which elects 
$k$ candidates that are approved by the highest amount of 
voters. That is, the AV-score of a candidate $c$ is $s_{AV}(c)=|\{v_i\in N : c\in A_i\}|$, and AV elects 
a committee $W$ of size $k$ with the highest AV-score $s_{AV}(W) = \sum_{c\in W}s_{AV}(c)
$.
We denote the set of outcomes of LV, respectively AV (
neither of them is resolute), on an election $E$ by $LV(E)$, respectively $AV(E)$.

In the context of corporate elections, it is not unusual for candidates to be clustered in 
parties (for instance, belonging to a same office or department), and for voters to therefore support candidates belonging to a same party. In social choice theory, approval profiles where this is the case are called \emph{party-list profiles} \cite{Peters2020limwelf, Peters2021, Botan2021}
. Formally, a profile $A=\{A_1, \cdots, A_n\}$ is a party-list profile if for all $v_i, v_j\in N$, either $A_i=A_j$ or $A_i\cap A_j=\emptyset$. A party is therefore a subset of $C$ that is approved by all and only the voters in a subset of $N$.
For notational reasons, we order parties from most to least popular as $P_1, P_2, ...$ where 
$P_i$ is the set of candidates of the party (which we also use as the party's name), and
$n_i$ is the number of voters 
approving party $P_i$, so $n_1\geq n_2 \geq ...$~. We call elections (or frames) with party-list profiles, party-list elections (or frames).


We extend the notion of party-list profile by assuming that parties are able to coordinate, or signal to, their base which candidates the party wishes to see elected. We call \emph{broadcasting order} the priority by which a party's candidates are voted for by its voters. For now, we assume that parties are able to signal to their base only one such fixed order (hence the qualification `broadcasted'). Such an assumption is plausible in situations in which the party is not able to instruct voters individually. We will be lifting this assumption later in Section \ref{sec:games}.

\begin{definition}[Broadcasting order]\label{def:broadcasting_order}
    Given an election $E = ( N, C, k, l, A, L)$, a \emph{broadcasting order} $\succ_C$ is a linear order over all candidates in $C$
    . 
    We say that $E$ is consistent with $\succ_C$ iff for all voters $v_i\in N$, for any $c, c' \in A_i$: if $c\in L_i$ and $c'\notin L_i$, $c\succ_C c'$.
\end{definition} 
We say that elections are \emph{broadcasted party-list} if they are party-list and ballots are consistent with a broadcasting order. 
\begin{definition}[Broadcasted party-list elections] \label{def:bpl}
        An election $E = ( N, C, k, l, A, L)$ is a \emph{broadcasted party-list election} if $A$ is a party-list profile and 
        $E$ is consistent with a broadcasting order $\succ_C$.
\end{definition}
It follows that, in a broadcasted party-list election, for all $v_i, v_j\in N$, if $A_i=A_j$, then $L_i = L_j$. It is worth noting also that broadcasted party-list elections relate directly to elections in apportionment problems, where voters vote for parties, but parties determine which candidates receive seats \cite{balinski1994apportionment}. 

\begin{remark} \label{remark:ranked}
    In the above definitions we assumed voters to hold approval preferences and determine their limited ballots based on them.
    The model by \citet{Baharad2005} offers an alternative: voters have ordinal preferences and hold individual cutoff points $t^*_i$, which define their approval sets. Truthful ballots only contain candidates that are ranked higher than $t^*_i$. In this model, one can define party-list elections in a way that makes our broadcasting order redundant by requiring that voters: either have identical $t^*_i$s and rank candidates in the same way above this cutoff (within parties), or the set of candidates occurring above the cutoffs are disjoint (across parties).  
\end{remark}

\section{
Diversity} \label{sec:div}
The motivating example in the introduction 
shows that, 
in some situations, LV can improve the diversity of a committee compared to AV by imposing a limit on the size of the 
ballot. In this section we aim at quantifying exactly such improvements. We do so using two measures (CC-improvement and CC-guarantee) based on the gold standard of diversity in approval-based elections: the Approval-Based Chamberlin-Courant ($\alpha$-CC) rule \citep{
endriss_2017}. The rule selects the committees which maximize the Chamberlin-Courant score or {\em CC-score}: the CC-score of a committee $W$ given an approval profile $A$ is $s_{CC}(A,W)= |\{v_i\in N: W\cap A_i \not = \emptyset\}|$. In other words, the rule minimizes the number of voters left without representatives.

\subsection{CC-improvement}
To compare the diversity of the outcome of LV to that of 
the outcome of 
AV, we define the 
{\em (relative) improvement}
of LV with respect to AV, by comparing the worst scoring committee(s) selected by LV to the best scoring committee(s) selected by AV, as follows: 
\begin{definition}
[CC-improvement]
    For an election $E=( N, C, k, l, A, L)$, the 
    improvement in CC-score of LV w.r.t. AV is
    $
    \Imp_{CC}(E) = \frac{\min_{W_{LV}\in LV(E)}s_{CC}(A, W_{LV})}{\max_{W_{AV}\in AV(E)}s_{CC}(A, W_{AV}) }. 
    $
\end{definition}
We take the \emph{relative} rather than the \emph{absolute} difference between the scores of LV and AV, because the values of the scores can vary considerably between elections with different numbers of voters and candidates, which is not the difference we want to focus on.

Our starting intuition was that LV might select more diverse committees when voters can be roughly divided into 
cohesive groups. We hence start our analysis from party-list profiles. Note that in any party-list election, AV will first select all candidates from the most popular party, then from the second-most, etc., until the committee is filled. In contrast, with LV we cannot directly say what the outcome will be, since it 
depends on which of their approved candidates 
voters vote on. 
In \emph{broadcasted} party-list elections, however, we know that voters will vote according to a specified order: LV will select $l$ candidates from the most popular party, then $l$ from the second-most popular, etc., until the committee is filled.

\begin{theorem}\label{thm:CC-improvement_BP}
Let $E = ( N, C, k, l, A, L)$ be any broadcasted party-list election with parties $P_1, \ldots, P_g$. 
The CC-improvement 
is 
$IMP_{CC}(E) = \frac{\sum_{i=1}^{{\min(\lceil\frac{k}{l}\rceil, g)}}n_i}{\sum_{j=1}^{s}n_j}$
, where $s$ is the largest integer $t$ such that $\sum_{i=1}^{t}|P_i|\leq k$.
\end{theorem}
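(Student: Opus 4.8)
The plan is to compute the numerator $\min_{W_{LV}\in LV(E)}s_{CC}(A,W_{LV})$ and the denominator $\max_{W_{AV}\in AV(E)}s_{CC}(A,W_{AV})$ of $\Imp_{CC}(E)$ separately, by characterizing which committees LV and AV return on a broadcasted party-list election, and then taking their ratio.

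For LV, I would first pin down the ballots: since $A$ is a party-list profile and $E$ is consistent with a broadcasting order $\succ_C$, every voter approving party $P_i$ must cast exactly the $l$ candidates of $P_i$ that are $\succ_C$-greatest (I assume $|P_i|\ge l$, so a ballot need not spill outside its party). Hence $s_{LV}(c)=n_i$ if $c$ is among the top-$l$ candidates of $P_i$ and $s_{LV}(c)=0$ otherwise, and sorting the scores non-increasingly yields a multiset in which each $n_i$ occurs with multiplicity $l$, followed by zeros. A committee maximizing the LV-score therefore takes the full top-$l$ bundle of the $\lfloor k/l\rfloor$ most popular parties plus $k\bmod l$ further candidates from the next bundle — or, when $k>gl$, all $gl$ positive-score candidates plus zero-score filler. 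Among all such winning committees, the one with the \emph{smallest} CC-score is obtained by pushing every leftover seat into a single bundle, giving representation to exactly $\min(\lceil k/l\rceil,g)$ parties; and no LV committee can do worse, since $k$ candidates drawn from bundles of size $l$ need at least $\lceil k/l\rceil$ distinct parties. So the numerator equals $\sum_{i=1}^{\min(\lceil k/l\rceil,g)}n_i$.

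For AV, since $A$ is party-list, every candidate of $P_i$ has AV-score $n_i$, so AV fills the committee party by party in order of popularity; by the definition of $s$, the parties $P_1,\dots,P_s$ are included in full, using $\sum_{i=1}^s|P_i|\le k$ seats. Thus every AV committee contains a representative of each of $P_1,\dots,P_s$, so its CC-score is at least $\sum_{j=1}^s n_j$, and I would argue that $\sum_{j=1}^s n_j$ is also the best possible, so the denominator equals $\sum_{j=1}^s n_j$. Combining the two computations gives $\Imp_{CC}(E)=\frac{\sum_{i=1}^{\min(\lceil k/l\rceil,g)}n_i}{\sum_{j=1}^{s}n_j}$, as claimed.

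The heart of the work — and the main obstacle — is the two ``boundary effects'': ties among the popularities $n_i$, and committees that do not fill exactly at a party boundary. On the LV side, I need to check that concentrating all $k\bmod l$ leftover seats in one (possibly tie-broken) bundle is a legal LV outcome, which holds because a bundle has size $l\ge k\bmod l$, so the claimed minimum is genuinely attained and not merely a lower bound. On the AV side, the delicate case is $\sum_{i=1}^s|P_i|<k$, where the leftover seats must go to $P_{s+1}$; here I would need the tie-breaking/boundary convention under which the best AV committee still represents exactly the $s$ parties of the statement. Getting these conventions right, rather than the underlying counting, is where the effort goes.
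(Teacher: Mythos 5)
Your proposal is correct and follows essentially the same route as the paper's proof: characterize the AV outcome (fill parties in order of popularity), characterize the LV outcome (the top-$l$ bundle of each party in order of popularity, since the broadcasting order fixes every ballot), compute the two CC-scores, and take the ratio; you are in fact more explicit than the paper about the implicit assumption $|P_i|\ge l$ and about why $\min(\lceil k/l\rceil,g)$ parties is both attainable and unavoidable for LV. The boundary worry you raise about the denominator is genuine: when $\sum_{i=1}^{s}|P_i|<k$ the leftover AV seats go to $P_{s+1}$, so the best AV committee represents $s+1$ parties, and the paper's own proof quietly sidesteps this by redefining $s$ mid-proof as ``the number of parties from which AV elects candidates,'' which does not match the theorem's definition of $s$ in that case.
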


\begin{proof}
AV elects $\min(|P_1|, k)$ candidates from $P_1$, then, if less than $k$ are elected, $\min(|P_2|, k-|P_1|)$ candidates from $P_2$, etc., until $k$ candidates are elected.
Therefore, if $s$ is the number of parties from which AV elects candidates, for any $W_{AV}\in AV(E)$, the CC-score will be $s_{CC}(A,W_{AV})=\sum_{j=1}^{s}n_j$, so $\max_{W\in AV(E)}s_{CC}(A, W)=\sum_{j=1}^{s}n_j$.
LV selects $l$ members from $P_1$, $l$ members from $P_2$, etc., up until the $\lceil\frac{k}{l}\rceil$th party (if $\lceil\frac{k}{l}\rceil\leq g$, otherwise it 
takes arbitrary extra candidates that are not voted for), so for any $W_{LV}\in LV(E)$, 
$s_{CC}(A,W_{LV}) = \sum_{i=1}^{\min(\lceil\frac{k}{l}\rceil, g)}n_i$. 
\end{proof}

\begin{corollary}
In broadcasted party-list elections the CC-improvement is strictly above 1 if $\min(\lceil\frac{k}{l}\rceil, g) > s$.
\end{corollary}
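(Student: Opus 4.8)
The plan is to simply read off the closed form for $IMP_{CC}(E)$ provided by Theorem~\ref{thm:CC-improvement_BP} and observe that the corollary reduces to a comparison between two initial segments of the weakly decreasing, strictly positive sequence of party sizes $n_1 \ge n_2 \ge \cdots \ge n_g \ge 1$. So the first step is purely to set up notation: write $a := \min(\lceil \frac{k}{l}\rceil, g)$ for the upper index of the numerator sum, so that Theorem~\ref{thm:CC-improvement_BP} gives $IMP_{CC}(E) = \bigl(\sum_{i=1}^{a} n_i\bigr) / \bigl(\sum_{j=1}^{s} n_j\bigr)$, and recall that $s$ counts the parties from which AV draws candidates, while $a$ counts those from which LV does.

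The second step is the one-line comparison. Since $a \le g$, the parties $P_{s+1}, \ldots, P_a$ all exist, so under the hypothesis $a > s$ we may split the numerator as $\sum_{i=1}^{a} n_i = \sum_{i=1}^{s} n_i + \sum_{i=s+1}^{a} n_i$. The second summand is a sum of $a - s \ge 1$ terms, each at least $1$ because every party has a nonempty voter base; hence it is strictly positive, the numerator strictly exceeds the denominator, and dividing yields $IMP_{CC}(E) > 1$. One may equivalently phrase this via monotonicity of the partial sums $t \mapsto \sum_{i=1}^{t} n_i$, which are strictly increasing in $t$, so that the partial sum at $a$ is strictly larger than the partial sum at $s$ whenever $a > s$.

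The \emph{only} point requiring care — and thus the main (admittedly minor) obstacle — is that this division is legitimate, i.e.\ that the denominator $\sum_{j=1}^{s} n_j$ is nonzero. This holds exactly when $s \ge 1$, that is, when $|P_1| \le k$ so that AV indeed elects candidates from at least one party; this is the regime the statement targets, and the one in which the ratio of Theorem~\ref{thm:CC-improvement_BP} is well-defined. (If instead $|P_1| > k$, any AV committee still represents all $n_1 \ge 1$ voters of $P_1$, so the improvement should be read as $\bigl(\sum_{i=1}^{a} n_i\bigr)/n_1$, which is $\ge 1$ and strict as soon as $a \ge 2$; one can either restrict the corollary to $s \ge 1$ or add this case explicitly.) No further computation is needed.
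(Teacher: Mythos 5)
Your proof is correct and follows the only natural route, which is also the paper's (the corollary is stated there without proof as an immediate consequence of Theorem~\ref{thm:CC-improvement_BP}): split the numerator's partial sum at index $s$ and note the extra terms $n_{s+1},\ldots,n_a$ are strictly positive. Your side remark about the degenerate case $s=0$ (when $|P_1|>k$ the denominator in the theorem's formula is an empty sum) is a legitimate observation about an edge case the paper glosses over, but it does not affect the validity of your argument in the intended regime.
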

From the theorem, it is also clear that the CC-score is maximised by maximising $\min(\lceil\frac{k}{l}\rceil, g)$ or, equivalently, by minimising $l$:
\begin{corollary}\label{cor:l=1}
In broadcasted party-list elections, the CC-score is maximal if $l=1$.
\end{corollary}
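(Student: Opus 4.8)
The plan is to derive the corollary directly from the numerator formula established inside the proof of Theorem~\ref{thm:CC-improvement_BP}: for any broadcasted party-list election $E$ with parties $P_1,\dots,P_g$ and ballot limit $l$, every winning committee $W_{LV}\in LV(E)$ satisfies $s_{CC}(A,W_{LV}) = \sum_{i=1}^{\min(\lceil k/l\rceil, g)} n_i$. To state the claim precisely, I would first fix an election frame together with a broadcasting order $\succ_C$ and regard $l$ as the only varying parameter: for each admissible $l\in\{1,\dots,k\}$ this determines, up to CC-score (which is all that matters here), a broadcasted party-list election $E_l$, and the assertion is that the CC-score of any $W\in LV(E_l)$ is maximised at $l=1$.

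The argument then has two short steps. First, the map $l\mapsto\lceil k/l\rceil$ is non-increasing on $\{1,\dots,k\}$ and attains its maximum value $k$ at $l=1$; hence $\min(\lceil k/l\rceil,g)\le\min(k,g)$ for every $l$, with equality when $l=1$. Second, since $n_1\ge n_2\ge\cdots\ge n_g>0$, the partial sum $\sum_{i=1}^{t} n_i$ is non-decreasing in the upper index $t$, so a larger value of $\min(\lceil k/l\rceil,g)$ yields a CC-score that is at least as large. Combining the two, $s_{CC}(A,W_{LV})$ is maximised exactly when $\min(\lceil k/l\rceil,g)$ is, i.e.\ at $l=1$, with value $\sum_{i=1}^{\min(k,g)} n_i$ (which equals $n$ when $g\le k$). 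This also recovers the ``equivalently, by minimising $l$'' remark preceding the statement.

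I do not expect a genuine obstacle beyond bookkeeping; the only point needing a word of care is the edge case $\lceil k/l\rceil>g$, in which LV must fill the committee with candidates receiving no votes. As already noted in the proof of Theorem~\ref{thm:CC-improvement_BP}, any such candidate (whether a lower-ranked candidate of an existing party or a candidate in no party) is approved only by voters who are already represented by the first $l$ elected members of their party, so it leaves the CC-score unchanged. Hence the formula $s_{CC}(A,W_{LV})=\sum_{i=1}^{\min(\lceil k/l\rceil,g)} n_i$ stays valid in this case too, and the monotonicity argument applies verbatim.
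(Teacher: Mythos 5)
Your proposal is correct and follows exactly the route the paper intends: the corollary is read off from the formula $s_{CC}(A,W_{LV})=\sum_{i=1}^{\min(\lceil k/l\rceil,g)}n_i$ in Theorem~\ref{thm:CC-improvement_BP}, using monotonicity of $\lceil k/l\rceil$ in $l$ and of the partial sums in the upper index. The only (harmless) imprecision is the phrase ``maximised exactly when'': when $g$ is small, values of $l$ other than $1$ can also attain the maximum, but the corollary only claims that $l=1$ does, which your argument establishes.
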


While Theorem \ref{thm:CC-improvement_BP} identifies precise conditions for LV to actually improve diversity, it is important to qualify the scope of the result. First, it should be noted that the assumption of a broadcasting order in Theorem \ref{thm:CC-improvement_BP} is necessary. If in a party-list profile voters do not coordinate within their party on which candidates to vote for, LV does not necessarily give a higher CC-score than AV. It can be the case that the voters from the 
largest party spread their votes over more than $l$ candidates 
to get more than $l$ representatives, or that 
they 
spread their votes too much and do not get any 
representative.
Second, even though LV may be more diverse than AV in broadcasted party-list elections, it should be stressed that LV does not maximize the CC-score: its outcome is in general not the same as the outcome of $\alpha$-CC. This rule will in such elections return a committee with at least one member of every party, if there are enough seats, and fill up the rest of the committee arbitrarily. Hence, 
if there are $g$ parties, 
$s_{CC}(\alpha\text{-CC}) = \sum_{i=1}^{\min(k, g)}n_i$. 
The 
difference between the CC-score of an outcome of LV and the maximal CC-score is therefore 
$s_{CC}(\alpha\text{-CC}) - s_{CC}(A,W_{LV}) = \sum_{i=\lceil\frac{k}{l}\rceil+1}^{\min(k, g)}n_i$.


We conclude by observing that, in general elections, LV does not necessarily improve diversity with respect to AV:

\begin{example}
   Given the profile in Table \ref{tab:CC}, the outcome of AV is $W_{AV} = \{c_2, c_3, c_4, c_5\}$, 
   representing all voters,
   while the outcome of LV is $W_{LV}=\{c_1, c_2, c_3, c_4\}$
   , leaving voter $v_6$ unrepresented. This gives a 
   CC-improvement
   of 
   $\frac{5}{6}<1$.

\begin{table}[b]
    \small
    \centering
    \renewcommand{\arraystretch}{0.65}
    \setlength{\tabcolsep}{4pt}
    \begin{tabular}{c|c|c|c|c|c|c|c|c|| c|c|}
     \multicolumn{1}{c}{}& \multicolumn{1}{c}{}& \multicolumn{1}{c}{}& \multicolumn{1}{c}{}& \multicolumn{1}{c}{}& \multicolumn{1}{c}{}& \multicolumn{1}{c}{}& \multicolumn{1}{c}{}& \multicolumn{1}{c}{}&\multicolumn{2}{c}{repr.} \\
                \multicolumn{1}{c}{}& \multicolumn{1}{c}{
                $c_1$} & \multicolumn{1}{c}{
                $c_2$} & \multicolumn{1}{c}{$c_3$} & \multicolumn{1}{c}{$c_4$} & \multicolumn{1}{c}{$c_5$} & \multicolumn{1}{c}{$c_6$} & \multicolumn{1}{c}{$c_7$} & \multicolumn{1}{c}{$c_8$} &\multicolumn{1}{c}{LV} &\multicolumn{1}{c}{AV}   \\
                \hhline{~*{10}{-}}
        $v_1$ & \cellcolor[HTML]{C1C1C1} x & \cellcolor[HTML]{C1C1C1} x & \cellcolor[HTML]{C1C1C1} x & \cellcolor[HTML]{C1C1C1} & \cellcolor[HTML]{C1C1C1} &  & & & \cellcolor[HTML]{1d53ab}& \cellcolor[HTML]{d64238}\\ 
        \hhline{~*{10}{-}}
        $v_2$ & & \cellcolor[HTML]{C1C1C1} x & \cellcolor[HTML]{C1C1C1} x & \cellcolor[HTML]{C1C1C1} x & \cellcolor[HTML]{C1C1C1} & & & & \cellcolor[HTML]{1d53ab}& \cellcolor[HTML]{d64238}\\ 
        \hhline{~*{10}{-}}
        $v_3$ & &  & \cellcolor[HTML]{C1C1C1} x & \cellcolor[HTML]{C1C1C1} x & \cellcolor[HTML]{C1C1C1} x &\cellcolor[HTML]{C1C1C1} & & & \cellcolor[HTML]{1d53ab}& \cellcolor[HTML]{d64238}\\ 
        \hhline{~*{10}{-}}
        $v_4$ & \cellcolor[HTML]{C1C1C1} x & \cellcolor[HTML]{C1C1C1} x & \cellcolor[HTML]{C1C1C1} x & \cellcolor[HTML]{C1C1C1} &  & & & & \cellcolor[HTML]{1d53ab}& \cellcolor[HTML]{d64238}\\ 
        \hhline{~*{10}{-}}
        $v_5$ & \cellcolor[HTML]{C1C1C1} x & \cellcolor[HTML]{C1C1C1} x & \cellcolor[HTML]{C1C1C1}  & \cellcolor[HTML]{C1C1C1} x &  & & & & \cellcolor[HTML]{1d53ab}& \cellcolor[HTML]{d64238}\\ 
        \hhline{~*{10}{-}}
        $v_6$ & &  & &  & \cellcolor[HTML]{C1C1C1} &\cellcolor[HTML]{C1C1C1} x & \cellcolor[HTML]{C1C1C1} x  & \cellcolor[HTML]{C1C1C1} x & & \cellcolor[HTML]{d64238}\\  \hline 
                \multicolumn{1}{c}{} & \multicolumn{8}{c}{winning committees}& \multicolumn{2}{c}{CC-score} \\ \hline
        LV & \cellcolor[HTML]{1d53ab} & \cellcolor[HTML]{1d53ab} & \cellcolor[HTML]{1d53ab}  &\cellcolor[HTML]{1d53ab} &  & & & & 5& \\ 
        \hhline{~*{10}{-}}
        AV & &\cellcolor[HTML]{d64238} & \cellcolor[HTML]{d64238} & \cellcolor[HTML]{d64238}  &\cellcolor[HTML]{d64238} &  & & & & 6\\ 
        \hhline{~*{10}{-}}
    \end{tabular}
    \vspace{0.1cm}
     \caption{$l=3$, $k=4$. Approvals in grey, AV committee in red (first bottom row), LV committee in blue (second bottom row). Columns on the right indicate whether a voter is represented by the winning committees. 
    }
    \label{tab:CC}
\end{table} 
\end{example}


\subsection{CC-guarantee}
In \citet{LacknerSkowron2019}, the \emph{CC-guarantee} of a rule $\mathcal{R}$ is used as a quantitative measure of the rule's diversity, which is defined as follows: $\kappa_{cc}(k) = \inf_{A\in \mathcal{A}}\frac{\min_{W_\mathcal{R}\in \mathcal{R}(A,k)}s_{CC}(A,W_\mathcal{R})}{\max_{W\in S_k(C)}s_{CC}(A,W)}$, where $\mathcal{A}$ is the set of all possible preference profiles and $S_k(C)$ is the set of subsets of $C$ of size $k$.
The CC-guarantee is the lowest possible CC-improvement over all elections.

\begin{theorem}\label{thm:CC-guarantee}
    The CC-guarantee of LV is 0.
\end{theorem}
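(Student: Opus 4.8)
The plan is to build, for every committee size $k$ and ballot limit $l\le k$, a family of elections $(E_n)_{n\ge k}$ for which the ratio $\frac{\min_{W\in LV(E_n)}s_{CC}(A,W)}{\max_{W\in S_k(C)}s_{CC}(A,W)}$ tends to $0$ as $n\to\infty$. Since this ratio is always non-negative and $\kappa_{cc}(k)$ is its infimum over all profiles, exhibiting such a family immediately gives $\kappa_{cc}(k)=0$. The construction is in the spirit of Example~\ref{ex:bad_LV} and exploits the one feature that separates our model from ABC-rules: the ballot profile $L$ need not coincide with the approval profile $A$. We make every voter approve a common block of $k$ candidates (so that there is a size-$k$ committee with maximal CC-score $n$), while steering every limited ballot onto private candidates that nobody else approves (so that every LV-winning committee represents at most $k$ voters).

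Concretely, take $N=\{v_1,\dots,v_n\}$ with $n\ge k$, a block $D=\{d_1,\dots,d_k\}$, and for each $v_i$ a set $P_i=\{p_i^1,\dots,p_i^l\}$ of $l$ fresh candidates, the $P_i$ pairwise disjoint and disjoint from $D$; put $A_i=D\cup P_i$ and $L_i=P_i$. This is a legal election: $|L_i|=l$ and, since $|A_i|=k+l\ge l$, consistency only demands $L_i\subseteq A_i$, which holds. In $E_n$ each $d_j$ gets $0$ votes and each private candidate gets exactly one vote, so the maximum LV-score of a size-$k$ committee is $k$ (there are $nl\ge k$ private candidates), attained precisely by the size-$k$ subsets of $\bigcup_i P_i$. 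Hence every $W\in LV(E_n)$ consists of $k$ private candidates, each approved by exactly one voter, whence $s_{CC}(A,W)\le k$ and in particular $\min_{W\in LV(E_n)}s_{CC}(A,W)\le k$. On the other hand $D\in S_k(C)$ and $s_{CC}(A,D)=n$, since every voter approves all of $D$. So the ratio for $E_n$ is at most $k/n$.

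Sending $n\to\infty$ makes $k/n$ arbitrarily small, and since the ratio is bounded below by $0$ we get $\kappa_{cc}(k)=0$ for every $k$, i.e., the CC-guarantee of LV is $0$. I do not anticipate a genuine obstacle: the only delicate point is that \emph{every} LV-winning committee, not merely some, should avoid the common block $D$, which is why the private candidates are made to strictly out-score the $d_j$ (one vote versus zero) and why each voter is handed its \emph{own} private candidates---sharing them would let some private candidate pile up votes and compete with $D$, whereas the per-voter design keeps $L$ consistent with $A$ while rendering $D$ uncompetitive.
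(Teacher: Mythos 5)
Your proof is correct. Both you and the paper prove the theorem the only way it can be proved---by exhibiting a family of elections along which the ratio $\frac{\min_{W\in LV(E)}s_{CC}(A,W)}{\max_{W\in S_k(C)}s_{CC}(A,W)}$ tends to $0$---but the concrete constructions differ in mechanism. The paper uses a coordination asymmetry entirely within approval sets: a small group of $2\frac{k}{l}$ voters approves a common size-$k$ block and pairs up so each of its candidates gets two votes, while an arbitrarily large group approves a disjoint pool but spreads its ballots so that no candidate there gets more than one vote; LV then elects the small group's block (CC-score $2\frac{k}{l}$) while a mixed committee scores $2\frac{k}{l}+|X|$. Your construction instead exploits the gap between $A$ and $L$ head-on: every voter approves a universally-approved block $D$ that nobody votes for, plus private candidates that they do vote for, so every LV winner consists of singleton-approved candidates (CC-score at most $k$) while $D$ scores $n$. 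Your version is arguably cleaner and avoids the divisibility bookkeeping the paper footnotes away, at the cost of relying on ballots that entirely shun commonly approved candidates; the paper's version shows the same failure can occur even when the LV winners are each backed by \emph{more} voters than the losers, which is a slightly more damning illustration of the rule's behaviour. Both are legal under the paper's consistency requirement, and your verification that every tied LV-winning committee (not just some) avoids $D$ is exactly the point that needs checking.
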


It is worth noting that the CC-guarantee of even AV is better than that of LV: $\frac{1}{k}$ \citep{LacknerSkowron2019}.
For this purpose, we can restrict the notion of CC-guarantee 
to focus on the performance of the rule on restricted domains of elections, rather than all of them, by simply restricting the set $\mathcal{A}$ to the domain that we are interested in. If $D$ is a set of elections,  $\kappa_{cc}(k)(D) = \inf_{E\in D|k_E=k}\frac{\min_{W_\mathcal{R}\in \mathcal{R}(E)}s_{CC}(A_E,W_\mathcal{R})}{\max_{W\in S_k(C_E)}s_{CC}(A_E,W)}$ is the CC-guarantee restricted to elections of the domain $D$.
\begin{theorem}\label{thm:CC-guarantee-BP}
    Let $BP$ denote the set of all possible broadcasted party-list elections, then 
    $\kappa_{cc}(k)(BP) = \frac{1}{k}$ in general, and  $\kappa_{cc}(k)(BP) = \frac{\lceil\frac{k}{l}\rceil}{k}$ for all $E\in BP$ with $l_E = l$.
\end{theorem}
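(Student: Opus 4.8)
The plan is to reduce the statement to the exact CC\nobreakdash-score formulas already available for broadcasted party\nobreakdash-list elections, and then to optimise the resulting ratio over the number of parties $g$ and the ballot limit $l$. Fix $E\in BP$ with ballot limit $l$ and parties $P_1,\dots,P_g$, $n_1\ge\cdots\ge n_g$. By Theorem~\ref{thm:CC-improvement_BP} and its proof, every $W_{LV}\in LV(E)$ satisfies $s_{CC}(A_E,W_{LV})=\sum_{i=1}^{\min(\lceil k/l\rceil,\,g)}n_i$; and, as noted in the discussion after that theorem, in a party-list profile a voter is represented by $W$ exactly when $W$ contains a candidate of her party, so at most $\min(k,g)$ parties can be represented with $k$ seats and $\max_{W\in S_k(C_E)}s_{CC}(A_E,W)=\sum_{i=1}^{\min(k,g)}n_i$ (attained by $\alpha$-CC). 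Writing $a:=\min(\lceil k/l\rceil,g)$ and $b:=\min(k,g)$, so that $a\le b$ since $\lceil k/l\rceil\le k$, the fraction inside the infimum defining $\kappa_{cc}(k)(BP)$ equals $\sum_{i=1}^{a}n_i \,/\, \sum_{i=1}^{b}n_i$ on $E$.

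For the lower bound: since $n_1\ge n_2\ge\cdots$, the average of the $a$ largest terms dominates the average of the $b$ largest, i.e. $\tfrac1a\sum_{i=1}^{a}n_i\ge\tfrac1b\sum_{i=1}^{b}n_i$, so the fraction is at least $a/b=\min(\lceil k/l\rceil,g)/\min(k,g)$. A short case split on whether $g\le\lceil k/l\rceil$, $\lceil k/l\rceil<g\le k$, or $g>k$ shows $a/b\ge\lceil k/l\rceil/k$, with equality precisely when $g\ge k$; and since $l\ge1$ gives $\lceil k/l\rceil\ge1$, we also obtain $a/b\ge 1/k$ with no restriction on $l$. Hence $\kappa_{cc}(k)(BP)\ge 1/k$, and $\ge\lceil k/l\rceil/k$ on the subdomain of broadcasted party-list elections with $l_E=l$.

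For matching constructions, take for large $M$ the election $E_M$ with $g=k$ parties, each consisting of exactly $l$ candidates — so $m=kl\ge k$ — and approved by $n_i=M-i$ voters, $i=1,\dots,k$; any broadcasting order is admissible since here $L_i=A_i=P_i$. Every voter of $P_i$ votes for all $l$ of $P_i$'s candidates, each getting LV-score $n_i$, and as $n_1>\cdots>n_k$, LV must fill its $k$ seats using the first $\lceil k/l\rceil$ parties; thus $a=\lceil k/l\rceil$, $b=k$, and the ratio $\sum_{i=1}^{\lceil k/l\rceil}(M-i)\,/\,\sum_{i=1}^{k}(M-i)\to\lceil k/l\rceil/k$ as $M\to\infty$. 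With the lower bound this gives that the restricted CC-guarantee over broadcasted party-list elections with $l_E=l$ is exactly $\lceil k/l\rceil/k$, and specialising to $l=k$ (so $\lceil k/l\rceil=1$) gives $\kappa_{cc}(k)(BP)=1/k$; equivalently $\kappa_{cc}(k)(BP)=\inf_l \lceil k/l\rceil/k=1/k$. The only delicate point is verifying that each $E_M$ is genuinely a broadcasted party-list election (correct ballot size, consistency with the approvals and with a broadcasting order) and that the stated committee is actually an LV outcome — both immediate for parties of size exactly $l$ with strictly decreasing popularities, which is why I would avoid equal-popularity instances, where ties would allow LV outcomes more diverse than the formula's worst case.
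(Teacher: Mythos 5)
Your proof is correct and follows essentially the same route as the paper's: both reduce the guarantee to the closed-form ratio $\sum_{i=1}^{\min(\lceil k/l\rceil,g)}n_i \,/\, \sum_{i=1}^{\min(k,g)}n_i$ using the LV and $\alpha$-CC score formulas for broadcasted party-list elections, and both exhibit a limiting family of (nearly) equal-sized parties whose ratio tends to $\lceil k/l\rceil/k$ (and to $1/k$ at $l=k$). The one point where you do better than the paper is the lower bound: your averaging inequality $\tfrac1a\sum_{i=1}^{a}n_i\ge\tfrac1b\sum_{i=1}^{b}n_i$ turns the paper's informal extremal reasoning (``the fraction is smallest when the trailing parties are as large as possible'') into a rigorous bound of $a/b\ge\lceil k/l\rceil/k$, and your choice of size-$l$ parties with strictly decreasing popularities cleanly avoids tie-breaking issues in the witness construction.
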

\begin{proof}
    As we saw above, for any broadcasted party-list election $E=( N, C, k, l, A, L)$ with $g$ parties, the CC-score of any committee $W$ chosen by LV is 
    $\sum_{i=1}^{\min(\lceil\frac{k}{l}\rceil, g)}n_i$,
    while the maximal CC-score of any committee $W \subseteq C$ is 
     $s_{CC}(\alpha\text{-CC}) = \sum_{i=1}^{\min(k, g)}n_i$.
    Hence, we need to find an election $E$ that minimizes 
    $\frac{\sum_{i=1}^{\min(\lceil\frac{k}{l}\rceil, g)}n_i}{\sum_{i=1}^{\min(k, g)}n_i}$. 
    All else being equal, if $g< k$, this fraction can never be smaller than if $g\geq k$, so for now we assume that $g\geq k$
    : we have to minimise 
    $\frac{\sum_{i=1}^{\lceil\frac{k}{l}\rceil}n_i}{\sum_{i=1}^{k}n_i}$.
    If $l$ is not given, $l=k$ gives the lowest value: 
    $\frac{n_1}{\sum_{i=1}^{k}n_i}$, 
    and to minimize this we should minimize 
    $n_1$. Let 
    all parties except $P_1$ have $x$ voters, and $P_1$ has $x+1$ voters. Then 
    $\frac{n_1}{\sum_{i=1}^{k}n_i} = \frac{x+1}{x+1+(k-1)x}$, 
    and if $x$ goes to infinity, this will become $\frac{1}{k}$.
    If $l$ is given, 
    $\frac{\sum_{i=1}^{\lceil\frac{k}{l}\rceil}n_i}{\sum_{i=1}^{k}n_i}$ 
    is smallest when the last $\lceil\frac{k}{l}\rceil +1$ to $k$ parties are relatively as large as possible
    . 
    The first $\lceil\frac{k}{l}\rceil$ parties have to be at least 1 voter larger in order to be ranked first
    , so we minimise the fraction if 
    they have exactly 1 more voter: 
    $\frac{\sum_{i=1}^{\lceil\frac{k}{l}\rceil}n_i}{\sum_{i=1}^{k}n_i} = \frac{(x+1)\lceil\frac{k}{l}\rceil}{(x+1)\lceil\frac{k}{l}\rceil + x(k-\lceil\frac{k}{l}\rceil)}$.
    This fraction is smallest when $x$ becomes very large, 
    which 
    gives
    $\frac{\lceil\frac{k}{l}\rceil}{k}$.
\end{proof}

The CC-guarantee of AV restricted to the domain of broadcasted party-list elections is, again, $\frac{1}{k}$.

The analysis we presented for party-list profiles based on CC-improvement and CC-guarantee can be extended, obtaining similar results, to a generalization of Definition \ref{def:bpl}: broadcasted laminar profiles. We report on such generalization in Appendix \ref{app:laminar}).

\section{Proportionality} \label{sec:prop}
The above section has shown that imposing a limit on ballots can be a crude but, in some circumstances, effective mechanism to improve diversity in AV-based elections. It is natural to ask whether LV can deliver improvements also from the point of view of proportionality. 

In this section we first establish that LV does not satisfy any of the main proportionality axioms, not even on broadcasted party-list elections. We then use the proportional approval voting (PAV \cite{thiele1895}) score to compare the proportionality of LV and AV, via a notion of PAV-improvement. 

\paragraph{Proportionality axioms} 
For space reasons, we moved to Appendix \ref{app:proofs} all definitions upon which the following proposition is based, as well as its proof.
\begin{proposition}\label{prop:axioms}
    LV  fails justified representation, even on broadcasted party-list elections, and therefore fails also extended justified representation, proportional justified representation, priceability, and lower quota.
\end{proposition}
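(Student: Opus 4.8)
The plan is to exhibit a single broadcasted party-list election witnessing the failure of \emph{justified representation} (JR), and then invoke the known implications among the axioms (JR is implied by EJR, PJR, priceability, and lower quota, so a violation of JR propagates to all of them). Recall that an outcome $W$ satisfies JR if there is no \emph{cohesive} group: a set $S \subseteq N$ with $|S| \geq n/k$ and $\bigcap_{v_i \in S} A_i \neq \emptyset$ such that $W \cap A_i = \emptyset$ for all $v_i \in S$. So the task reduces to designing a party-list profile, a committee size $k$, a ballot limit $l$, and a broadcasting order, such that \emph{every} LV-winning committee leaves some party entirely unrepresented even though that party is large enough to deserve a seat.

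The construction I would use is a broadcasted party-list election with several parties where the ballot limit $l$ forces LV to ``overspend'' seats on the most popular parties. Concretely: take $k$ and $l$ with $1 < l \le k$ and pick $g$ parties with $\lceil k/l \rceil < g \le k$, each party having at least $l$ candidates, and with voter counts $n_1 \ge n_2 \ge \cdots \ge n_g$ chosen so that $n_g \ge n/k$ (for instance, make all parties nearly equal in size, each with roughly $n/g \ge n/k$ voters, breaking ties slightly so the broadcasting order and the party ranking are well defined). By Theorem~\ref{thm:CC-improvement_BP} (and the reasoning in its proof), LV elects $l$ candidates from $P_1$, then $l$ from $P_2$, and so on, so every LV-winning committee consists of candidates drawn only from the first $\min(\lceil k/l\rceil, g) = \lceil k/l \rceil$ parties. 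Hence party $P_g$ — and indeed every party with index exceeding $\lceil k/l \rceil$ — is completely unrepresented in every $W \in LV(E)$. The voters of $P_g$ form a group $S$ with $|S| = n_g \ge n/k$ and common approved candidate (any candidate in $P_g$), yet $W \cap A_i = \emptyset$ for all of them, so JR fails.

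The main steps, in order: (i) state the JR definition and the implication chain so the reduction to JR is justified (these are imported from the cited literature); (ii) fix explicit parameters — e.g. $k = 4$, $l = 2$, $g = 3$ parties each with $2$ candidates and, say, $3$ voters each (so $n = 9$, $n/k = 2.25$, and each party of $3$ voters clears the quota), with a broadcasting order ranking each party's two candidates consecutively; (iii) compute $LV(E)$: since $\lceil k/l \rceil = 2$, LV fills the committee with the $2$ candidates of $P_1$ and the $2$ candidates of $P_2$, leaving $P_3$ out, in the unique winning committee; (iv) observe that the $3$ voters of $P_3$ witness a JR violation; (v) note that the profile is by construction a party-list profile and the ballots respect the stated broadcasting order, so $E \in BP$; (vi) conclude the failure of EJR, PJR, priceability, and lower quota by their standard relationship to JR.

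The only delicate point — and it is minor — is the tie-breaking and the non-resoluteness of LV: one must make sure that \emph{every} committee in $LV(E)$, not just one, excludes the deprived party. With all of $P_1, P_2$'s candidates strictly out-scoring $P_3$'s (which holds as soon as $n_1, n_2 > n_3$, or even $n_1=n_2=n_3$ since then the four candidates of $P_1\cup P_2$ each get the same score as $P_3$'s but a winning committee of size $4$ taking $\{P_1,P_2\}$ is optimal while one touching $P_3$ would have to drop a candidate of equal score — here one should pick $n_1 = n_2 > n_3 \ge n/k$ to be safe, e.g. $n_1 = n_2 = 3$, $n_3 = 3$ with a strict broadcasting tie-break won't suffice for scores, so instead take $n_1 = n_2 = 3$, $n_3 = 2$ only if $2 \ge 9/4$ fails — so rather scale up: $n_1 = n_2 = 3$, $n_3 = 3$, $k = 4$ gives each $P_3$-candidate LV-score $3$, tied with $P_1,P_2$ candidates, and the unique size-$4$ committee of maximum score is forced to take two full parties, and by the broadcasting symmetry any such choice omits the third). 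I would therefore choose the parameters so that the first $\lceil k/l\rceil$ parties are \emph{strictly} larger than the rest while the smallest party still meets the quota $n/k$, which is easily arranged by taking all parties large and of comparable size with a small strict gap, e.g. $n_1 = \cdots = n_{\lceil k/l\rceil} = N+1$ and $n_{\lceil k/l\rceil+1} = \cdots = n_g = N$ for $N$ large enough that $N \ge n/k$. This guarantees uniqueness of the LV outcome's ``shape'' (the set of parties it draws from) and makes the JR violation hold for every winning committee, completing the argument.
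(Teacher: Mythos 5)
Your proposal is correct and follows essentially the same route as the paper: exhibit a broadcasted party-list election in which a party meeting the $n/k$ quota is shut out of the winning committee (the paper uses $k=4$, $l=3$, three parties of sizes $2,3,3$ with three candidates each), then propagate the JR failure to EJR, PJR, priceability, and lower quota via the standard implications (noting, as you do implicitly, that the lower-quota implication needs the profile to be party-list). Your only detour — worrying that \emph{every} winning committee must exclude the deprived party — is unnecessary, since JR failure only requires one winning committee to violate the condition, but your final choice of strictly separated party sizes resolves it cleanly anyway.
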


\paragraph{PAV-improvement}
Axioms are all-or-nothing conceptualizations of proportionality. We consider now also a more fine-grained approach to measure how proportional a committee is: the \emph{PAV-score} \citep{thiele1895,Lackner2023}. The PAV-score of a committee $W$ given a preference profile $A$ is defined by $s_{PAV}(A,W)=\sum_{i\in N}\left(\sum_{j=1}^{|W\cap A_i|}\frac{1}{j}\right)$. Intuitively, this is higher when candidates that many people approve are in the winning committee, but gives less weight to voters who already have comparatively more approved candidates elected. Analogously to the CC-
 improvement, we define the 
 PAV-improvement of LV as follows:
\begin{definition}[PAV-
improvement]
    For an election $E=( N, C, k, l, A, L)$, the 
    improvement in PAV-score of LV with respect to AV 
    is 
    $
    \Imp_{PAV}(E) = \frac{\min_{W_{LV}\in LV(E)}s_{PAV}(A, W_{LV})}{\max_{W_{AV}\in AV(E)}s_{PAV}(A, W_{AV})}.
    $
\end{definition}

It can be shown that in general elections, the PAV-improvement can either be below or above $1$ (Example \ref{ex:PAV_general} in Appendix \ref{app:additional}). We then focus again on the special case of broadcasted party-list elections. 
In such elections, the PAV-score of any committee elected by AV is 
$s_{PAV}(A,W_{AV})= \sum_{j=1}^{s-1}n_j(1+\frac{1}{2}+\cdots +\frac{1}{|P_j|})+n_s(1+\frac{1}{2}+\cdots +\frac{1}{k-\sum_{i=1}^{s-1}|P_i|})$, where $s$ is the largest integer $t$ such that $\sum_{i=1}^{t}|P_i|\leq k$. 
The score of a committee elected by LV is 
$s_{PAV}(A,W_{LV})= n_1(1+\frac{1}{2}+\cdots +\frac{1}{l}) + n_2(1+\frac{1}{2}+\cdots +\frac{1}{l}) +\cdots$, 
until $W$ is filled. 
Usually, this will give a 
PAV-improvement greater than $1$, but if $P_1$ is very 
popular in comparison to the other parties, 
$n_1\sum_{j=l+1}^k\frac{1}{j}$
might be larger than 
$n_2\sum_{j=1}^l\frac{1}{j} + n_3\sum_{j=1}^l\frac{1}{j}+\cdots$.


So for broadcasted party-list elections, if the relative difference between the size of the first party and the other parties is large, the PAV-
improvement is below $1$. If, on the other hand, the parties 
have similar sizes, 
the PAV-
improvement will be 
above $1$. 
\begin{proposition}\label{prop:PAV-improvement_BP}
    Let $E = ( N, C, k, l, A, L)$ be any broadcasted party-list election with parties $P_1, \ldots, P_g$. If 
    $P_1$ has at least $k$ 
    candidates,
    $E$'s PAV-improvement $\Imp_{PAV}(E)$ is
    \[
        \frac{\sum_{i = 1}^{\min(\lfloor\frac{k}{l}\rfloor, g)} \left(n_i\sum_{j=1}^l\frac{1}{j})\right) + n_{\lceil\frac{k}{l}\rceil}\sum_{j=1}^{k \mod l}\frac{1}{j}}{\sum_{j=1}^{s-1}n_j(1+\frac{1}{2}+\cdots +\frac{1}{|P_j|})+n_s\left(1+\frac{1}{2}+\cdots +\frac{1}{k-\sum_{i=1}^{s-1}|P_i|}\right)},
    \]
    where $s$ is the largest integer $t$ such that $\sum_{i=1}^{t}|P_i|\leq k$.
\end{proposition}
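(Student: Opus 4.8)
The plan is to derive both the numerator and the denominator of $\Imp_{PAV}(E)$ from the structural description of LV and AV outcomes in broadcasted party-list elections that was already worked out in the text preceding the proposition, and then simply take their ratio. The denominator is the easier half: since $A$ is a party-list profile, AV greedily fills the committee party by party in popularity order, taking all of $P_1$, then all of $P_2$, and so on, until fewer than $|P_{s+1}|$ seats remain; the last party $P_s$ contributes only $k - \sum_{i=1}^{s-1}|P_i|$ of its candidates. Since every voter approving $P_j$ (for $j<s$) gets exactly $|P_j|$ representatives and every voter approving $P_s$ gets $k-\sum_{i=1}^{s-1}|P_i|$ representatives, summing the harmonic contributions $\sum_{t=1}^{r}\frac1t$ over all voters gives exactly the stated denominator. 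This is the expression already displayed in the paragraph before the proposition, so I would just cite it.

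For the numerator I would argue that, in a broadcasted party-list election, LV assigns to each voter the first $l$ candidates of their party (in the broadcasting order), so candidates within a party are voted for in a fixed priority. Consequently LV fills the committee by taking the top $l$ candidates of $P_1$, then the top $l$ of $P_2$, and so on. If $l \mid k$, the committee is exhausted exactly after $k/l$ parties, each contributing $l$ seats, and since $P_1$ has at least $k \ge l$ candidates the assumption guarantees $P_1$ (and, more importantly, the relevant early parties) actually has enough candidates to supply a full block of $l$. After $\lfloor k/l\rfloor$ full blocks there remain $k \bmod l$ seats, which LV fills with the next $k \bmod l$ candidates of the $\lceil k/l\rceil$-th party (when $g$ is large enough that this party exists; otherwise leftover seats go to candidates nobody voted for and contribute nothing, matching the $\min(\lfloor k/l\rfloor,g)$ cap). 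A voter approving one of the first $\min(\lfloor k/l\rfloor,g)$ parties thus gets exactly $l$ representatives, contributing $\sum_{j=1}^{l}\frac1j$; the voters of the $\lceil k/l\rceil$-th party get $k \bmod l$ representatives, contributing $\sum_{j=1}^{k \bmod l}\frac1j$. Summing over all voters yields the stated numerator.

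The main thing to be careful about — and the step I expect to be the only real obstacle — is the bookkeeping around the boundary party and the two cases $l \mid k$ versus $l \nmid k$. When $l$ divides $k$, the term $n_{\lceil k/l\rceil}\sum_{j=1}^{k \bmod l}\frac1j$ is a sum over an empty range and vanishes, and $\lfloor k/l\rfloor = \lceil k/l\rceil = k/l$, so the formula collapses correctly to $\sum_{i=1}^{\min(k/l,g)} n_i \sum_{j=1}^{l}\frac1j$; I would note this explicitly so the reader sees the single displayed formula covers both cases. I would also remark that the hypothesis ``$P_1$ has at least $k$ candidates'' is what prevents LV from needing to move on to a later party before completing an early block of $l$ (it actually suffices that each of the first $\min(\lceil k/l\rceil,g)$ parties has enough candidates, but requiring it of $P_1 = $ the largest party is a clean sufficient condition given that party sizes are not assumed monotone in popularity). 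With these observations in place, $\Imp_{PAV}(E)$ is just the quotient of the two derived quantities, which is exactly the claimed expression.
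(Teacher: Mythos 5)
Your proposal is correct and takes essentially the same route as the paper: the paper's own proof is a one-line appeal to the PAV-score expressions for AV and LV winning committees derived in the text just before the proposition, with a footnote covering exactly the boundary cases (the remainder term vanishing when $l \mid k$ or when $g < \lceil k/l \rceil$) that you work out explicitly. Your added bookkeeping around the $\lceil k/l\rceil$-th party and the role of the hypothesis $|P_1|\geq k$ is a faithful, slightly more detailed rendering of the same argument.
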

\begin{proof}
    This follows directly from the PAV-scores of winning committees by AV and LV in such elections.\footnote{The right term of the numerator disappears when: i) $k\mod l = 0$ since it sums from $j=1$ to $j=0$, that is, $k$ is divisible by $l$ and all candidates from $\frac{k}{l}$ parties are elected; or ii) $g<\lfloor\frac{k}{l}\rfloor$ since $P_{\lceil\frac{k}{l}\rceil}$ does not exist.}
\end{proof}
In the case where $l=\lceil \frac{k}{2}+1\rceil$, which is the ballot limit used in the company's election that motivated this study, it follows that:
\begin{corollary}\label{cor:PAV-improvement_BP}
For any broadcasted party-list election $E$ where 
$|P_1| \geq k$ and $l=\lceil \frac{k}{2}+1\rceil$, $\Imp_{PAV}(E) =
\frac{n_1\sum_{j=1}^l\frac{1}{j} + n_2\sum_{j=1}^{l-2}\frac{1}{j}}{n_1\sum_{j=1}^k\frac{1}{j}}.
$
\end{corollary}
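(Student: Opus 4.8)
The plan is to specialize Proposition \ref{prop:PAV-improvement_BP} to the value $l = \lceil \frac{k}{2}+1 \rceil$ and simplify the resulting expression. First I would observe that with this choice of $l$ we have $l > k/2$, so $\lfloor \frac{k}{l} \rfloor = 1$ and $\lceil \frac{k}{l} \rceil = 2$. Consequently $\min(\lfloor \frac{k}{l} \rfloor, g) = 1$ (using the assumption $g \geq 2$, which is implicit since otherwise LV and AV coincide in a trivial way — or one can note that if $g=1$ the statement is vacuous because $|P_1| \geq k$ forces $s \geq 1$ and both rules fill the committee from $P_1$). The numerator of the general formula then collapses: the sum $\sum_{i=1}^{1}\left(n_i \sum_{j=1}^{l}\frac{1}{j}\right)$ becomes just $n_1 \sum_{j=1}^{l}\frac{1}{j}$, and the right term $n_{\lceil k/l \rceil}\sum_{j=1}^{k \bmod l}\frac{1}{j}$ becomes $n_2 \sum_{j=1}^{k \bmod l}\frac{1}{j}$.

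The key computational step is to identify $k \bmod l$ when $l = \lceil \frac{k}{2}+1\rceil$. Since $l \leq k < 2l$, we have $k \bmod l = k - l = k - \lceil \frac{k}{2}+1\rceil$. A short case distinction on the parity of $k$ shows that $k - \lceil \frac{k}{2}+1 \rceil = \lfloor \frac{k}{2} \rfloor - 1 = l - 2$ in both cases: if $k = 2p$ then $l = p+1$ and $k - l = p - 1 = l - 2$; if $k = 2p+1$ then $\lceil \frac{k}{2}+1\rceil = p+2$, so $l = p+2$ and $k - l = p - 1 = l - 2$. Hence the right term of the numerator is exactly $n_2 \sum_{j=1}^{l-2}\frac{1}{j}$, giving numerator $n_1\sum_{j=1}^{l}\frac{1}{j} + n_2\sum_{j=1}^{l-2}\frac{1}{j}$.

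For the denominator I would use the hypothesis $|P_1| \geq k$: since $P_1$ is the largest party it certainly has at least $k$ candidates, so AV fills the entire committee from $P_1$ alone, meaning $s = 1$ and $k - \sum_{i=1}^{s-1}|P_i| = k$. The denominator $\sum_{j=1}^{s-1}n_j(\cdots) + n_s(1+\frac12+\cdots+\frac{1}{k-\sum_{i=1}^{s-1}|P_i|})$ therefore reduces to the single term $n_1(1 + \frac12 + \cdots + \frac{1}{k}) = n_1\sum_{j=1}^{k}\frac{1}{j}$. Substituting numerator and denominator yields exactly the claimed expression, and the corollary follows.

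The main obstacle I anticipate is the bookkeeping around $k \bmod l$ and the ceiling/floor identities — it is easy to be off by one, so I would be careful to verify the parity cases explicitly and to double-check the edge behavior (e.g. small $k$, or $k$ even versus odd) against a concrete instance before committing to the final form. Everything else is direct substitution into Proposition \ref{prop:PAV-improvement_BP}, so no genuinely new argument is needed.
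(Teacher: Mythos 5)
Your overall strategy is exactly the intended one: the corollary is a direct specialization of Proposition~\ref{prop:PAV-improvement_BP} (the paper offers no separate argument), and your handling of $\lfloor k/l\rfloor=1$, $\lceil k/l\rceil=2$, and of the denominator via $|P_1|\geq k$ is fine. But the step you yourself flagged as the dangerous one --- the identity $k \bmod l = l-2$ --- is where your arithmetic goes wrong. In your odd case you set $k=2p+1$, correctly get $l=p+2$ and $k-l=p-1$, and then assert $p-1=l-2$. That is false: $l-2=p$, not $p-1$. So for odd $k$ you actually have $k \bmod l = l-3$, and substituting into the proposition gives a second numerator term of $n_2\sum_{j=1}^{l-3}\frac{1}{j}$, not $n_2\sum_{j=1}^{l-2}\frac{1}{j}$. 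A concrete check: $k=5$ gives $l=\lceil 3.5\rceil=4$, so $k\bmod l=1$ while $l-2=2$; $k=7$ gives $l=5$, $k\bmod l=2$ while $l-2=3$. The identity you need holds only for even $k$ (there $k=2p$, $l=p+1$, $k-l=p-1=l-2$, as you computed).

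So the honest conclusion of your derivation is that the displayed formula is verified for even $k$, while for odd $k$ the same substitution yields $\bigl(n_1\sum_{j=1}^{l}\frac{1}{j}+n_2\sum_{j=1}^{l-3}\frac{1}{j}\bigr)/\bigl(n_1\sum_{j=1}^{k}\frac{1}{j}\bigr)$ --- i.e.\ either the corollary implicitly assumes $k$ even (plausible, since $l=\lceil\frac{k}{2}+1\rceil$ reads most naturally as $\frac{k}{2}+1$ in the motivating company's election), or it is off by one in the upper summation limit for odd $k$. Your write-up papers over this by forcing both parity cases to the same answer; you should either restrict to even $k$ or state the odd case separately. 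One further small point: when $|P_1|>k$ strictly, the paper's definition of $s$ as the largest $t$ with $\sum_{i=1}^{t}|P_i|\leq k$ formally yields $s=0$ rather than $s=1$; your reading gives the intended denominator $n_1\sum_{j=1}^{k}\frac{1}{j}$ either way, but it is worth a sentence to say you are interpreting the AV outcome directly (AV fills the whole committee from $P_1$) rather than leaning on the literal definition of $s$.
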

This means that, if the most popular party is much larger (in number of voters) than the second party, the PAV-
improvement is smaller than $1$, but if 
the first two parties are roughly similar in size,
LV gives a more proportional committee than AV. 

Observe at this point that carrying out a PAV-guarantee analysis analogous to the one we developed for the CC-guarantee poses extra challenges as a closed-form expression for the PAV-score (even on broadcasted party-list elections) is hard to obtain. We will come back to the issue of proportional representation in LV in Section \ref{sec:games}.

\begin{remark}
    Even though our analysis has focused on ballot limitations in AV, because of its widespread use in practice, a similar analysis can be carried out for other ABC-voting rules. We provide some observations concerning PAV and satisfaction approval voting (SAV), where all voters get $1$ point that is split equally over all candidates the voter approves, after which the committees that maximize the sum of the points are selected \cite{brams2015satisfaction}. Details on these observations are elaborated upon in Appendix \ref{app:additional}. Imposing a ballot limit on PAV (LPAV) could improve diversity, at the cost of proportionality, on broadcasted party-list elections. However, on non-party-list elections, PAV can yield higher CC-scores than LPAV. In party-list profiles, SAV is arguably not a natural voting rule to use, as parties are penalized by their size (number of candidates), regardless of the number of voters, resulting in diversity loss. Imposing a ballot limit to SAV (LSAV) can however help mitigate this effect.
\end{remark}

\section{Computational experiments 
} \label{sec:experiment}
In many real-world elections, profiles will not be party-list, even though there might be a rough division of voters into parties. To illustrate: in the work council elections of the health insurance company that motivated this study,
the employees are roughly divided over three office locations. Every employee can vote for any candidate regardless of their location, but voters might have a tendency to vote for candidates from their own locations.
In this section, we perform computational simulations to study the extent to which the analytical results obtained for LV in broadcasted party-list elections persist in elections that are \textit{almost} broadcasted party-list. 
To generate such elections, we use the \textit{disjoint} model from \citet{szufa_2022}. The model has a parameter $\phi$ to determine the distance to a party-list profile. We then use a fully crossed simulation design to study the influence of the value of $\phi$ on the CC- and PAV-improvements, and the effect of the number of candidates $k$, the size of the votes $l$, and the number of parties $g$ on this. Below, we report only on results concerning the CC-improvement. We performed the same experiments and analysis for the PAV-improvement. The results are substantially similar and are therefore only reported in Appendix \ref{app:experiment_results}. The code of our experiments is available at \url{https://github.com/MaaikeLos/Limited_Voting/}.

\subsection{Experiments design}

\paragraph{Generating approval profiles}
The disjoint model to generate election profiles works as follows \citep{szufa_2022}. Let $p,\phi\in [0,1]$, and let $g$ be a 
positive 
integer, determining the number of parties. 
Distribute the candidates over the parties by drawing a partition of $C$ into $g$ sets, $C_1, ..., C_g$ uniformly at random. Assign every voter to a party by choosing one of these sets $C_i$ uniformly at random and let the voter approve exactly the candidates in $C_i$.\footnote{It is worth noting that this process leads to parties of approximately the same number of voters, the situation for which our theoretical analysis gave the most positive results for LV. For completeness, we also conducted the experiment with a random partition of voters over the parties, which is discussed in Appendix \ref{app:experiment_results}. Results carry over to this settings although effects are, as expected, smaller.}
We now have a party-list profile that we will modify in the following way. For every voter $v_j$ and every candidate $c\in C$ do the following: with probability $1-\phi$ keep the approval of $c$ by $v_j$ as it is, and with probability $\phi$ let $c$ be approved by $v_j$ with probability $p$. 
The above procedure generates an approval profile that may differ from a party-list profile in some votes: an \textit{almost party-list profile}. Note that if $\phi=1$, all votes will be resampled so we get the $p$-impartial culture model
, and if $\phi = 0$, we get a party-list profile. 
In our experiments we keep $p=0.5$.
\paragraph{Generating ballot profiles}
Given an almost party-list approval profile $A$, we generate a corresponding ballot profile $L$ 
with ballots of size $l$, 
with the condition that for every voter $v_i$, if $l \leq |A_i|$, then $L_i\subseteq A_i$, and if $l>|A_i|$, $A_i\subset L_i$.  For comparison with our analytical results, 
we consider elections with a broadcasting order $\succ$, but
since the approval profile may deviate from the original party-list profile, we also let $\succ$ deviate from the original 
order with the same 
probability $\phi$ that is used in generating 
the almost party-list profile. For this, we use the Mallows $\phi$-model \citep{Mallows1957} for pairwise comparisons, 
implemented as 
the Repeated Insertion Model 
\citep{Lu2014}. To generate the ballot profile, we take for every voter $v_i$ the top-$l$ candidates from $\succ$ that are in $A_i$. If 
$|A_i|<l$, we fill the vote with non-approved candidates according to $\succ$.

\paragraph{Determining winning committees and scores}
Given an election with an approval profile and ballot profile generated as described above, we calculate the winning committees for AV and LV. We make both rules resolute by using random tie-breaking. We then compute the CC-score 
of both winning committees.  Note that with resolute rules, the improvement definitions do not rely on minima or maxima: 
$\Imp_{CC}(E) = \frac{s_{CC}(A, LV(E))}{s_{CC}(A, AV(E))}.$

\paragraph{Experiments}
We run the described procedure for different values of the relevant parameters. We set $n=1500$, $m=24$, $p=0.5$, and we vary the other parameters: $\phi \in\{0, 0.05, 0.1, 0.15, 0.2,  0.25, 0.5, 0.75, 1\}$, $g\in\{2, 6, 20\}$, $k\in \{8, 16, 12\}$, and for every $k$, $l\in\{1, \frac{k}{2}, k\}$. For every combination of these parameter values, we execute the procedure 2000 times and for every 
election calculate the CC-
improvement $\Imp_{CC}$.

\subsection{Results}\label{sec:results}
Figure \ref{fig:CC-improvement} shows a boxplot of the CC-improvement for different values of $\phi$. A more refined plot with all parameter combinations separated can be found in Appendix \ref{app:experiment_results}.
Since for 
$\phi \geq 0.25$ the improvements are all close to $1$, they are merged in the plot. 
\begin{figure}[t]
    \centering
    \includegraphics[scale = 0.5]{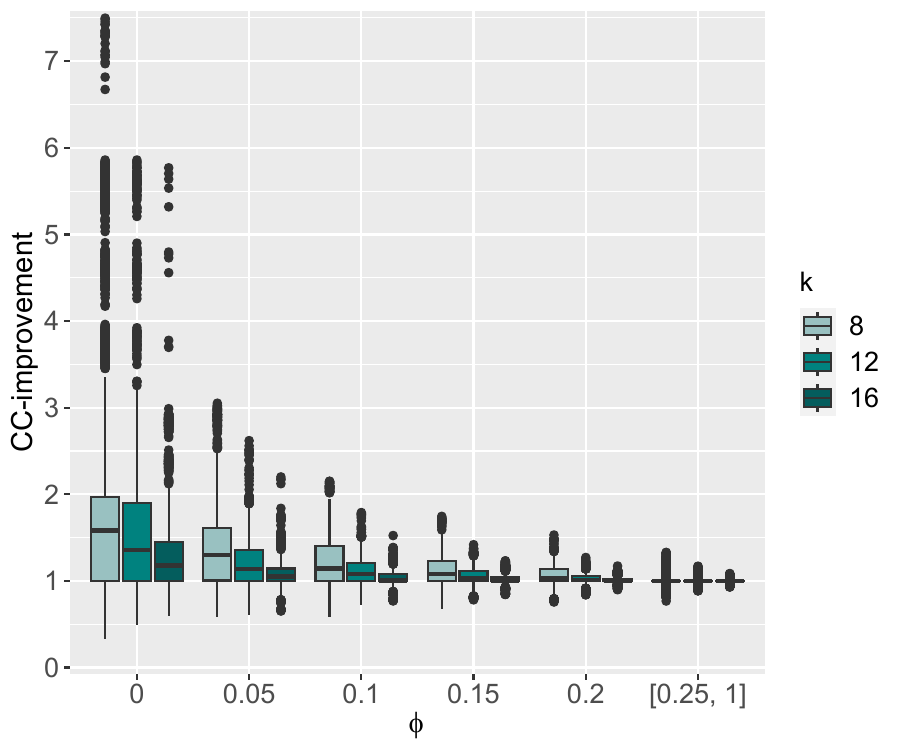}
    \caption{Boxplots of $\Imp_{CC}$ for different values of $\phi$ (dispersion). 
    \vspace{15pt} 
    }
    \label{fig:CC-improvement}
\end{figure}
In the plot we see that for $\phi=0$, LV performs better than AV. This confirms our analytical results that in broadcasted party-list elections LV provides more diversity (and proportionality) 
than AV. We also see that for $\phi\geq 0.25$,
CC-improvement 
gets very close to $1$
, so there is no real difference in score between AV and LV. 
For $0< \phi < 0.25 $, the CC-improvement  decreases as $\phi$ increases. This suggests that, for LV to be effective in improving diversity, profiles need to be close to a party-list structure: the closer they are, the more effective is the use of a ballot limit to determine gains in diversity.
Statistical analysis on the effect of the other variables on the CC-
improvement, which can be found in detail in Appendix \ref{app:variable_analysis}, indicates that for party-list elections (with $\phi=0$), $l$ and $k$ explained most of the variance in the data.

\section{Limited Voting Games}\label{sec:games}

Broadcasted party-list elections played an important role in the analytical results of Sections \ref{sec:div} and \ref{sec:prop}. In the previous section we used computational methods to obtain insights into the behavior of LV under relaxations of such assumption. In this section we again relax such assumption, but now using game-theoretic methods. Specifically, we relax the broadcasting order assumption built into the notion of broadcasted party-list elections (Definition \ref{def:bpl}) and study the strategies that parties could best use to communicate to their base about which candidates to vote for in ballots. Our analysis will make use of standard game-theoretic notions (specifically, dominant strategy, best response and pure Nash equilibrium), which, however, we refrain from defining here for space reasons.

\begin{definition}[LV-game] 
Let a party-list election frame $F = ( N, C, k, l, A)$ be given. An LV-game (for $F$) is a tuple $G = (\mathbf{P}, ( S_1, ..., S_g), O, f, ( U_1, ..., U_g ) )$ where:\footnote{We base our notation on \cite{Bonanno2018}.}
\begin{itemize}
    \item $\mathbf{P} = \{P_1, ..., P_g\}$, players are the parties of $F$.
    \item For every $P_i\in \mathbf{P}, S_i$ is the set of strategies of party $P_i$, consisting of tuples of the form $s_i = ((B_1,h_1), \ldots, (B_{t_i},h_{t_i})) $ with $\sum_{1\leq x\leq t_i}h_x = n_i$, where each $B_x$ is a limited ballot (a set of $l$ candidates from $C$), $t_i$ is the number of different such ballots expressed by voters of $P_i$, and each $h_x$ is a natural number denoting the number of voters in party $P_i$ that will vote $B_x$. We denote by $\mathbf{S} = S_1 \times ... \times S_g$ the set of strategy profiles, where every $S\in \mathbf{S}$ consists of one strategy for every party: $S = (s_1, ..., s_g)$.\footnote{Observe right away that every strategy profile induces a set of limited ballot profiles with identical approval counts for all candidates.}
    \item $O$ is the set of outcomes, that is, all $k$-sized subsets of the set of candidates $C$: $O=\{o\subseteq C: |o|=k\}$.
    \item $f:\mathbf{S}\rightarrow O$ determines the outcome of each strategy profile and is given by $f(S) = LV(E)$, where  $E = (F, L)$, and $L$ is (one of) the ballot profile(s) induced by $S$: for every $1\leq x\leq t_i$, $h_x$ voters of party $P_i$ vote ballot $B_x$.
    \item For every $P_i\in \mathbf{P}$, $U_i: O\rightarrow \mathbb{N}$ is the utility function of party $P_i$, which is defined as $U_i(o) = \min_{W\in o}(|W\cap P_i|)$. 
\end{itemize} 
\end{definition}
The definition deserves a few comments. First of all, notice that the strategy of each party $P_i$ consists of $t_i$ different ballots, and the number of voters that vote for each ballot, rather than consisting of a function assigning specific ballots to each party member. Given that LV is anonymous, this is without loss of generality. Clearly, if $t_i=1$ the strategy of party $P_i$ is a broadcasting strategy. If $t_i=n_i$, the strategy determines different ballots for each voter. Finally, observe that parties are assumed to be `pessimistic': they get the utility of the worst of all tied committees in the outcome.



\subsection{LV games with broadcasting strategies}
We start by studying LV-games where $t_i = 1$ for all parties $P_i$, which we refer to as broadcasted LV-games. This allows us to provide also a game-theoretic grounding to Definition \ref{def:bpl}. In such games, strategies are pairs $((B_1,h_1))$ (where $h_1=n_i$), so we will simply denote a strategy by its ballot. We can observe right away that if a party $P_i$ consists of at least $l$ candidates, then any strategy $s_i \subseteq P_i$ is weakly dominant for $P_i$. We thus establish the following results. 
\begin{proposition}
In any broadcasted LV-game, if for all parties $P_i$, $|P_i|\geq l$, then any strategy profile $S$ that satisfies $s_i \subseteq P_i$ for all $P_i$, is a pure strategy Nash equilibrium.
\end{proposition}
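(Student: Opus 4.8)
The plan is to verify the Nash equilibrium condition directly: fix a strategy profile $S = (s_1, \ldots, s_g)$ with $s_i \subseteq P_i$ (meaning $s_i$ is a ballot consisting of $l$ candidates all drawn from party $P_i$) for every party, and show that no single party $P_i$ can strictly improve its utility $U_i$ by unilaterally deviating to some other ballot $s_i'$. The first step is to understand the outcome $f(S)$ at this profile. Since every party $P_i$ casts all $n_i$ of its votes on the same $l$ candidates inside $P_i$, each of those $l$ candidates receives exactly $n_i$ votes, and every other candidate receives $0$ votes. Because the parties are ordered $n_1 \geq n_2 \geq \cdots \geq n_g$, LV fills the committee by taking the $l$ voted-for candidates of $P_1$, then those of $P_2$, and so on, until $k$ seats are filled (breaking ties among candidates with equal vote counts, and padding with zero-vote candidates if $\sum_i \min(|P_i|,\ldots)$ runs short). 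Hence for the party $P_i$ whose block of $l$ candidates is entirely included in every tied winning committee, $U_i(f(S)) = l$; more generally I would compute, for each $i$, the guaranteed number $m_i := U_i(f(S))$ of $P_i$-members present in the worst tied committee, noting $m_i \in \{0, \ldots, l\}$ and that $m_i$ is determined by where the cumulative count $l \cdot i$ (roughly) sits relative to $k$ and by the tie-break situation.

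The second step is the deviation analysis. Suppose $P_i$ deviates to an arbitrary strategy $s_i'$ — under the broadcasting assumption this is a single ballot $B'$ of $l$ candidates, but $B'$ may now contain candidates outside $P_i$. The key observation is that deviating cannot increase the number of $P_i$-candidates that receive $n_i$ votes beyond what they already get: $P_i$ controls exactly $n_i$ votes, all of which, under a broadcasting strategy, land on the same $l$-set $B'$, so at most $|B' \cap P_i| \le l$ candidates of $P_i$ get any of $P_i$'s votes. Since the other parties' votes do not touch $P_i$'s candidates (their ballots lie inside their own disjoint parties), a $P_i$-candidate's total vote count after the deviation is $n_i$ if it is in $B'$ and $0$ otherwise. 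So the best $P_i$ can do is have $l$ of its own candidates at the top tier with $n_i$ votes each — which is exactly what the prescribed strategy $s_i \subseteq P_i$ already achieves. Putting weight on outside candidates only helps rival parties and shrinks $|B' \cap P_i|$, which can only weakly decrease the number of $P_i$-members in any winning committee. Therefore $U_i(f(s_i', s_{-i})) \le U_i(f(S))$ for every deviation $s_i'$, establishing that $S$ is a pure Nash equilibrium. (This also re-derives the weak-dominance remark preceding the proposition: any $s_i \subseteq P_i$ weakly dominates everything.)

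The main obstacle is handling ties carefully. When $k$ is not a multiple of $l$, or when some party $P_j$ is "on the boundary" — i.e., LV can take only part of $P_j$'s $l$ voted-for candidates — the outcome $f(S)$ is a set of several tied committees, and the pessimistic utility $U_i(o) = \min_{W \in o} |W \cap P_i|$ forces us to reason about the worst such committee. I need to argue that even the worst tied committee after a deviation $s_i'$ gives $P_i$ no more than the worst tied committee at $S$. The clean way to do this is a monotonicity lemma: if I compare the vote vector induced by $S$ with that induced by $(s_i', s_{-i})$, the only changes are on $P_i$'s candidates and on whatever outside candidates $B'$ touches; the outside candidates can only rise (weakly increasing competition against $P_i$) and $P_i$'s own total "mass at the top tier" is unchanged at $n_i$ spread over $|B'\cap P_i| \le l$ candidates. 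From this, every winning committee under the deviation contains at most $|B' \cap P_i| \le l$ candidates of $P_i$, and a short case analysis on whether $P_i$'s block was fully inside, partially inside, or outside the committee at $S$ shows the worst-case count does not go up. I would isolate this as the one genuinely technical step and keep the rest as the direct bookkeeping described above.
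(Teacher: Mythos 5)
Your proposal is correct and takes essentially the same route as the paper, which gives no explicit proof but relies on the observation that any $s_i \subseteq P_i$ is weakly dominant (since rivals' votes never touch $P_i$'s candidates, any ballot gives at most $|B' \cap P_i| \le l$ of $P_i$'s candidates exactly $n_i$ votes each while only strengthening outside competitors); your deviation analysis is precisely a detailed verification of that weak dominance, including the tie-handling under the pessimistic utility.
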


\begin{corollary}
Any broadcasted party-list election $( N, C, k, l, A, L)$ where for all parties $P_i$, $|P_i|\geq l$ is a pure strategy Nash equilibrium of a broadcasted LV-game for election frame $(N, C, k, l, A)$.
\end{corollary}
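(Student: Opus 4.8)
The plan is to derive the corollary as an almost immediate consequence of the preceding Proposition together with the definition of a broadcasted LV-game. First I would fix a broadcasted party-list election $E = (N, C, k, l, A, L)$ in which every party $P_i$ satisfies $|P_i| \geq l$, and extract from it the election frame $F = (N, C, k, l, A)$ by dropping the ballot profile. Since $A$ is a party-list profile, $F$ is a party-list election frame, so the notion of a broadcasted LV-game for $F$ is well defined; let $G$ be such a game (with $t_i = 1$ for all $i$, strategies identified with single ballots).

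The key step is to read off from $L$ a strategy profile $S_L$ of $G$ and check that it is legitimate and that it induces $L$. Because $E$ is broadcasted party-list, Definition~\ref{def:bpl} gives that all voters in a party cast the same limited ballot; call it $B_i$ for party $P_i$. Consistency of ballots with approvals (the standing assumption, together with $|P_i|\geq l$ and hence $l\leq |A_i|$) forces $L_i = B_i \subseteq A_i = P_i$. Thus setting $s_i = B_i$ for each party gives a strategy profile $S_L = (s_1,\dots,s_g)$ with $s_i\subseteq P_i$, and by construction the ballot profile induced by $S_L$ is exactly $L$, i.e. $f(S_L) = LV(E)$. So $E = (F, L)$ is precisely the election realizing the strategy profile $S_L$.

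Now I would invoke the Proposition directly: since $|P_i|\geq l$ for all $i$ and $s_i\subseteq P_i$ for all $i$, the profile $S_L$ is a pure strategy Nash equilibrium of $G$. Identifying $E$ with the strategy profile $S_L$ that realizes it then yields the claim that the broadcasted party-list election $E$ "is" a pure Nash equilibrium of a broadcasted LV-game for $F$.

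I do not expect a real obstacle here; the only point requiring care is bookkeeping about the correspondence between ballot profiles and strategy profiles. One must be a little careful that a strategy profile induces a ballot profile only up to the anonymous reshuffling of identical ballots within a party (cf.\ the footnote on induced ballot profiles), but since within each party all ballots coincide in a broadcasted party-list election, this ambiguity is vacuous and $L$ is literally the induced profile. The other thing worth stating explicitly is that $l\leq|A_i|$ for every voter — this follows from $A_i = P_i$ and $|P_i|\geq l$ — so the consistency requirement really does give $L_i\subseteq A_i$ rather than $A_i\subsetneq L_i$, which is what makes $s_i\subseteq P_i$ hold.
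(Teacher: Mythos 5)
Your proposal is correct and follows essentially the same route the paper intends: the corollary is an immediate consequence of the preceding proposition, obtained by identifying the common ballot $B_i$ of each party (which exists by the broadcasted party-list property and satisfies $B_i \subseteq P_i$ since $|A_i| = |P_i| \geq l$ forces $L_i \subseteq A_i$) with a broadcasting strategy $s_i \subseteq P_i$. Your extra care about the correspondence between ballot profiles and strategy profiles is sound but not a point of divergence.
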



\subsection{Equilibria in general LV games}

In this section we study LV-games in their generality, where parties can specify for separate voters which candidates to vote for. First of all, observe that if we were to consider such games without a ballot size limit (that is, with outcomes determined by AV instead of LV), it would be dominant for parties to let their base vote for all their candidates. In equilibrium, therefore, the largest party would get all its candidates elected, then the second-largest party would, and so on untill the committee is filled (recall Example \ref{ex:introduction_example}). We will show that in LV-games, instead, a fairly strong form of proportionality can be obtained in a weaker form of equilibrium known as $\epsilon$-Nash equilibrium. A strategy profile $S$ of an LV-game is an $\epsilon$-Nash pure strategy equilibrium (with $\epsilon > 0$) if, for all parties $P_i$ and all strategies $s'_i$, $U_i(S) \geq U_i(S_{-i}, s'_i)- \epsilon$, where $S_i$ denotes the profile $S$ without $i$'s entry. Intuitively, a profile is $\epsilon$-Nash whenever no deviation exists for any agent that would yield a utility gain of more than $\epsilon$.

We recall that a committee satisfies \emph{lower quota} if every party $P_i$ gets at least $\lfloor\frac{k\cdot n_i}{|N|}\rfloor$ representatives in the committee \citep{Lackner2023, Peters2021}. 
Consider now a strategy (for the LV-game) in which a party $P_i$ (with $l\leq \lfloor\frac{k\cdot n_i}{|N|}\rfloor \leq |P_i|$) selects a set $X_i$ of $\lfloor\frac{k\cdot n_i}{|N|}\rfloor$ candidates from $P_i$ and lets its voters vote in such a way that they spread their votes as evenly as possible over the candidates in $X_i$. 
Formally:
\begin{definition}
    Let an LV-game be given.
    For any party $P_i$, $\mathcal{LQ}_i$ denotes the set of strategies $((B_1,h_1), \ldots, (B_{t_i},h_{t_i}))$, such that for every candidate $c \in C$, the number of votes $c$ gets is $y_c = \sum_{x:c\in B_x}h_x \in \{\lfloor \frac{n_i}{\lfloor\frac{k\cdot n_i}{|N|}\rfloor}\rfloor, \lceil \frac{n_i}{\lfloor\frac{k\cdot n_i}{|N|}\rfloor}\rceil, 0\}$, and $\sum_{c\in P_i}y_c = n_i\cdot l$.
\end{definition}
The following lemma shows that, if a party follows the above lower quota `heuristic' to determine their strategy in an LV-game, then the party can guarantee a lower-quota representation for itself, provided the ballot limit does not exceed the lower quota threshold.
\begin{proposition} 
Let an LV-game be given.
    If any party $P_i$ (with $l\leq \lfloor\frac{k\cdot n_i}{|N|}\rfloor \leq |P_i|$) plays a strategy in $\mathcal{LQ}_i$,
    then there exists a committee in the outcome of the game in which all candidates from $P_i$'s strategy are selected.
\end{proposition}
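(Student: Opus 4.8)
The plan is to show that a party $P_i$ playing a strategy in $\mathcal{LQ}_i$ gets all $\lfloor\frac{k\cdot n_i}{|N|}\rfloor$ of its chosen candidates into \emph{some} tied winning committee, by a vote-counting argument comparing the LV-scores these candidates receive against those of all other candidates. Write $q_i = \lfloor\frac{k\cdot n_i}{|N|}\rfloor$ for $P_i$'s lower quota, and let $X_i$ be the set of $q_i$ candidates over which $P_i$ spreads its $n_i\cdot l$ votes as evenly as possible. Each $c\in X_i$ then gets at least $\lfloor \frac{n_i\cdot l}{q_i}\rfloor$ LV-votes from $P_i$'s voters. The first step is to bound this quantity from below: since $l\le q_i$, one has $\frac{n_i l}{q_i}\ge \frac{n_i l}{q_i}$, and more usefully $\lfloor\frac{n_i l}{q_i}\rfloor \ge l\cdot\lfloor\frac{n_i}{q_i}\rfloor$ up to rounding — the point being that $n_i/q_i \approx |N|/k$, so each candidate in $X_i$ receives roughly $l\cdot|N|/k$ votes, which will turn out to be at least as large as the vote total of the $(k{-}q_i{+}1)$-th most popular candidate outside $X_i$.

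The core of the argument is a counting/pigeonhole step. Fix any strategy profile $S$ in which $P_i$ plays its $\mathcal{LQ}_i$ strategy; the other parties play arbitrary strategies. The total number of votes cast by voters outside $P_i$ is $(|N|-n_i)\cdot l$, distributed over at most $m$ candidates (in fact over candidates not forced to be in $X_i$, but we can be crude). I would argue that at most $k - q_i$ candidates outside $X_i$ can have strictly more votes than the minimum LV-score $\mu := \min_{c\in X_i} s_{LV}(c)$ achieved inside $X_i$. Indeed, if $k - q_i + 1$ candidates outside $X_i$ each had more than $\mu$ votes, the total vote count from voters not in $P_i$ would have to exceed $(k-q_i+1)\cdot\mu$; plugging in $\mu \ge l\cdot(n_i/q_i) - O(1) \approx l\cdot|N|/k$ and comparing with the available $(|N|-n_i)\cdot l$ votes yields a contradiction, because $(k - q_i + 1)\cdot \frac{l|N|}{k} > (|N| - n_i)\cdot l$ follows from $q_i \le \frac{k n_i}{|N|}$ (one rearranges to $|N|(k - q_i + 1) > k(|N| - n_i)$, i.e. $|N| > k(q_i - 1) - k n_i + \ldots$, which holds since $q_i \le k n_i/|N|$). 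Hence at most $k - q_i$ candidates beat every member of $X_i$, so there is a choice of the top $k$ candidates (breaking ties appropriately) that includes all of $X_i$, and since ties are resolved over the \emph{set} of tied optimal committees, such a committee lies in $LV(E)$ for the induced ballot profile $L$, i.e. in $f(S) = LV(E)$.

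Finally I would record that this is exactly the statement: there exists $W$ in the outcome $f(S)$ with $X_i \subseteq W$, hence $|W \cap P_i| \ge q_i$, so all candidates from $P_i$'s strategy are selected in some committee in the outcome. The main obstacle I anticipate is making the inequality in the pigeonhole step fully rigorous with the floor functions: the even spread guarantees each $c\in X_i$ gets $\lfloor n_i l / q_i\rfloor$ or $\lceil n_i l/q_i\rceil$ votes, and one must check carefully that $\lfloor n_i l / q_i \rfloor$ is still large enough to dominate $k - q_i + 1$ outside candidates given the worst-case rounding of $q_i = \lfloor k n_i/|N|\rfloor$; a clean way is to note $n_i \ge q_i \cdot \frac{|N|}{k}$ is false in general but $q_i \le \frac{k n_i}{|N|}$ gives $\frac{n_i}{q_i} \ge \frac{|N|}{k}$, so $\lfloor \frac{n_i l}{q_i}\rfloor \ge \lfloor \frac{l|N|}{k}\rfloor$, and then the comparison $(k - q_i + 1)\lfloor\frac{l|N|}{k}\rfloor \le$ (available outside votes) fails, giving the contradiction. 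A secondary subtlety is that "outside" candidates should really be candidates in $C\setminus X_i$ including other members of $P_i$ not in $X_i$ — but those receive $0$ votes from $P_i$ by construction, so they are no harder to handle than truly external candidates, and the bound on total votes from $N\setminus$ (voters of $P_i$) still applies.
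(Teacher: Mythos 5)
Your proposal is correct, and it shares the paper's key quantitative step: using $q_i=\lfloor\frac{k\cdot n_i}{|N|}\rfloor\leq\frac{k\cdot n_i}{|N|}$ to lower-bound each $X_i$-candidate's LV-score by $\lfloor\frac{|N|\cdot l}{k}\rfloor$. Where you diverge is in how the contradiction is extracted. The paper localizes the blame: if $X_i$ is not fully elected in any committee, some single party $P_j$ must exceed its lower quota in all outcomes, and then $P_j$'s own budget of $n_j\cdot l$ votes cannot push $\lfloor\frac{k\cdot n_j}{|N|}\rfloor+1$ candidates above the threshold. You instead run a global pigeonhole: all candidates outside $X_i$ together can receive at most $(|N|-n_i)\cdot l$ votes (since $P_i$'s voters spend their entire budget inside $X_i$), and $(k-q_i+1)$ candidates each strictly exceeding $\lfloor\frac{|N|\cdot l}{k}\rfloor$ would require more than that, by exactly the rearrangement $q_i<\frac{k\cdot n_i}{|N|}+1$. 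Your version buys two things: it makes explicit the reduction ``some winning committee contains $X_i$ iff at most $k-q_i$ outside candidates strictly beat $\min_{c\in X_i}s_{LV}(c)$,'' which the paper leaves implicit, and it does not need the paper's tacit assumption that each candidate receives votes only from its own party's voters (the game permits other parties to place foreign candidates on their ballots, in which case the per-party budget argument is not airtight, while your global budget still is). One point to tighten when writing it out: the decisive comparison must carry the ``$+1$'' from strict domination, i.e.\ $(k-q_i+1)\bigl(\lfloor\frac{|N|\cdot l}{k}\rfloor+1\bigr)>(k-q_i+1)\frac{|N|\cdot l}{k}>(|N|-n_i)\cdot l$; the version with $\lfloor\frac{|N|\cdot l}{k}\rfloor$ alone in your last paragraph does not always exceed the available budget. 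You correctly flag this as the delicate spot, so it is a matter of phrasing rather than a gap.
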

\begin{proof}
     Suppose a party $P_i$ plays such a strategy and $X_i$ denotes the candidates in $P_i$ which receive a vote. Every candidate in $X_i$ gets on average $\frac{n_i\cdot l}{\lfloor\frac{k\cdot n_i}{|N|}\rfloor}$ votes. However, this fraction is not necessarily integer, but every candidate gets at least $\lfloor\left(\frac{n_i\cdot l}{\lfloor\frac{k\cdot n_i}{|N|}\rfloor}\right)\rfloor$ votes.
     Since $\frac{n_i\cdot l}{\lfloor\frac{k\cdot n_i}{|N|}\rfloor} \geq \frac{n_i\cdot l}{\frac{k\cdot n_i}{|N|}}$, also $\lfloor\left(\frac{n_i\cdot l}{\lfloor\frac{k\cdot n_i}{|N|}\rfloor}\right)\rfloor \geq \lfloor \left( \frac{n_i\cdot l}{\frac{k\cdot n_i}{|N|}}\right) \rfloor = \lfloor\frac{|N|\cdot l}{k}\rfloor $.
     Suppose there would be no outcome of LV in which all candidates from $X_i$ would be elected. Then there must be another party $P_j$ that gets more than its lower quota $\lfloor\frac{k\cdot n_j}{|N|}\rfloor$ of candidates in all outcomes, so $P_j$ gets at least $\lfloor\frac{k\cdot n_j}{|N|}\rfloor+1$ candidates that each have at least $\lfloor\frac{|N|\cdot l}{k}\rfloor + 1$ votes (if one of these candidates got exactly $\lfloor\frac{|N|\cdot l}{k}\rfloor$ votes or less, this candidate would either loose from or be tied with candidates from $P_i$). In total, $P_j$ has $n_j\cdot l$ votes to spend, so if $P_j$ votes on a set of $\lfloor\frac{k\cdot n_j}{|N|}\rfloor+1$ candidates, every candidate in this set gets on average $\frac{n_j\cdot l}{\lfloor\frac{k\cdot n_j}{|N|}\rfloor+1}$ votes. If all the candidates in this set have to get \emph{at least} $\lfloor\frac{|N|\cdot l}{k}\rfloor + 1$ votes, then the floor of the average has to be at least that amount: in other words, it has to be the case that $\lfloor\left(\frac{n_j\cdot l}{\lfloor\frac{k\cdot n_j}{|N|}\rfloor+1}\right) \rfloor \geq \lfloor\frac{|N|\cdot l}{k}\rfloor + 1$.
     However, since $\lfloor\frac{k\cdot n_j}{|N|}\rfloor+1 > \frac{k\cdot n_j}{|N|}$ we have that 
     $\frac{|N|\cdot l}{k} =
     \frac{n_j\cdot l}{\frac{k\cdot n_j}{|N|}} >
     \frac{n_j\cdot l}{\lfloor\frac{k\cdot n_j}{|N|}\rfloor+1}$ and by definition of the floor we have
     $ \frac{n_j\cdot l}{\lfloor\frac{k\cdot n_j}{|N|}\rfloor+1} \geq 
     \lfloor\left(\frac{n_j\cdot l}{\lfloor\frac{k\cdot n_j}{|N|}\rfloor+1}\right) \rfloor$, so this gives us
     $\frac{|N|\cdot l}{k}\geq \lfloor\frac{|N|\cdot l}{k}\rfloor +1$, which is a contradiction. Clearly also if $P_j$ spreads its votes over more than $\lfloor\frac{k\cdot n_j}{|N|}\rfloor+1$ candidates, it cannot get at least $\lfloor\frac{k\cdot n_j}{|N|}\rfloor+1$ candidates with at least $\lfloor\frac{|N|\cdot l}{k}\rfloor + 1$ votes.
\end{proof}


\begin{corollary} \label{cor:lq}
Let an LV-game be given such that for all parties $P_i$, $l\leq \lfloor\frac{k\cdot n_i}{|N|}\rfloor \leq |P_i|$. If each party $P_i$ plays a strategy in $\mathcal{LQ}_i$, then all committees in the outcome of the game satisfy lower quota.
\end{corollary}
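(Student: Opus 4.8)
The plan is to establish the corollary by a direct counting argument. Fix a strategy profile $S$ in which every party $P_i$ plays some $s_i\in\mathcal{LQ}_i$, let $E$ be an induced election, and write $q_i:=\lfloor\frac{k\cdot n_i}{|N|}\rfloor$ for $P_i$'s lower-quota threshold. The first thing I would record is that, by the definition of $\mathcal{LQ}_i$ (as already used in the proof of the preceding proposition), the strategy $s_i$ places all of $P_i$'s votes on a fixed set $X_i\subseteq P_i$ of exactly $q_i$ candidates; and since distinct parties are disjoint subsets of $C$ in a party-list profile, the sets $X_1,\dots,X_g$ are pairwise disjoint. Because $\sum_i n_i=|N|$, this gives $\sum_i q_i=\sum_i\lfloor\frac{k\cdot n_i}{|N|}\rfloor\le k$, so in $E$ at most $k$ candidates receive any vote at all.

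The second step is to observe that every winning committee must contain all candidates with positive LV-score. Indeed, if some $W\in LV(E)$ omitted a candidate $c$ with $s_{LV}(c)>0$, then since $|W|=k$ and there are at most $k$ candidates of positive score, $W$ would have to contain some candidate $z$ with $s_{LV}(z)=0$; replacing $z$ by $c$ would strictly increase $s_{LV}(W)$, contradicting the optimality of $W$. Hence $\bigcup_i X_i\subseteq W$ for every $W\in LV(E)=f(S)$, and therefore $|W\cap P_i|\ge|X_i|=q_i=\lfloor\frac{k\cdot n_i}{|N|}\rfloor$ for all $i$. This is exactly the lower-quota condition, and it holds for every committee in the outcome, as required.

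The step that needs care is the first one: one must verify that an $\mathcal{LQ}_i$ strategy genuinely concentrates $P_i$'s mass on precisely $q_i$ distinct candidates, so that $\sum_i q_i\le k$ really does bound the total number of vote-receiving candidates --- and one must remember that the ``outcome'' $f(S)$ of the LV-game is the \emph{entire} set of tied winning committees, including those padded out with arbitrary zero-vote candidates when strictly fewer than $k$ candidates receive votes, so the bound has to hold for all of them (which the argument above does uniformly). One could instead try to invoke the preceding proposition once per party; but that only yields \emph{some} winning committee containing $X_i$, and passing from there to \emph{every} winning committee containing \emph{all} the $X_i$ at once still needs the counting observation, so the direct route seems cleaner.
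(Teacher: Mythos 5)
Your argument is correct and is essentially the paper's own proof: both rest on the observation that $\sum_i \lfloor\frac{k\cdot n_i}{|N|}\rfloor \le k$ candidates in total receive votes, so every winning committee must contain all of them (you just spell out the replacement argument that the paper leaves implicit). Your closing caveat about the outcome being the full set of tied committees is handled identically in the paper, which notes that the remaining seats are filled with all combinations of tied zero-vote candidates.
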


However, the strategy profile resulting from lower quota strategies is in equilibrium only under very restrictive conditions, namely that the lower quotas perfectly divide $k$.
\begin{proposition}\label{prop:NE_lower_quota}
Let an LV-game be given such that for all parties $P_i$, $l\leq \lfloor\frac{k\cdot n_i}{|N|}\rfloor \leq |P_i|$.
A strategy profile where each party $P_i$ plays a strategy in $\mathcal{LQ}_i$, is a pure strategy Nash equilibrium if and only if $\sum_{i\leq g}\lfloor\frac{k\cdot n_i}{|N|}\rfloor = k$.
\end{proposition}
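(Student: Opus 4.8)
The plan is to prove the two implications of the biconditional. Throughout, write $q_i:=\lfloor\frac{k\cdot n_i}{|N|}\rfloor$ for the lower quota of $P_i$, let $X_j\subseteq P_j$ be the set of exactly $q_j$ candidates on which party $P_j$ concentrates its votes under its $\mathcal{LQ}_j$ strategy, and set $T:=\lfloor\frac{|N|\cdot l}{k}\rfloor$; recall that $\sum_i q_i\le k$ always, and that (by the computation in the proof of the preceding proposition) every candidate of $X_j$ receives at least $T$ votes. For the direction ``$\sum_i q_i=k\Rightarrow S$ is a pure Nash equilibrium'', I would first note that by Corollary~\ref{cor:lq} every committee in the outcome of $S$ gives $P_i$ at least $q_i$ seats, and since the committee has size $k=\sum_i q_i$ it gives $P_i$ exactly $q_i$ seats in every tied committee, so $U_i(S)=q_i$. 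To rule out profitable deviations, fix $P_i$ and a strategy $s_i'$, with all other parties still playing $\mathcal{LQ}_j$, and suppose toward a contradiction that $U_i(S_{-i},s_i')\ge q_i+1$. Take the winning committee $W$ produced by a tie-break that ranks every non-$P_i$ candidate above every $P_i$ candidate; since $W$ is one of the tied outcomes, $|W\cap P_i|\ge q_i+1$, so $|W\setminus P_i|\le k-q_i-1<|\bigcup_{j\neq i}X_j|=k-q_i$, and hence some $c^*\in\bigcup_{j\neq i}X_j$ lies outside $W$. Every candidate of $P_i$ in $W$ beats $c^*$, and under this tie-break it cannot beat $c^*$ merely by being tied with it, so it strictly out-votes $c^*$; as $c^*$ has at least $T$ votes, each such candidate has at least $T+1$ votes. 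Since candidates of $P_i$ only receive votes from $P_i$'s $n_i l$ ballots (the other parties vote inside their own party under $\mathcal{LQ}_j$), we get $n_i l\ge (q_i+1)(T+1)$; but $T+1>\frac{|N|l}{k}$ and $q_i+1>\frac{k\,n_i}{|N|}$, so $n_i l\ge(q_i+1)(T+1)>\frac{k\,n_i}{|N|}\cdot\frac{|N|l}{k}=n_i l$, a contradiction. Hence $U_i(S_{-i},s_i')\le q_i=U_i(S)$ for every deviation.

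For the converse I would argue the contrapositive: assume $\sum_i q_i<k$ and put $r:=k-\sum_i q_i\ge 1$. Under $S$ only $\sum_i q_i=k-r<k$ candidates receive any vote, so every winning committee is forced to include $r$ zero-vote candidates and $P_i$'s share is exactly $X_i$, giving $U_i(S)=q_i$ for every party. Now let $P_i$ be a party with $|P_i|\ge q_i+1$ (such a party exists whenever the committee can be filled with party candidates, i.e.\ $\sum_i|P_i|\ge k$, which I would make explicit as part of the setup), and let it deviate to the strategy spreading its $n_i l$ votes as evenly as possible over $q_i+1$ of its candidates, so that each of these receives a positive number of votes. After the deviation exactly $\sum_{j\neq i}q_j+(q_i+1)=k-r+1\le k$ candidates have positive votes; since each of them beats every zero-vote candidate potentially needed for the remaining $r-1$ seats, every winning committee contains all of them, so $U_i(S_{-i},s_i')\ge q_i+1>q_i=U_i(S)$. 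This is a profitable deviation, so $S$ is not a Nash equilibrium.

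The main obstacle is the Nash direction, and precisely the step that upgrades ``$c^*$ has at least $T$ votes'' to ``each winning candidate of $P_i$ has at least $T+1$ votes'': that single extra vote is exactly what makes the counting inequality $(q_i+1)(T+1)>n_i l$ bite, and it is forced only by evaluating $P_i$'s pessimistic utility on a tie-break adversarial to $P_i$, under which a $P_i$-candidate that is merely tied with $c^*$ would be dropped. Some care is also needed with degenerate instances (for example a party with a single voter, or too few candidates to realise the spreading deviation), which I would dispatch using the mild non-degeneracy assumptions implicit in the model.
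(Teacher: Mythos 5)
Your proof is correct, and in the ``if'' direction it takes a genuinely different route from the paper's. The paper argues that when $\sum_i \lfloor\frac{k n_i}{|N|}\rfloor = k$ every floor is exact, so every voted-for candidate receives exactly $\frac{|N|l}{k}$ votes, and a party spreading over more candidates drops its per-candidate average below everyone else's; this is slightly imprecise when $\frac{|N|l}{k}$ is not an integer (candidates then receive the floor or ceiling of that average, not the average itself). Your argument --- evaluate the pessimistic utility on a tie-break adversarial to $P_i$, observe that a gain forces some $c^*\in\bigcup_{j\neq i}X_j$ out of the committee, that each of the $q_i+1$ winning $P_i$-candidates must then strictly out-vote $c^*$ and hence carry at least $T+1$ votes, and derive the contradiction $(q_i+1)(T+1)>n_i l$ --- handles the rounding cleanly and is essentially a sharpened reuse of the counting from the preceding proposition; it is the more rigorous of the two. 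Your ``only if'' direction is the same deviation as the paper's (claiming the $r=k-\sum_i\lfloor\frac{k n_i}{|N|}\rfloor$ unfilled seats by spreading votes over additional candidates), just with $q_i+1$ rather than $q_i+r$ targets, which suffices. One caveat you share with the paper: both proofs take $U_i(S)=q_i$ for granted when $\sum_i q_i<k$, but the $r$ leftover seats go to zero-vote candidates by tie-breaking, and if too few zero-vote candidates lie outside $P_i$ (e.g.\ every other party has $|P_j|=q_j$ exactly), even the pessimistic committee already hands $P_i$ more than $q_i$ seats and the displayed deviation is not strictly profitable. Your explicit non-degeneracy flags ($\sum_i|P_i|\ge k$, existence of a party with spare candidates) point in the right direction but do not fully close that corner case --- which is an issue with the proposition's statement rather than with your argument specifically.
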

\begin{proof}
Left-to-right. If $\sum_{i\leq g}\lfloor\frac{k\cdot n_i}{|N|}\rfloor < k$, then any party $P_i$ can best respond by  by spreading their votes over $k-\sum_{ i\leq g}\lfloor\frac{k\cdot n_i}{|N|}\rfloor + \lfloor\frac{k\cdot n_i}{|N|}\rfloor$ candidates instead of $\lfloor\frac{k\cdot n_i}{|N|}\rfloor$, because then still no more than $k$ candidates will get any votes, so all candidates that do get votes will be elected. 
Right-to-left.  If $\sum_{i\leq g}\lfloor\frac{k\cdot n_i}{|N|}\rfloor = k$, then for all parties $P_i$, $\lfloor\frac{k\cdot n_i}{|N|}\rfloor=\frac{k\cdot n_i}{|N|}$. This means that all candidates in $X_i$ get exactly $\frac{n_i\cdot l}{\frac{k\cdot n_i}{|N|}}=\frac{|N|\cdot l}{k}$ votes. If any party $P_i$ spreads its votes over more than $\frac{k\cdot n_i}{|N|}$ candidates, their candidates get on average less votes, and therefore will not be elected in the place of candidates of other parties, that still get $\frac{|N|\cdot l}{k}$ votes (some candidates of $P_i$ can still be tied with candidates from other parties, but not strictly win from them). 
Hence, no best response exists. 
\end{proof}
The left-to-right direction of the proof provides a bound on the utility that  can be gained by deviating from a lower quota strategy. 
\begin{theorem}\label{prop:delta_NE_lower_quota}
Let an LV-game be given such that for all parties $P_i$, $l\leq \lfloor\frac{k\cdot n_i}{|N|}\rfloor \leq |P_i|$. A strategy profile where each party $P_i$ plays a strategy in $\mathcal{LQ}_i$ is an $\epsilon$-Nash equilibrium where $\epsilon = k-\sum_{i\leq g}\lfloor\frac{k\cdot n_i}{|N|}\rfloor$.
\end{theorem}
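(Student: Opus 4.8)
The plan is to prove the statement by its contrapositive, leaning directly on the left-to-right direction of Proposition~\ref{prop:NE_lower_quota}. Write $q_i = \lfloor \frac{k n_i}{|N|}\rfloor$ and $Q = \sum_{i\leq g} q_i$, so $\epsilon = k - Q \geq 0$ is an integer. By Corollary~\ref{cor:lq}, under the profile $S$ in which every party plays a strategy in $\mathcal{LQ}_i$ every committee in $f(S)$ satisfies lower quota, hence $U_i(S) \geq q_i$ for every party $P_i$. Combining this with the fact that the value $q_i+\epsilon$ is actually attained by some deviation (the one exhibited in the left-to-right part of Proposition~\ref{prop:NE_lower_quota}, which spreads $P_i$'s votes over $q_i + (k-Q)$ candidates so that only $k$ candidates receive votes), it suffices to show that for every party $P_i$ and every deviation $s_i'$ we have $U_i(S_{-i}, s_i') \leq q_i + \epsilon$; since all these quantities are integers, this is the negation of ``some deviation gains more than $\epsilon$''. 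We may assume $s_i'$ makes $P_i$'s voters vote only within $P_i$, as any vote cast outside $P_i$ only lowers $U_i$. Fix a committee $W$ in the outcome of $(S_{-i}, s_i')$ minimizing $|W \cap P_i|$ and suppose, for contradiction, that $|W \cap P_i| \geq q_i + \epsilon + 1$.

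The heart of the argument is to pin down the threshold score $\sigma := \min_{c \in W} s_{LV}(c)$ (the $k$-th largest LV-score). Summing scores over $W$ and using that the total number of cast votes is $|N|l$ gives $k\sigma \leq \sum_{c \in W} s_{LV}(c) \leq |N| l$, so $\sigma \leq \lfloor \frac{|N|l}{k}\rfloor$. Conversely, the $g-1$ parties other than $P_i$ all play $\mathcal{LQ}$, so together they place positive scores on exactly $\sum_{j \neq i} q_j = Q - q_i$ distinct candidates (the sets $X_j$ are pairwise disjoint, being subsets of disjoint parties), each of score at least $\lfloor \frac{|N|l}{k}\rfloor$. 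Since $|W \setminus P_i| = k - |W \cap P_i| \leq Q - q_i - 1$, at least one of those $Q-q_i$ candidates lies outside $W$; being excluded from the top $k$, its score is at most $\sigma$, so $\sigma \geq \lfloor \frac{|N|l}{k}\rfloor$, and therefore $\sigma = \lfloor \frac{|N|l}{k}\rfloor$. (The degenerate subcase $\sigma = 0$ — where $W$ already contains zero-score filler — is handled separately: then $P_i$ places positive scores on fewer than $k-(Q-q_i) = q_i+\epsilon$ candidates, so $W\cap P_i$ consists of those candidates plus whatever $P_i$-candidates the filler is forced to include; since parties are disjoint subsets of $C$ we have $\sum_j |P_j| \leq m$, and together with $q_j \le |P_j|$ this bounds the forced filler so that $|W\cap P_i| \leq q_i+\epsilon$, contradicting the hypothesis.)

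The final step is a vote count. The at least $q_i + \epsilon + 1$ candidates of $P_i$ lying in $W$ each have score at least $\sigma = \lfloor \frac{|N|l}{k}\rfloor$, and every such vote is cast by one of $P_i$'s $n_i$ voters, each casting $l$ votes; hence $n_i l \geq (q_i + \epsilon + 1)\,\lfloor \tfrac{|N|l}{k}\rfloor$. Combining this with the elementary identity $\epsilon = k - Q = \sum_{j}\left(\tfrac{k n_j}{|N|} - \lfloor \tfrac{k n_j}{|N|}\rfloor\right) \geq \tfrac{k n_i}{|N|} - q_i$, that is $q_i + \epsilon \geq \tfrac{k n_i}{|N|}$, and clearing denominators yields the desired contradiction. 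The step I expect to be the real obstacle is making this last count fully watertight across all parameter regimes: the floor in $\lfloor \tfrac{|N|l}{k}\rfloor$ leaves a small slack, and closing it in general requires also using that the excluded other-party candidate may in fact carry score strictly above $\lfloor\tfrac{|N|l}{k}\rfloor$ (forcing $\sigma$ up, hence an immediate clash with $\sigma \le \lfloor\tfrac{|N|l}{k}\rfloor$), together with the feasibility hypothesis $l \le q_i$ which forces every party to be sufficiently large; in the parameter range relevant to the motivating applications ($|N|$ far exceeding $k^2$) the contradiction is immediate.
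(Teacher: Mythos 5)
Your overall strategy is sound and is essentially a rigorous elaboration of the paper's own (much terser) argument: the paper simply asserts, pointing back to the proof of the Nash-equilibrium characterisation, that a deviating party can gain at most the $\epsilon = k-\sum_{i\leq g}\lfloor k n_i/|N|\rfloor$ unclaimed seats because spreading votes over more candidates lowers the per-candidate vote count. Your reduction to showing $U_i(S_{-i},s_i')\le q_i+\epsilon$ via an excluded other-party candidate and the threshold score $\sigma$ is the right skeleton, and is more careful than what the paper prints. However, there is a genuine gap exactly where you flag it, and the repair you propose does not work in general. The excluded candidate $c^*$ can carry score \emph{exactly} $\lfloor |N|l/k\rfloor$ (e.g.\ whenever $k n_j/|N|$ is an integer and $P_j$'s votes divide evenly), so you cannot force $\sigma$ strictly above $\lfloor |N|l/k\rfloor$ that way; and the non-strict count $(q_i+\epsilon+1)\lfloor |N|l/k\rfloor\le n_i l$ is compatible with $q_i+\epsilon+1>\tfrac{kn_i}{|N|}$ because of the floor slack. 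Retreating to the regime where $|N|$ far exceeds $k^2$ abandons the general statement.

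The missing ingredient is a tie-swap argument exploiting the minimality of $W$ --- the same ``$+1$'' device the paper uses in its proof of the lower-quota guarantee. Since $W$ minimizes $|W\cap P_i|$ over winning committees, no $c\in W\cap P_i$ can be tied with the excluded $c^*$: otherwise $(W\setminus\{c\})\cup\{c^*\}$ would also be winning with strictly fewer $P_i$-members. Hence every $c\in W\cap P_i$ has $s_{LV}(c)\ge s_{LV}(c^*)+1\ge\lfloor |N|l/k\rfloor+1$, and the count becomes $n_i l\ge(q_i+\epsilon+1)\bigl(\lfloor\tfrac{|N|l}{k}\rfloor+1\bigr)>\tfrac{kn_i}{|N|}\cdot\tfrac{|N|l}{k}=n_i l$, a contradiction with no slack. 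With this replacement your $\sigma=0$ side-case (whose ``forced filler'' bound is in any case not justified) also disappears: either all $Q-q_i$ candidates supported by the other parties lie in $W$, in which case $|W\cap P_i|\le k-(Q-q_i)=q_i+\epsilon$ immediately, or some $c^*$ is excluded and the strict-threshold count applies.
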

So, the closer the sum of the lower quotas is to the size of the committee, the more stable the lower quota strategies become in an LV-game. This, in turn, shows that LV could, in party-list profiles, provide lower-quota proportional representation almost stably in the presence of rational parties, even though it fails the lower quota axiom (Proposition \ref{prop:axioms}). 
Examples illustrating the results of this section are provided in Appendix \ref{app:additional}.


\section{Conclusions}

We showed that in applications where elections tend to be structured around parties, restricting approval ballots to a fixed size may be an effective 'quick-and-dirty' fix of AV if one cares about obtaining better representation (both in terms of diversity and proportionality) without completely changing the voting rule.

We conclude by mentioning a few directions for future work. 
First, a variant of LV allows voters to vote on \emph{at most} $l$ candidates rather than precisely $l$. Most of our positive results will still hold for such elections, but some of the negative results may be mitigated by allowing for shorter ballots since voters are not forced anymore to vote on candidates they do not approve. It would be good to extend our analysis also to this variant of LV. 
Second, while we worked with PAV-improvement, it would be interesting to know how LV performs with respect to other representation measures, e.g. the \emph{proportionality degree} as defined in \citet{Skowron2021}.
Third, we made the common assumption that voters' preferences are binary and voters get equal satisfaction from all their approved candidates. 
A natural extension of this paper would be to representation in LV with other approval-based satisfaction functions, following \citet{brill2023}.


\newpage

\appendix

\section{Laminar profiles}\label{app:laminar}

A generalisation of 
party-list profiles are \emph{laminar profiles}, introduced in \citet{Peters2020limwelf}, where parties can consist of smaller subparties.
Laminar elections can be defined recursively for LV elections as follows. An election $E = ( N, C, k, l, A, L)$ is laminar if either: 
(1) $E$ is unanimous and $|C|\geq k$; 
(2) There is a candidate $c\in C$ such that $c\in A_i$ for all $i \in N$, the election $E_{-c}=( N, C\backslash\{c\}, k-1, l', A_{-c}, L')$, where $E_{-c}$ is $E$ once we remove $c$, i.e., $ A_{-c} =  (A_1\backslash \{c\}, ..., A_n\backslash\{c\})$, is not unanimous 
and is laminar, $l'$ and $L'$ are any limit and ballot profile compatible with $E_{-c}$; 
or (3) There are two laminar elections $E_1 = ( N_1, C_1, k_1, l_1, A_1, L_1)$ and $E_2 = ( N_2, C_2, k_2, l_2, A_2, L_2)$ with $N=N_1\cup N_2$, $C = C_1\cup C_2$, $A= A_1\cup A_2$, $k=k_1+k_2$, and $C_1\cap C_2 = N_1 \cap N_2 =\emptyset$, such that  $|N_1|\cdot k_2 = |N_2|\cdot k_1$. 
We refer to \citet{Peters2020limwelf} for examples of laminar elections and intuitions motivating the definition.

\begin{definition}[Broadcasted laminar elections]
An election $E = ( N, C, k, l, A, L)$ is a \textit{broadcasted laminar election} if $A$ is a laminar profile 
and 
$E$ is consistent with a broadcasting order $\succ_C$  such that for any $c, c'\in C$: if $s_{AV}(c)>s_{AV}(c')$, then $c\succ_C c'$ (more popular candidates are prioritized)
.
\end{definition}
The extra condition on the broadcasting order is to ensure that in broadcasted laminar elections candidates with more approvals are voted before candidates with less approvals.
Table \ref{tab:example-broadcasted} gives an example of a broadcasted laminar profile.

\begin{table}[h]
    \caption{
    A laminar election ($l=2$) with a broadcasting order $c_1\succ_C c_2 \succ_C c_3 \succ_C c_4 \succ_C c_5 \succ_C c_6$ (left), and a non-broadcasted election on the same approval profile (right). Approvals in grey, votes indicated by x.}
    \label{tab:example-broadcasted}
    \small
    \renewcommand{\arraystretch}{0.7}
    \setlength{\tabcolsep}{3pt}
    \centering
\begin{tabular}{ cc }   
        \begin{tabular}{c|c|c|c|c|c|c|}
           \multicolumn{1}{c}{} & \multicolumn{1}{c}{$c_1$} & \multicolumn{1}{c}{$c_2$} & \multicolumn{1}{c}{$c_3$} & \multicolumn{1}{c}{$c_4$} & \multicolumn{1}{c}{$c_5$} & \multicolumn{1}{c}{$c_6$} \\
            \hhline{~------} 
            $v_1$ & \cellcolor[HTML]{C1C1C1} x & \cellcolor[HTML]{C1C1C1} x & \cellcolor[HTML]{C1C1C1}  &  &  & \\ \hhline{~------} 
            $v_2$ & \cellcolor[HTML]{C1C1C1} x & \cellcolor[HTML]{C1C1C1} x & \cellcolor[HTML]{C1C1C1}  &  &  & \\ \hhline{~------} 
            $v_3$ & \cellcolor[HTML]{C1C1C1} x &  & & \cellcolor[HTML]{C1C1C1} x & \cellcolor[HTML]{C1C1C1}  &\cellcolor[HTML]{C1C1C1} \\ \hhline{~------} 
        $v_4$ & \cellcolor[HTML]{C1C1C1} x & &  & \cellcolor[HTML]{C1C1C1} x & \cellcolor[HTML]{C1C1C1}  & \\ \hhline{~------} 
        \end{tabular} &  
        \begin{tabular}{c|c|c|c|c|c|c|}
           \multicolumn{1}{c}{} & \multicolumn{1}{c}{$c_1$} & \multicolumn{1}{c}{$c_2$} & \multicolumn{1}{c}{$c_3$} & \multicolumn{1}{c}{$c_4$} & \multicolumn{1}{c}{$c_5$} & \multicolumn{1}{c}{$c_6$} \\
                \hhline{~------}  
            $v_1$ & \cellcolor[HTML]{C1C1C1} x & \cellcolor[HTML]{C1C1C1}  & \cellcolor[HTML]{C1C1C1} x &  &  & \\ \hhline{~------}  
            $v_2$ & \cellcolor[HTML]{C1C1C1}  & \cellcolor[HTML]{C1C1C1} x & \cellcolor[HTML]{C1C1C1} x &  &  & \\ \hhline{~------}  
            $v_3$ & \cellcolor[HTML]{C1C1C1}  &  & & \cellcolor[HTML]{C1C1C1}  & \cellcolor[HTML]{C1C1C1}x  &\cellcolor[HTML]{C1C1C1}x \\ \hhline{~------}  
            $v_4$ & \cellcolor[HTML]{C1C1C1} x & &  & \cellcolor[HTML]{C1C1C1} x & \cellcolor[HTML]{C1C1C1}  & \\ \hhline{~------}  
        \end{tabular}   \\
\end{tabular}
\end{table}

In the main paper we found that 
in more aligned elections (of which broadcasted party-list elections are an extreme case), the CC-
improvement is 
higher. 
This raises the question of whether in slightly less aligned elections, such as laminar elections, this is still the case.

\begin{proposition}
\label{prop:CCgain-brlaminar}
    In any broadcasted laminar election $E = ( N, C, k, l, A, L)$ where AV and LV are resolute
    , if for all $i\in N$, $|A_i|\geq l$, 
    $\Imp_{CC}\geq 1.$      
\end{proposition}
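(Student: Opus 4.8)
The plan is to mimic the structure of the recursive definition of laminar elections and argue by induction on that structure, showing that at each step LV's winning committee (which is unique since LV is assumed resolute) represents at least as many voters as AV's. Since the CC-improvement is the ratio $s_{CC}(A, W_{LV})/s_{CC}(A, W_{AV})$ for resolute rules, establishing $s_{CC}(A,W_{LV}) \geq s_{CC}(A,W_{AV})$ immediately gives $\Imp_{CC} \geq 1$. The hypothesis $|A_i| \geq l$ for all voters matters because it guarantees that every ballot $L_i$ is a genuine subset of $A_i$, so every vote cast is an approval vote; combined with the broadcasting order (which prioritises more-approved candidates), this controls exactly which candidates LV picks.

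First I would handle the base case: if $E$ is unanimous with $|C| \geq k$, then every voter approves every winning candidate under either rule, so both committees represent all $n$ voters and the ratio is exactly $1$. For the inductive step with a common candidate $c$ (case (2) of the definition), I would argue that because $c$ has the highest AV-score (it is approved by everyone) and the broadcasting order prioritises high-approval candidates, $c$ is voted for by every voter whose ballot it can fit into; in fact since $|A_i| \geq l$ and $c \in A_i$ for all $i$, and $c \succ_C c'$ for all other $c'$, candidate $c$ is in every ballot $L_i$, hence $c$ is elected by LV. Likewise $c$ is elected by AV. Removing $c$ yields $E_{-c}$ with committee size $k-1$; I would check that $E_{-c}$ is again broadcasted laminar with $|A_i \setminus \{c\}| \geq l-1$ and that the induced ballot profile on $E_{-c}$ is consistent with the restriction of $\succ_C$ — here one may need to adjust $l$ down to $l' = l-1$ (or argue the statement is robust to the exact choice of compatible $l', L'$, as the definition allows any such choice). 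Then apply the induction hypothesis to $E_{-c}$: its LV committee represents at least as many voters as its AV committee, and adding back $c$ (approved by all) does not change which voters are represented, so the inequality persists for $E$.

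The composition case (3) is where I expect the main obstacle. Here $E$ is the disjoint union of two broadcasted laminar elections $E_1, E_2$ on disjoint candidate and voter sets, with $k = k_1 + k_2$ and the balance condition $|N_1| k_2 = |N_2| k_1$. The difficulty is that a globally optimal committee for $E$ need not split its seats as $(k_1, k_2)$ across the two sub-elections — AV might allocate seats disproportionately to whichever side has candidates with higher absolute approval counts. I would argue that, by the extra broadcasting-order condition that more popular candidates are voted first, LV's behaviour on the union still decomposes cleanly: LV picks candidates in decreasing LV-score order, and since within each sub-election votes are concentrated on the highest-priority candidates, the set LV selects is exactly the union of what it selects on $E_1$ (with $k_1$ seats) and on $E_2$ (with $k_2$ seats), up to ties which resolution handles. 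For AV I would bound $s_{CC}(A, W_{AV})$ from above: whatever split $(j, k-j)$ AV effectively uses, the number of represented voters is at most the CC-value obtainable on $E_1$ with $j$ seats plus that on $E_2$ with $k-j$ seats, and I would show this total is maximised (or at least not exceeded) at the balanced split, or directly that LV's decomposed committee already meets it. The balance condition $|N_1|k_2 = |N_2|k_1$ should be what makes the balanced split optimal. Then the induction hypotheses on $E_1$ and $E_2$ combine additively to give $s_{CC}(A, W_{LV}) = s_{CC}(A_1, W_{LV}^1) + s_{CC}(A_2, W_{LV}^2) \geq s_{CC}(A_1, W_{AV}^1) + s_{CC}(A_2, W_{AV}^2) \geq s_{CC}(A, W_{AV})$, closing the induction. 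I would also remark that the resoluteness assumption is used only to make $\Imp_{CC}$ well-defined as a single ratio; without it one would phrase the claim via $\min$ over $LV(E)$ and $\max$ over $AV(E)$ and the same argument goes through.
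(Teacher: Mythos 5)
Your induction founders exactly where you predicted: the composition case. Two of the steps you sketch there are not merely unproven but false as stated. First, LV's winning committee on the composition of $E_1$ and $E_2$ need not split its seats as $(k_1,k_2)$: LV ranks all candidates by a single global vote count, and nothing in the balance condition $|N_1|k_2=|N_2|k_1$ forces exactly $k_1$ of the top $k$ to lie in $C_1$. For instance, if $E_1$ is unanimous and $l=1$, all of $N_1$'s votes pile onto a single candidate, so only one candidate of $C_1$ has positive LV-score even when $k_1>1$, and the remaining seats are filled by tie-breaking from anywhere. Second, the closing inequality $s_{CC}(A_1,W^1_{AV})+s_{CC}(A_2,W^2_{AV})\geq s_{CC}(A,W_{AV})$ does not follow: if AV's global committee takes $j>k_1$ seats from $C_1$, its restriction to $C_1$ is an AV committee of size $j$ on $E_1$, which represents at least as many voters as AV's size-$k_1$ committee — the bound goes the wrong way, and your induction hypothesis (stated for $k_1$ and $k_2$ seats) does not apply to the split AV actually uses. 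Repairing this would need a separate lemma comparing CC-scores across seat allocations, which you have not supplied. (Your base case and common-candidate case are essentially fine, modulo the bookkeeping with $l'=l-1$.)

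The paper sidesteps the recursion entirely with a short direct argument. Suppose some voter $v$ is represented in $W_{AV}$ but not in $W_{LV}$, and let $c_1$ be the $\succ_C$-maximal candidate of $A_v$. Because the broadcasting order prioritises candidates with higher AV-score and the profile is laminar, every voter approving $c_1$ votes for it, so $s_{LV}(c_1)=s_{AV}(c_1)$; moreover $s_{AV}(c_1)\geq s_{AV}(c')$ for the witness $c'\in A_v\cap W_{AV}$, so $W_{AV}$ contains some $c''\in A_v$ with $s_{AV}(c'')=s_{LV}(c_1)$. Since $|A_i|\geq l$ makes every ballot a subset of the approval set, $s_{LV}(c)\leq s_{AV}(c)$ for all $c$, hence the threshold for entering $W_{AV}$ is at least the threshold for entering $W_{LV}$; if $s_{LV}(c_1)$ is too low for $c_1$ to enter $W_{LV}$, the equal value $s_{AV}(c'')$ is too low for $c''$ to enter $W_{AV}$ — a contradiction. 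No induction, and hence no composition case, is needed; I would encourage you to abandon the structural recursion in favour of this kind of score-threshold comparison.
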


\begin{proof}
    Suppose towards a contradiction that there is such broadcased laminar election $E$ where $W_{AV}$ is the winning committee of AV and $W_{LV}$ 
    that of LV, in which  $s_{CC}(A, W_{AV})>s_{CC}(A,W_{LV})$. 
    Then there must be a voter $v$ who has at least one of her approved candidates in $W_{AV}$, but none in $W_{LV}$.  
    Therefore, even the first candidate (let's call it $c_1$) on which $v$ will vote (the candidate in $A_v$ 
    ranked highest in $\succ_C$) is not included in $W_{LV}$.
    Since $E$ is laminar, and $\succ_C$ is the same for all voters, and $l\geq 1$, all voters that approve $c_1$ must vote on it, so $s_{LV}(c_1) = s_{AV}(c_1)$. We also know that there is at least one $c'\in A_v\cap W_{AV}$. Since $c_1 \succ_C c'$, we know that $s_{AV}(c_1)\geq s_{AV}(c')$. If $s_{AV}(c_1) > s_{AV}(c')$ we can assume $c_1 \in W_{AV}$, so both then and when $s_{AV}(c_1) = s_{AV}(c')$, there exists a candidate  $c''\in A_v\cap W_{AV}$ with $s_{AV}(c'') = s_{AV}(c_1) = s_{LV}(c_1)$. However, since we have that  for all $i\in N$, $|A_i|\geq l$, we know that for all candidates $c\in C, s_{AV}(c)\geq s_{LV}(c)$. Therefore, the threshold of approvals to be elected in $W_{AV}$ must be at least as high as the threshold of votes to be elected in $W_{LV}$, and hence if the value of $s_{LV}(c_1)$ is not enough to be elected by LV, the value of $s_{AV}(c'') = s_{LV}(c_1)$ can also not be enough to be elected by AV. 
    Hence, $c''$ cannot be elected by AV: a contradiction. 
\end{proof}

The assumption that AV and LV are resolute in $E$ and the assumption of a broadcasting order are necessary for this result, see Examples \ref{ex:nonresolute} and \ref{ex:laminar_wrong_order}. 

\begin{example}\label{ex:nonresolute}
    The condition that AV and LV are resolute in 
    the election in the proof of 
    Proposition \ref{prop:CCgain-brlaminar} 
    is necessary to prevent the possibility where $s_{LV}(c_1)$ was  enough to be elected by LV, but just because of tie-breaking rules was not elected. We give an example of such election in Table \ref{tab:nonresolute}: AV has a maximal CC-score of 2 with for example $\{c_3, c_4\}$, while LV has a minimal CC-score of 1 with for example $\{c_4, c_5\}$. 
    Hence, even though this is a broadcasted laminar election where for all $i\in N$, $|A_i|\geq l$, $Improvement_{CC}=\frac{1}{2}\not \geq 1.$ 
\begin{table}[t]
    \caption{ Example with AV and LV non-resolute and 
    $Improvement{CC}=\frac{1}{2}$. Approvals in grey, limited votes indicated by x, $k=l=2$. The bottom rows show the winning committees according to LV (blue) and AV (red), the columns on the right indicate whether voters are represented by the winning committees.}
    \label{tab:nonresolute}
    \centering
    \small
    \renewcommand{\arraystretch}{0.8}
    \setlength{\tabcolsep}{3pt}
        \begin{tabular}{c|c|c|c|c|c|c||c|c|}
        \multicolumn{1}{c}{} & \multicolumn{1}{c}{} &\multicolumn{1}{c}{} &\multicolumn{1}{c}{} &\multicolumn{1}{c}{} &\multicolumn{1}{c}{} &\multicolumn{1}{c}{} &\multicolumn{2}{c}{repr.} \\
           \multicolumn{1}{c}{} & \multicolumn{1}{c}{$c_1$} & \multicolumn{1}{c}{$c_2$} & \multicolumn{1}{c}{$c_3$} & \multicolumn{1}{c}{$c_4$} & \multicolumn{1}{c}{$c_5$} & \multicolumn{1}{c}{$c_6$} & \multicolumn{1}{c}{LV} & \multicolumn{1}{c}{AV} \\
            \hhline{~--------}
            $v_1$ & \cellcolor[HTML]{C1C1C1} x & \cellcolor[HTML]{C1C1C1} x & \cellcolor[HTML]{C1C1C1}  &  &  & & & \cellcolor[HTML]{d64238} \\ \hhline{~--------}
            $v_2$ &  &  &   &  \cellcolor[HTML]{C1C1C1} x& \cellcolor[HTML]{C1C1C1} x & \cellcolor[HTML]{C1C1C1} & \cellcolor[HTML]{1d53ab}&\cellcolor[HTML]{d64238}\\ \hline 
        \multicolumn{1}{c}{} & \multicolumn{6}{c}{winning committees}& \multicolumn{2}{c}{CC-score} \\   
        \hline
            LV &  &  &   &  \cellcolor[HTML]{1d53ab}& \cellcolor[HTML]{1d53ab}& & 1 &\\ \hhline{~--------}
            AV &  &  &  \cellcolor[HTML]{d64238} &  \cellcolor[HTML]{d64238}& & & & 2\\ \hhline{~--------}
        \end{tabular}
\end{table}
\end{example}

\begin{example}\label{ex:laminar_wrong_order}
In an election with a laminar profile that does have a broadcasting order but not one that starts with the most popular candidates, the CC-improvement can be below 1, as opposed to 
Proposition \ref{prop:CCgain-brlaminar}.
Take
the laminar profile in 
Table \ref{tab:smaller}. 
With LV where candidates are voted according to lexicographic order, $v_3, v_4,$ and $v_5$ will vote for $c_3, c_4,$ and $c_5$ respectively instead of combining their forces on $c_6$ and $c_7$. This makes any outcome of LV leaving one of $v_3, v_4,$ and $v_5$ without representative, while any outcome of AV would elect at least one candidate from the approval set of every voter.

\begin{table}[t]
    \caption{Laminar election with (lexicographic) broadcasting order that does not prioritize the most popular candidates, $k=3$ and $l=1$. Approvals in grey, votes indicated by x, the bottom rows show possible winning committees according to LV (blue) and AV (red) and the columns on the right indicate whether a voter is represented by the committees. 
    }
    \label{tab:smaller}
    \small
    \centering
    \renewcommand{\arraystretch}{0.8}
    \setlength{\tabcolsep}{3pt}
        \begin{tabular}{c|c|c|c|c|c|c|c||c|c|}
            \multicolumn{1}{c}{}& \multicolumn{1}{c}{}& \multicolumn{1}{c}{}& \multicolumn{1}{c}{}& \multicolumn{1}{c}{}& \multicolumn{1}{c}{}& \multicolumn{1}{c}{}& \multicolumn{1}{c}{}&\multicolumn{2}{c}{repr.} \\
                    \multicolumn{1}{c}{}& \multicolumn{1}{c}{ $c_1$} & \multicolumn{1}{c}{$c_2$} & \multicolumn{1}{c}{$c_3$} & \multicolumn{1}{c}{$c_4$} & \multicolumn{1}{c}{$c_5$} & \multicolumn{1}{c}{$c_6$} & \multicolumn{1}{c}{$c_7$}& \multicolumn{1}{c}{LV} & \multicolumn{1}{c}{AV}  \\
                    \hhline{~*{9}{-}}
            $v_1$ & \cellcolor[HTML]{C1C1C1} x & \cellcolor[HTML]{C1C1C1}  &  &  & &  &  &\cellcolor[HTML]{1d53ab} &  \cellcolor[HTML]{d64238}\\ \hhline{~*{9}{-}}
            $v_2$ & \cellcolor[HTML]{C1C1C1} x & \cellcolor[HTML]{C1C1C1}  &  &  & &  &  &\cellcolor[HTML]{1d53ab} &  \cellcolor[HTML]{d64238}\\ \hhline{~*{9}{-}}
            $v_3$ &  &  & \cellcolor[HTML]{C1C1C1} x & &  &\cellcolor[HTML]{C1C1C1} &\cellcolor[HTML]{C1C1C1} & \cellcolor[HTML]{1d53ab} &  \cellcolor[HTML]{d64238}\\ \hhline{~*{9}{-}}
            $v_4$ &  &  & & \cellcolor[HTML]{C1C1C1} x  &  &\cellcolor[HTML]{C1C1C1} &\cellcolor[HTML]{C1C1C1} & \cellcolor[HTML]{1d53ab} &  \cellcolor[HTML]{d64238}\\ \hhline{~*{9}{-}}
            $v_5$ &  &  & & &  \cellcolor[HTML]{C1C1C1} x &\cellcolor[HTML]{C1C1C1} &\cellcolor[HTML]{C1C1C1}& & \cellcolor[HTML]{d64238}\\  \hline 
            \multicolumn{1}{c}{} & \multicolumn{7}{c}{winning committees}& \multicolumn{2}{c}{CC-score} \\ \hline
            LV & \cellcolor[HTML]{1d53ab} &  & \cellcolor[HTML]{1d53ab}  &\cellcolor[HTML]{1d53ab} &  & & & 4 & \\ \hhline{~*{9}{-}}
            AV &\cellcolor[HTML]{d64238}  & &  & & &\cellcolor[HTML]{d64238}   & \cellcolor[HTML]{d64238} & & 5\\ \hhline{~*{9}{-}}
        \end{tabular}
\end{table} 
\end{example}

The CC-guarantee of LV in broadcasted laminar elections is the same as that in broadcasted party-list elections:
\begin{proposition}\label{prop:CC-guarantee_BL}
    Let $BL$ denote the set of all possible broadcasted laminar elections, then 
    the CC-guarantee of LV restricted to BL is
    $\kappa_{cc}(k)(BL) = \frac{1}{k}$ in general, and  $\kappa_{cc}(k)(BL) = \frac{\lceil\frac{k}{l}\rceil}{k}$ for all $E\in BL$ with $l_E = l$.
\end{proposition}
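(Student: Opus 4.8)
The plan is to follow the same two-part strategy used in the proof of Theorem \ref{thm:CC-guarantee-BP}, adapting it to the laminar setting and then noting that broadcasted party-list elections are a special case of broadcasted laminar elections. First I would establish the \emph{lower bound}: for any broadcasted laminar election $E = (N, C, k, l, A, L)$ I want to show that $\frac{\min_{W_{LV}\in LV(E)}s_{CC}(A,W_{LV})}{\max_{W\in S_k(C)}s_{CC}(A,W)} \geq \frac{\lceil k/l\rceil}{k}$ (and $\geq \frac{1}{k}$ in general, which follows by taking $l=k$). The key observation is that the maximal CC-score over all size-$k$ committees is at most $n = |N|$ (every voter can be represented at most once), and more precisely is at most the number of distinct approval sets that can be simultaneously hit, which is bounded by the number of ``leaves'' in the laminar decomposition; whereas the broadcasting order, prioritising more popular candidates, guarantees LV represents the voters of the $\lceil k/l\rceil$ most popular groups at the top level of the decomposition. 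The cleanest route is probably induction on the recursive structure of laminar elections (cases (1)--(3) in the definition in Appendix \ref{app:laminar}), tracking at each stage how many of the $k$ seats LV fills and which voter blocks become represented.

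Second I would establish the \emph{upper bound}, i.e.\ exhibit a family of broadcasted laminar elections whose CC-improvement-type ratio approaches $\frac{\lceil k/l\rceil}{k}$ (respectively $\frac{1}{k}$ when $l$ is not fixed). But here I can simply reuse the witnesses already constructed in the proof of Theorem \ref{thm:CC-guarantee-BP}: a broadcasted party-list election with $k$ parties, the first $\lceil k/l\rceil$ of size $x+1$ voters and candidate-set large enough, the rest of size $x$ voters, all with enough candidates, letting $x\to\infty$. Since a broadcasted party-list election is in particular a broadcasted laminar election (a party-list profile is laminar via repeated use of case (3), and the required monotonicity of $\succ_C$ is automatic when all candidates in a party are tied), these examples lie in $BL$ and give $\kappa_{cc}(k)(BL) \leq \frac{\lceil k/l\rceil}{k}$ (and $\leq \frac{1}{k}$ without the constraint on $l$). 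Combining the two bounds yields equality.

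The main obstacle I anticipate is the lower bound, specifically pinning down, in an arbitrary broadcasted laminar election, exactly how many voter blocks LV is guaranteed to represent with its $k$ seats, and matching that against the true maximum CC-score. The subtlety is that a laminar profile can have a very uneven recursive structure (deep nesting inside one subparty, flat structure elsewhere), and the broadcasting order's ``more popular first'' condition interacts with this in a way that is not as transparent as in the flat party-list case. I would handle this by showing, via the induction, that the committee LV returns contains at least one candidate approved by each of the voter blocks corresponding to the first $\lceil k/l\rceil$ ``top-level shares'' of the laminar tree, and that $\max_{W}s_{CC}(A,W)$ divided by that quantity is maximised (over the domain) exactly in the degenerate party-list configuration — which is why the guarantee for $BL$ collapses to the same value as for $BP$. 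Everything else is bookkeeping that parallels the party-list proof.
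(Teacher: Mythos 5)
Your overall architecture matches the paper's: one inequality comes for free from $BP\subseteq BL$ (so $\kappa_{cc}(k)(BL)\leq\kappa_{cc}(k)(BP)$, with the witnesses of Theorem~\ref{thm:CC-guarantee-BP} reused verbatim), and the whole content of the proposition is the reverse inequality, i.e.\ a lower bound on the ratio that holds across all of $BL$. Where you diverge is in how you propose to get that lower bound. The paper does not induct on the laminar recursion at all: it works with the flat ``superparty'' partition of the voters (the coarsest partition into blocks each sharing a unanimously approved candidate) and argues by a flattening/splitting step that whenever LV elects a non-unanimous candidate of a superparty, the corresponding party-list profile (obtained by splitting that superparty) would let LV elect at least as many candidates from it while representing at most as many voters; hence the worst ratio over $BL$ is already attained within $BP$, and the value from Theorem~\ref{thm:CC-guarantee-BP} transfers. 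Your route --- a direct structural induction proving the pointwise bound $\min_{W\in LV(E)}s_{CC}(A,W)/\max_{W}s_{CC}(A,W)\geq\lceil k/l\rceil/k$ --- is a legitimate alternative in principle, but you have left exactly the hard step unproven, and it faces a concrete obstacle you do not address: in the recursive definition of laminar elections, case (2) peels off a unanimous candidate and allows the sub-election $E_{-c}$ to carry \emph{any} compatible limit $l'$ and ballot profile, and case (3) splits into sub-elections with their own $k_1,k_2,l_1,l_2$; so the quantity $\lceil k/l\rceil$ is not preserved down the recursion and an inductive hypothesis parameterized by $l$ does not transfer to the subinstances. You would need to reformulate the invariant in terms of the superparty structure of the whole election (which is essentially what the paper does), at which point the induction dissolves into the paper's flattening argument. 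Two smaller imprecisions: the maximal CC-score is not ``the number of leaves'' but $\sum_{i=1}^{\min(k,g)}n_i$ counted in voters over the $g$ superparties; and the claim that LV's committee hits each of the top $\lceil k/l\rceil$ ``top-level shares'' needs an argument that the unanimous candidate of each such block actually survives the top-$k$ cut, which is where the broadcasting-order condition ($s_{AV}(c)>s_{AV}(c')$ implies $c\succ_C c'$) must be invoked explicitly. As it stands, the proposal is a correct plan for the easy direction and an uncompleted sketch for the direction that carries all the weight.
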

\begin{proof}
   Observe first that since broadcasted party-list elections are a subset of broadcasted laminar elections, the CC-guarantee of any voting rule can never be higher in broadcasted laminar elections than in broadcasted party-list elections.
   We show that for LV, the CC-guarantee in broadcasted laminar elections is also not lower than that in broadcasted party-list elections.
   Take any broadcasted laminar election $E$ and let $g$ be the number of superparties in $E$, i.e. the lowest number $g$ such that there exists a partition of $V$ into $g$ disjoint sets of voters where every set has at least one candidate that all the voters in the set approve. 
   Just as in party-list profiles, we order the superparties by number of voters: $P_1$ is the most popular, $P_2$ the second-most, etc. The maximum CC-score of any committee in $E$ is $s_{CC}(\alpha\text{-CC}) = \sum_{i=1}^{\min(k, g)}|P_i|$, which we can obtain by taking one (in that party unanimously approved) candidate from every superparty, until all voters are satisfied or the committee is filled. 
   The minimal CC-score of the outcome of LV over all broadcasted laminar profiles is the score of a party-list profile. To make the CC-score minimal, we want as few voters to be represented as possible. For the largest superparties (the ones that contain the most popular candidates), all voters will be satisfied anyway, since the most popular candidate is chosen first by LV. Hence, to minimise the CC-score, we need as many candidates as possible from the largest parties, since then we need less candidates that might represent other voters. Then, if for any superparty, a non-unanimous candidate of that party is elected by LV, that means that already the subparty that approves it has enough votes to elect a candidate. And that implies that if the unanimous candidates of the superparty were split into two candidates, one with the subparty as approvers and one with the rest of the voters of the superparty, at least as many candidates from the superparty could have been chosen, with at most as many voters being represented. But that implies that if the profile were party-list, at least as many candidates from the superparty could have been chosen, with at most as many voters being represented, so the CC-score in any broadcasted laminar profile is at least that of a party-list profile. Hence, the CC-guarantee of broadcasted laminar elections cannot be lower than that of broadcasted party-list elections.
\end{proof}


\section{Additional Examples
}\label{app:additional}

\subsection{Pareto inefficiency of LV and broadcasting order}

\begin{example}\label{ex:pareto}
     LV does not satisfy Pareto efficiency.
    \begin{table}[t]
        \caption{$l=3$, $k=5$ (the committee size). Approvals in grey, limited votes indicated by x. Some voters approve less that $l$ candidates, then we let them fill their ballot lexicographically. 
        In the bottom row, the winning committee according to LV is indicated in blue.
        }
        \label{tab:pareto-counter}
        \renewcommand{\arraystretch}{0.8}
        \setlength{\tabcolsep}{3pt}
        \small
            \centering
            \begin{tabular}{c|c|c|c|c|c|c|c|c|c|c|}
                  \multicolumn{1}{c}{}& \multicolumn{1}{c}{
                  $c_1$} & \multicolumn{1}{c}{$c_2$} & \multicolumn{1}{c}{$c_3$} & \multicolumn{1}{c}{$c_4$} & \multicolumn{1}{c}{$c_5$} & \multicolumn{1}{c}{$c_6$} & \multicolumn{1}{c}{$c_7$} & \multicolumn{1}{c}{$c_8$} & \multicolumn{1}{c}{$c_9$} & \multicolumn{1}{c}{$c_{10}$}  \\
                        \hhline{~*{10}{-}}
                $v_1$ & \cellcolor[HTML]{C1C1C1} x & \cellcolor[HTML]{C1C1C1} x &  &  &  &\cellcolor[HTML]{C1C1C1} x &\cellcolor[HTML]{C1C1C1}  & \cellcolor[HTML]{C1C1C1}  & &\\ \cline{2-11}
                $v_2$ & x &  &   & \cellcolor[HTML]{C1C1C1} x &  &  &\cellcolor[HTML]{C1C1C1} x & & & \\ 
                \hhline{~*{10}{-}}
                $v_3$ &  x &  x & &  & & & & \cellcolor[HTML]{C1C1C1} x &  & \\ 
                \hhline{~*{10}{-}}
                $v_4$ & &  &\cellcolor[HTML]{C1C1C1} x & \cellcolor[HTML]{C1C1C1} x & \cellcolor[HTML]{C1C1C1} x & & &\cellcolor[HTML]{C1C1C1} & \cellcolor[HTML]{C1C1C1} & \cellcolor[HTML]{C1C1C1}  \\ 
                \hhline{~*{10}{-}}
               $v_5$ & &  &\cellcolor[HTML]{C1C1C1} x & \cellcolor[HTML]{C1C1C1} x & \cellcolor[HTML]{C1C1C1} x & & &\cellcolor[HTML]{C1C1C1} & \cellcolor[HTML]{C1C1C1} & \cellcolor[HTML]{C1C1C1}  \\ \hline \hline
               LV & \cellcolor[HTML]{1d53ab}  & \cellcolor[HTML]{1d53ab}   &\cellcolor[HTML]{1d53ab}  & \cellcolor[HTML]{1d53ab}  & \cellcolor[HTML]{1d53ab}  & & & & &  \\ 
               \hhline{~*{10}{-}}

            \end{tabular}
    \end{table}
    In the profile in Table \ref{tab:pareto-counter}, the committee $W=\{c_1, c_2, c_3, c_4, c_5\}$ wins according to LV: its candidates have the highest number of votes. However, the committee $W'= \{c_6, c_7, c_8, c_9, c_{10}\}$ dominates it: no voter has less approved candidates in $W'$ than in $W$, and $v_1$ and  $v_3$ have more approved candidates in $W'$. 
    Note that approval voting does indeed satisfy Pareto efficiency for dichotomous preferences \citep{Lackner2023}.
\end{example}

\begin{example}\label{ex:non-strategic}
Broadcasting unique orders to voters might not be the optimal choice for parties.
   Take an election with $k=6$, $l=2$ with a party-list profile $A$ such that there are two parties, $a$ and $b$ that both have 6 voters. See Table \ref{tab:example_non_strategic}. Party $a$ broadcasts to all its voters to vote on the same two candidates $a_1$ and $a_2$, party $b$ is more coordinated and asks two voters to vote for $b_1$ and $b_2$, two voters to vote for $b_3$ and $b_4$, and two voters to vote for $b_5$ and $b_6$. In this way $a$ gets only 2 seats in the winning committee, while $b$ gets 4 seats, so clearly it is not the most strategic choice to ask all voters to vote for the same candidates.
   \begin{table}[t]
   \caption{$l=2$, $k=6$. Approvals in grey, limited votes indicated by x. A winning committee of LV is presented at the bottom row.}
    \label{tab:example_non_strategic}
    \small
    \renewcommand{\arraystretch}{0.8}
    \setlength{\tabcolsep}{3pt}
        \centering
        \begin{tabular}{c|c|c|c|c|c|c|c|c|c|c|}
                    \multicolumn{1}{c}{}& \multicolumn{1}{c}{$a_1$} & \multicolumn{1}{c}{$a_2$} & \multicolumn{1}{c}{$a_{...}$}& \multicolumn{1}{c}{$b_1$} & \multicolumn{1}{c}{$b_2$} & \multicolumn{1}{c}{$b_3$} & \multicolumn{1}{c}{$b_4$} & \multicolumn{1}{c}{$b_5$} & \multicolumn{1}{c}{$b_6$} & \multicolumn{1}{c}{$b_{...}$} \\
                    \hhline{~*{10}{-}}
            $v_1$ &\cellcolor[HTML]{C1C1C1} x & \cellcolor[HTML]{C1C1C1} x & \cellcolor[HTML]{C1C1C1} & & & & & & &\\ 
            \hhline{~*{10}{-}}
            $v_2$ &\cellcolor[HTML]{C1C1C1} x & \cellcolor[HTML]{C1C1C1} x & \cellcolor[HTML]{C1C1C1} & & & & & & &\\ \hhline{~*{10}{-}}
            $v_3$ &\cellcolor[HTML]{C1C1C1} x & \cellcolor[HTML]{C1C1C1} x & \cellcolor[HTML]{C1C1C1} & & & & & & &\\ \hhline{~*{10}{-}}
            $v_4$ &\cellcolor[HTML]{C1C1C1} x & \cellcolor[HTML]{C1C1C1} x & \cellcolor[HTML]{C1C1C1} & & & & & & &\\ \hhline{~*{10}{-}}
            $v_5$ &\cellcolor[HTML]{C1C1C1} x & \cellcolor[HTML]{C1C1C1} x & \cellcolor[HTML]{C1C1C1} & & & & & & &\\ \hhline{~*{10}{-}}
            $v_6$ &\cellcolor[HTML]{C1C1C1} x & \cellcolor[HTML]{C1C1C1} x & \cellcolor[HTML]{C1C1C1} & & & & & & &\\ \hhline{~*{10}{-}}
            $v_7$ & &  & & \cellcolor[HTML]{C1C1C1}x & \cellcolor[HTML]{C1C1C1}x & \cellcolor[HTML]{C1C1C1} &\cellcolor[HTML]{C1C1C1}  & \cellcolor[HTML]{C1C1C1}   & \cellcolor[HTML]{C1C1C1} & \cellcolor[HTML]{C1C1C1}  \\  \hhline{~*{10}{-}}
            $v_8$ & & & & \cellcolor[HTML]{C1C1C1} x & \cellcolor[HTML]{C1C1C1}x & \cellcolor[HTML]{C1C1C1} &\cellcolor[HTML]{C1C1C1}  & \cellcolor[HTML]{C1C1C1}   & \cellcolor[HTML]{C1C1C1} & \cellcolor[HTML]{C1C1C1} \\  \hhline{~*{10}{-}}
            $v_9$ & & & & \cellcolor[HTML]{C1C1C1}& \cellcolor[HTML]{C1C1C1}& \cellcolor[HTML]{C1C1C1}x &\cellcolor[HTML]{C1C1C1} x & \cellcolor[HTML]{C1C1C1}   & \cellcolor[HTML]{C1C1C1} & \cellcolor[HTML]{C1C1C1} \\  \hhline{~*{10}{-}}
            $v_{10}$ & & & & \cellcolor[HTML]{C1C1C1}& \cellcolor[HTML]{C1C1C1}& \cellcolor[HTML]{C1C1C1}x &\cellcolor[HTML]{C1C1C1} x & \cellcolor[HTML]{C1C1C1} & \cellcolor[HTML]{C1C1C1}& \cellcolor[HTML]{C1C1C1} \\  \hhline{~*{10}{-}}
            $v_{11}$ & &  & & \cellcolor[HTML]{C1C1C1}& \cellcolor[HTML]{C1C1C1}& \cellcolor[HTML]{C1C1C1} &\cellcolor[HTML]{C1C1C1}  & \cellcolor[HTML]{C1C1C1} x  & \cellcolor[HTML]{C1C1C1} x & \cellcolor[HTML]{C1C1C1}\\  \hhline{~*{10}{-}}
            $v_{12}$ & &  & & \cellcolor[HTML]{C1C1C1}& \cellcolor[HTML]{C1C1C1}& \cellcolor[HTML]{C1C1C1} &\cellcolor[HTML]{C1C1C1}  & \cellcolor[HTML]{C1C1C1} x  & \cellcolor[HTML]{C1C1C1} x & \cellcolor[HTML]{C1C1C1}\\  \hline \hline
            LV & \cellcolor[HTML]{1d53ab} &  \cellcolor[HTML]{1d53ab} &  & \cellcolor[HTML]{1d53ab}  &\cellcolor[HTML]{1d53ab} &  \cellcolor[HTML]{1d53ab} & \cellcolor[HTML]{1d53ab} & & &\\ \hhline{~*{10}{-}}
        \end{tabular}
    \end{table} 
\end{example}

\subsection{CC- and PAV-improvements}

\begin{example}\label{ex:neg-CCgain}
The CC-improvement of LV can be below 1 in a party-list profile without a broadcasting order.
Consider an election with $k=4, l=2$ and a party-list profile where the largest party has 5 voters and the second party has 4 voters. If the 5 voters in $P_1$ all spread their votes and all vote for two different candidates, while the 4 voters in $P_2$ pair up and two of them vote for candidates $a$ and $b$, while two others vote for $c$ and $d$, then $a,b,c$ and $d$ will get 2 votes each, while all candidates of party $P_1$ only get 1 vote. Hence, the winning committee of LV $\{a,b,c,d\}$ has a CC-score of $|P_2|=4$, which is smaller than $|P_1|=5$, which is the CC-score of the winning committee of AV: 
the CC-improvement is $\frac{4}{5}$.
\end{example}

\begin{example}[PAV-improvement in general elections]\label{ex:PAV_general}
We start by noting that the PAV-
improvement is not necessarily greater than 1.
Consider the situation where the limited votes are such that less than $k$ candidates get any vote from the limited ballots. Then the rest of $k$ has to be filled up with a tie-breaking rule, and it is therefore easy to construct an election where the PAV-score of AV is higher than that of LV. 
Consider for instance the very simple case where $l=3, k=4$, we have 6 candidates $c_1, ..., c_6$ and only one voter $i$, with $A_i = \{c_1, c_2, c_3, c_5, c_6\}$. Suppose the voter submits the first three of his approved candidates as his limited ballot, and that we use lexicographical tie-breaking for the remaining candidates. Then $\{c_1, c_2, c_3, c_4\}$ is elected by LV with a PAV-score of $1+\frac{1}{2}+\frac{1}{3}$, while any committee that AV outputs consists only of candidates that $i$ likes,  with PAV-score of $1+\frac{1}{2}+\frac{1}{3} + \frac{1}{4}$.

Let us turn to the slightly more interesting cases where no such extreme tie-breaking is needed. 
Assume that there are at least $k$ candidates that get at least one vote. 
Despite this constraint, it is still possible for the PAV-
improvement to be below 1.
Consider for example the profile in Table \ref{tab:AV>LV}, where $l=3$, $k=4$. 
The outcome of AV is $W_{AV} = \{c_2, c_3, c_4, c_5\}$ with $s_{PAV}(A,W_{AV})=2(1+\frac{1}{2}+\frac{1}{3}+\frac{1}{4}) + 2(1+\frac{1}{2}+\frac{1}{3})\approx 7.83$, while the outcome of LV is $W_{LV}=\{c_1, c_2, c_3, c_4\}$, with $s_{PAV}(A,W_{LV}) = 2(1+\frac{1}{2}+\frac{1}{3}+\frac{1}{4}) + (1+\frac{1}{2}+\frac{1}{3}) + (1+\frac{1}{2})=7.5$. Hence, $s_{PAV}(A,W_{AV})>s_{PAV}(A,W_{LV})$. 

Importantly, the PAV-
improvement may also be above 1.
Table \ref{tab:LV>AV} provides an example where the approval ballots are more aligned than in the previous example (note that $|A_{v_4}|<l$, so $v_4$ has to vote for one candidate she does not approve of). Here, the outcome of AV is $W_{AV} = \{c_1, c_2, c_3, c_4\}$ with $s_{PAV}(A,W_{AV})=3(1+\frac{1}{2}+\frac{1}{3}+\frac{1}{4}) +0 = 6.25$, while the outcome of LV is $W_{LV}=\{ c_3, c_4, c_5, c_6\}$, with $s_{PAV}(A,W_{LV}) = 2(1+\frac{1}{2}+\frac{1}{3}) + 2(1+\frac{1}{2}) \approx 6.67$.  Hence, $s_{PAV}(A,W_{LV})>s_{PAV}(A,W_{AV})$. Note that if we do not consider broadcasted elections, there can be a lot of variation in the outcome of LV depending on which of their approved candidates the voters choose to vote on. In this example however, the PAV-score of an LV-outcome can never be lower than that of the AV-outcome, since that has the lowest possible PAV-score for a 4-candidate committee.
\end{example}
\begin{table*}[t]
    \caption{Two example elections with $l=3$, $k=4$. Approvals in grey, votes indicated by x, the bottom rows show the winning committees according to LV (blue) and AV (red), the columns on the right indicate the number of approved candidates (representatives) of a voter in the winning committees.
    \vspace{15pt}
    }
    \small
    \setlength{\tabcolsep}{3pt}
    \renewcommand{\arraystretch}{0.8}
    \begin{subtable}[h]{0.5\textwidth}
        \caption{
        $s_{PAV}(A,W_{AV})\approx 7.83$
        $>s_{PAV}(A,W_{LV})= 7.5$.}
        \label{tab:AV>LV}
        \centering
        \captionsetup{width=.9\linewidth}
        \begin{tabular}{c|c|c|c|c|c|c||c|c|}
        \multicolumn{1}{c}{} & \multicolumn{1}{c}{} & \multicolumn{1}{c}{} & \multicolumn{1}{c}{} & \multicolumn{1}{c}{} & \multicolumn{1}{c}{} & \multicolumn{1}{c}{} & \multicolumn{2}{c}{repr.} \\
                   \multicolumn{1}{c}{} & \multicolumn{1}{c}{$c_1$} & \multicolumn{1}{c}{$c_2$} & \multicolumn{1}{c}{$c_3$} & \multicolumn{1}{c}{$c_4$} & \multicolumn{1}{c}{$c_5$} & \multicolumn{1}{c}{$c_6$} & \multicolumn{1}{c}{LV} & \multicolumn{1}{c}{AV} \\
                    \hhline{~*{8}{-}} 
            $v_1$ & \cellcolor[HTML]{C1C1C1} x & \cellcolor[HTML]{C1C1C1} x & \cellcolor[HTML]{C1C1C1} x & \cellcolor[HTML]{C1C1C1} & \cellcolor[HTML]{C1C1C1} & & 4 & 4\\ \hhline{~*{8}{-}} 
            $v_2$ & & \cellcolor[HTML]{C1C1C1} x & \cellcolor[HTML]{C1C1C1} x & \cellcolor[HTML]{C1C1C1} x & \cellcolor[HTML]{C1C1C1} & & 3 & 4 \\ \hhline{~*{8}{-}} 
            $v_3$ & &  & \cellcolor[HTML]{C1C1C1} x & \cellcolor[HTML]{C1C1C1} x & \cellcolor[HTML]{C1C1C1} x &\cellcolor[HTML]{C1C1C1} & 2 & 3\\ \hhline{~*{8}{-}} 
            $v_4$ & \cellcolor[HTML]{C1C1C1} x & \cellcolor[HTML]{C1C1C1} x & \cellcolor[HTML]{C1C1C1} x & \cellcolor[HTML]{C1C1C1} &  & & 4 & 3 \\ \hline
            \multicolumn{1}{c}{} & \multicolumn{6}{c}{winning committees}& \multicolumn{2}{c}{PAV-score} \\   
            \hline
            LV & \cellcolor[HTML]{1d53ab}& \cellcolor[HTML]{1d53ab} &\cellcolor[HTML]{1d53ab} & \cellcolor[HTML]{1d53ab} &  &  & 7.5 & \\ \hhline{~*{8}{-}} 
            AV & & \cellcolor[HTML]{d64238}& \cellcolor[HTML]{d64238} &\cellcolor[HTML]{d64238} & \cellcolor[HTML]{d64238} & & & 7.83 \\ \hhline{~*{8}{-}} 
        \end{tabular}
    \end{subtable}%
    \hfill
    \begin{subtable}[h]{0.5\textwidth}
    \caption{
    $s_{PAV}(A,W_{AV})= 6.25$
    $<s_{PAV}(A,W_{LV}) \approx 6.67$.}
    \label{tab:LV>AV}
      \centering
      \small
      \captionsetup{width=.9\linewidth}
        \begin{tabular}{c|c|c|c|c|c|c||c|c|}
          \multicolumn{1}{c}{} & \multicolumn{1}{c}{} & \multicolumn{1}{c}{} & \multicolumn{1}{c}{} & \multicolumn{1}{c}{} & \multicolumn{1}{c}{} & \multicolumn{1}{c}{} & \multicolumn{2}{c}{repr.} \\
                     \multicolumn{1}{c}{} & \multicolumn{1}{c}{
                     $c_1$} & \multicolumn{1}{c}{$c_2$} & \multicolumn{1}{c}{
                     $c_3$} & \multicolumn{1}{c}{$c_4$} & \multicolumn{1}{c}{$c_5$} & \multicolumn{1}{c}{$c_6$} &  \multicolumn{1}{c}{LV} &  \multicolumn{1}{c}{AV}  \\
                    \hhline{~*{8}{-}} 
            $v_1$ & \cellcolor[HTML]{C1C1C1} x & \cellcolor[HTML]{C1C1C1} x & \cellcolor[HTML]{C1C1C1} x & \cellcolor[HTML]{C1C1C1} &  & & 2 & 4 \\ \hhline{~*{8}{-}} 
            $v_2$ & \cellcolor[HTML]{C1C1C1} & \cellcolor[HTML]{C1C1C1} & \cellcolor[HTML]{C1C1C1} x & \cellcolor[HTML]{C1C1C1} x &  & \cellcolor[HTML]{C1C1C1} x & 3 & 4\\ \hhline{~*{8}{-}} 
            $v_3$ & \cellcolor[HTML]{C1C1C1} & \cellcolor[HTML]{C1C1C1} & \cellcolor[HTML]{C1C1C1} x & \cellcolor[HTML]{C1C1C1} x & \cellcolor[HTML]{C1C1C1} x & & 3 & 4\\ \hhline{~*{8}{-}} 
            $v_4$ & & & x &  &  \cellcolor[HTML]{C1C1C1} x &\cellcolor[HTML]{C1C1C1} x & 2 & 0 \\ \hline 
                    \multicolumn{1}{c}{} & \multicolumn{6}{c}{winning committees}& \multicolumn{2}{c}{PAV-score} \\  \hline
            LV & & & \cellcolor[HTML]{1d53ab}& \cellcolor[HTML]{1d53ab} &\cellcolor[HTML]{1d53ab} & \cellcolor[HTML]{1d53ab} & 6.67 & \\ \hhline{~*{8}{-}} 
        AV & \cellcolor[HTML]{d64238}& \cellcolor[HTML]{d64238} &\cellcolor[HTML]{d64238} & \cellcolor[HTML]{d64238} & & & & 6.25\\ \hhline{~*{8}{-}}
        \end{tabular}
    \end{subtable} 
\end{table*}

\subsection{LPAV and LSAV}

\paragraph{Limited proportional approval voting}
On broadcasted party-list elections, limited PAV (LPAV) can not decrease the CC-score compared to PAV since it can only `cut' candidates (but leave at least $l$) from parties that deserve more according to PAV and give them to parties that deserve less, so there cannot be parties that did get a candidate elected in PAV and do not get anything in LPAV. If it is proportional to leave some voters unrepresented, limiting the ballots can (at the cost of proportionality) increase diversity. 
\begin{example}[LPAV increases diversity]\label{ex:LPAV-PAV}
Consider a broadcasted party-list election with 5 voters, where $v_1, ..., v_4$ approve $P_1$ and $v_5$ approves $P_2$. If $k=3$, PAV will elect only candidates from $P_1$ since $P_2$ is too small to `deserve' any. However, with $l=2$, LPAV will elect two candidates from $P_1$ and one candidate from $P_2$, so also $v_5$ is represented. 
\end{example}

On non-party-list elections however, this is not necessarily the case:
\begin{example}[LPAV decreases diversity]
    We construct an election where the outcome of PAV has a higher CC-score than the outcome of LPAV: limiting the ballot size decreases the diversity in this case. See the election in Table \ref{tab:LPAV-PAV} where $s_{CC}(PAV)>s_{CC}(LPAV)$.
    \begin{table}[t]
    \caption{$l=3$, $k=4$. Approvals in grey, limited votes indicated by x. PAV selects $\{c_2, c_3, c_4, c_5\}$ (red), LPAV selects $\{c_1, c_2, c_3, c_4\}$ (blue). The columns on the right indicate whether a voter is represented by the winning committees.}
        \label{tab:LPAV-PAV}
    \small
    \centering
    \renewcommand{\arraystretch}{0.65}
    \setlength{\tabcolsep}{4pt}
        \begin{tabular}{c|c|c|c|c|c|c|c|c|| c|c|}
         \multicolumn{1}{c}{}& \multicolumn{1}{c}{}& \multicolumn{1}{c}{}& \multicolumn{1}{c}{}& \multicolumn{1}{c}{}& \multicolumn{1}{c}{}& \multicolumn{1}{c}{}& \multicolumn{1}{c}{}& \multicolumn{1}{c}{}&\multicolumn{2}{c}{repr.} \\
                    \multicolumn{1}{c}{}& \multicolumn{1}{c}{
                    $c_1$} & \multicolumn{1}{c}{
                    $c_2$} & \multicolumn{1}{c}{$c_3$} & \multicolumn{1}{c}{$c_4$} & \multicolumn{1}{c}{$c_5$} & \multicolumn{1}{c}{$c_6$} & \multicolumn{1}{c}{$c_7$} & \multicolumn{1}{c}{$c_8$} &\multicolumn{1}{c}{LPAV} &\multicolumn{1}{c}{PAV}   \\
                    \hhline{~*{10}{-}}
            $v_1$ & \cellcolor[HTML]{C1C1C1} x & \cellcolor[HTML]{C1C1C1} x & \cellcolor[HTML]{C1C1C1} x & \cellcolor[HTML]{C1C1C1} & \cellcolor[HTML]{C1C1C1} &  & & & \cellcolor[HTML]{1d53ab}& \cellcolor[HTML]{d64238}\\ 
            \hhline{~*{10}{-}}
            $v_2$ & \cellcolor[HTML]{C1C1C1} x & \cellcolor[HTML]{C1C1C1}  & \cellcolor[HTML]{C1C1C1}  & \cellcolor[HTML]{C1C1C1} x & \cellcolor[HTML]{C1C1C1}x & & & & \cellcolor[HTML]{1d53ab}& \cellcolor[HTML]{d64238}\\ 
            \hhline{~*{10}{-}}
            $v_3$ & &  & \cellcolor[HTML]{C1C1C1} x & \cellcolor[HTML]{C1C1C1} x & \cellcolor[HTML]{C1C1C1} x &\cellcolor[HTML]{C1C1C1} & & & \cellcolor[HTML]{1d53ab}& \cellcolor[HTML]{d64238}\\ 
            \hhline{~*{10}{-}}
            $v_4$ & \cellcolor[HTML]{C1C1C1} x & \cellcolor[HTML]{C1C1C1} x & \cellcolor[HTML]{C1C1C1} x & \cellcolor[HTML]{C1C1C1} &  & & & & \cellcolor[HTML]{1d53ab}& \cellcolor[HTML]{d64238}\\ 
            \hhline{~*{10}{-}}
            $v_5$ & & \cellcolor[HTML]{C1C1C1} x & \cellcolor[HTML]{C1C1C1} x & \cellcolor[HTML]{C1C1C1} x &  & & & & \cellcolor[HTML]{1d53ab}& \cellcolor[HTML]{d64238}\\ 
            \hhline{~*{10}{-}}
            $v_6$ & &  & &  & \cellcolor[HTML]{C1C1C1} &\cellcolor[HTML]{C1C1C1} x & \cellcolor[HTML]{C1C1C1} x  & \cellcolor[HTML]{C1C1C1} x & & \cellcolor[HTML]{d64238}\\  \hline 
                    \multicolumn{1}{c}{} & \multicolumn{8}{c}{winning committees}& \multicolumn{2}{c}{CC-score} \\ \hline
            LPAV & \cellcolor[HTML]{1d53ab} & \cellcolor[HTML]{1d53ab} & \cellcolor[HTML]{1d53ab}  &\cellcolor[HTML]{1d53ab} &  & & & & 5& \\ 
            \hhline{~*{10}{-}}
            PAV & &\cellcolor[HTML]{d64238} & \cellcolor[HTML]{d64238} & \cellcolor[HTML]{d64238}  &\cellcolor[HTML]{d64238} &  & & & & 6\\ 
            \hhline{~*{10}{-}}
        \end{tabular}
    \end{table} 
\end{example}

\paragraph{Limited satisfaction approval voting}
In party-list elections, SAV is not a natural voting rule to choose, since there is no obvious reason to penalize parties for their size (number of candidates). It can give very non-diverse outcomes, since small parties are prioritized. In such settings, limiting the ballot size (LSAV) can help mitigate this effect.
\begin{example}\label{ex:LSAV}
 Take for example a broadcasted party-list election with eleven voters, where the first ten voters vote for party $P_1$ that has 100 candidates, and the eleventh voter votes for $P_2$ with only five candidates. Suppose that $k=4$. Then SAV will give 0.1 point to every candidate in $P_1$ and 0.2 point to every candidate in $P_2$, so elect four candidates from $P_2$, which only represent $v_{11}$. LSAV with $l=2$, however, will limit the votes to two per voter, and will therefore give 5 points to the first two candidates in $P_1$ and 0.5 point to the first two candidates in $P_2$, and therefore select four candidates from $P_1$, which gives a CC-score of 10 instead of 1.
\end{example}
However, there are also broadcasted party-list elections where the CC-score of SAV is higher than that of LSAV. When there are many small parties with only a few voters and a few candidates, and one party with a bit more voters and a lot more candidates, LSAV will elect candidates from only the larger party (because the candidates there have more supporters), while SAV will select candidates from all small parties.

\subsection{Examples of LV-games}

\begin{example}
    Consider the profile in Example \ref{ex:non-strategic}, as displayed in Table \ref{tab:example_non_strategic}. Here, if both party $P_a$ and party $P_b$ would play a lower quota strategy, they would both choose $\lfloor\frac{n_a\cdot k}{|N|}\rfloor = \lfloor\frac{n_b\cdot k}{|N|}\rfloor = \lfloor\frac{6\cdot 6}{12}\rfloor = 3$ candidates, that would each get 4 votes. For example, $s_a = ((\{a_1, a_2\},2), (\{a_2, a_3\},2), (\{a_1, a_3\},2))$ and $s_b = ((\{b_1, b_2\},2), (\{b_2, b_3\},2), (\{b_1, b_3\},2))$. Then indeed the winning committee would be $\{a_1, a_2, a_3, b_1, b_2, b_3\}$, so both parties would get their lower quota  of seats. 
    In this example, $\sum_{i\in\{a,b\}}\lfloor\frac{k\cdot n_i}{|N|}\rfloor = k$, so according to Proposition \ref{prop:NE_lower_quota} the strategy profile should be a Nash equilibrium. This is indeed the case: neither party $P_a$ nor party $P_b$ can divide the votes of their voters over candidates such that more than 3 candidates get at least 4 votes, so neither party can steal any seat from the other party by deviating from their current strategy.
\end{example}

\begin{example}\label{ex:lq}
    Consider a party-list election with $l=2$, $k=10$ and three parties, $P_a$, $P_b$, and $P_c$, as shown in Table \ref{tab:example_lq}. Here, if all parties would play a lower quota strategy, party $P_a$ would choose $\lfloor\frac{n_a\cdot k}{|N|}\rfloor =4$ candidates, parties $P_b$ and $P_c$ would choose $\lfloor\frac{n_b\cdot k}{|N|}\rfloor = \lfloor\frac{n_c\cdot k}{|N|}\rfloor=2$ candidates. In party $P_a$, the 4 chosen candidates would get $\frac{3\cdot2}{4}=1.5$ votes on average, so two candidates would get one vote and two candidates would get two votes, in parties $P_b$ and $P_c$, the two chosen candidates would get $\frac{2\cdot2}{2}=2$ votes each. For example, $s_a = ((\{a_1, a_2\},1), (\{a_3, a_4\},2))$, $s_b = ((\{b_1, b_2\},2))$, and $s_c = ((\{c_1, c_2\},2))$, as in Table \ref{tab:example_lq}.
    With these strategies, all winning committees will contain $\{a_1, a_2, a_3, a_4, b_1, b_2, c_1, c_2\}$, together with two other candidates that did not get any votes. Every party gets their lower quota, but the profile is not a Nash equilibrium: for example party $P_a$ can `claim' the two remaining seats by changing its strategy to $s_a = ((\{a_1, a_2\},1), (\{a_3, a_4\},1) (\{a_5, a_6\},1))$. However, it is a 2-Nash equilibrium: no party can gain more than 2 assured seats by deviating from their current strategy.
    \begin{table}[ht]
    \caption{Party-list profile used in Example \ref{ex:lq}. $l=2$, $k=10$. Approvals in grey, limited votes indicated by x.}
    \label{tab:example_lq}
    \centering
    \setlength{\tabcolsep}{3pt}
    \renewcommand{\arraystretch}{0.8}
    \begin{tabular}{c|c|c|c|c|c|c|c|c|c|c|c|c|c|c|}
        \multicolumn{1}{c}{}& \multicolumn{1}{c}{$a_1$} & \multicolumn{1}{c}{$a_2$} & \multicolumn{1}{c}{$a_3$} & \multicolumn{1}{c}{$a_4$} & \multicolumn{1}{c}{$a_5$} & \multicolumn{1}{c}{$a_5$}& \multicolumn{1}{c}{$b_1$} & \multicolumn{1}{c}{$b_2$} & \multicolumn{1}{c}{$b_3$} & \multicolumn{1}{c}{$b_3$} & \multicolumn{1}{c}{$c_1$} & \multicolumn{1}{c}{$c_2$} & \multicolumn{1}{c}{$c_3$} &\multicolumn{1}{c}{$c_3$}\\
        \hhline{~*{14}{-}} 
        $v_1$ & \cellcolor[HTML]{C1C1C1}x & \cellcolor[HTML]{C1C1C1}x & \cellcolor[HTML]{C1C1C1} & \cellcolor[HTML]{C1C1C1} & \cellcolor[HTML]{C1C1C1}  &\cellcolor[HTML]{C1C1C1}  & & & & & & & &\\ \hhline{~*{14}{-}} 
        $v_2$ & \cellcolor[HTML]{C1C1C1}  & \cellcolor[HTML]{C1C1C1}  & \cellcolor[HTML]{C1C1C1}x & \cellcolor[HTML]{C1C1C1}x & \cellcolor[HTML]{C1C1C1} & \cellcolor[HTML]{C1C1C1}  & & & & & & & &\\ \hhline{~*{14}{-}} 
        $v_3$ & \cellcolor[HTML]{C1C1C1}  & \cellcolor[HTML]{C1C1C1}  & \cellcolor[HTML]{C1C1C1}x & \cellcolor[HTML]{C1C1C1}x& \cellcolor[HTML]{C1C1C1} & \cellcolor[HTML]{C1C1C1}  & & & & & && &\\ \hhline{~*{14}{-}} 
        $v_4$ &  & & & & & & \cellcolor[HTML]{C1C1C1}x &\cellcolor[HTML]{C1C1C1}x   &\cellcolor[HTML]{C1C1C1}  & \cellcolor[HTML]{C1C1C1}  & & &  &\\ \hhline{~*{14}{-}} 
        $v_5$ &  & & & & & & \cellcolor[HTML]{C1C1C1}x &\cellcolor[HTML]{C1C1C1}x   &\cellcolor[HTML]{C1C1C1}  & \cellcolor[HTML]{C1C1C1}  & & & & \\ \hhline{~*{14}{-}} 
        $v_6$ &  & & & & & & & & & & \cellcolor[HTML]{C1C1C1}x &\cellcolor[HTML]{C1C1C1}x   &\cellcolor[HTML]{C1C1C1} &\cellcolor[HTML]{C1C1C1}   \\ \hhline{~*{14}{-}} 
        $v_7$ &  & & & & & & & & & & \cellcolor[HTML]{C1C1C1}x &\cellcolor[HTML]{C1C1C1}x   &\cellcolor[HTML]{C1C1C1} & \cellcolor[HTML]{C1C1C1}   \\ \hhline{~*{14}{-}} 
    \end{tabular}
\end{table} 
\end{example}

\section{Omitted proofs}\label{app:proofs}
\subsection{Proof of Theorem \ref{thm:CC-guarantee}.}

\noindent\textbf{Theorem \ref{thm:CC-guarantee}.}    \emph{The CC-guarantee of LV is 0.}

\begin{proof}
    The idea of the proof is that a small group that coordinates well can overrule a large group that coordinates poorly. Let $\mathcal{E}$ be a class of elections of the following form: for any $E\in \mathcal{E}$ we have $E= ( N, C, k, l, A, L)$ where $k\geq 2$ is divisible by $l$\footnote{This assumption is not necessary for the proof to work, but makes it easier to read.}, $N$ consists of two groups $N=X\cup S$ with $X\cap S = \emptyset$, and $|S|=2\frac{k}{l}$; for any $j\in S$, $A_j= W$ where $W\subseteq C$ with $|W|=k$ and for any $j\in X$, $A_j = Y$, where $Y\subseteq C$ with $|Y|\geq l|X|$ and $Y\cap W = \emptyset$; for any $j, j' \in X$, $L_j\cap L_{j'}=\emptyset$ (so all candidates in $Y$ get at most one vote), for any $j \in S$ there is exactly one $j'\in S$ such that $L_j = L_{j'}$, and for all other $j''$, $L_j\cap L_{j''}=\emptyset$ (so all candidates in $W$ get two votes). In any such election, $W$ wins according to LV, and $s_{CC}(W) = |S|=2\frac{k}{l}$. The most diverse committee $W'$ is one that contains at least one candidate from $W$ and at least one candidate from $Y$, with $s_{CC}(W')=2\frac{k}{l} + |X|$. This gives us that for $E\in \mathcal{E}$, as $|X|$ becomes larger, $\frac{\min_{W\in LV(E)}s_{CC}(A,W)}{\max_{W\in S_k(C)}s_{CC}(A, W)} = \frac{2\frac{k}{l}}{2\frac{k}{l} + |X|}$ approximates 0, and therefore, the CC-guarantee of LV is $\kappa_{cc}(k) = \inf_{A\in \mathcal{A}}\frac{\min_{W\in LV(E)}s_{CC}(A,W)}{\max_{W\in S_k(C)}s_{CC}(A,W)}=0$.
\end{proof}


\subsection{LV fails the main proportionality axioms}

We briefly recall the definitions of some essential proportionality axioms:
 A voting rule $\mathcal{R}$ satisfies \emph{justified representation (JR)} \citep{Aziz2017} if for each election $E = ( N, C, k, l, A, L)$, for each $W\in \mathcal{R}(E)$ and each 1-cohesive group of voters $S$, there is a voter $i\in S$ who is represented by at least one member of $W$, i.e. $|W\cap A_i|\geq 1$, where a group $S$ is called 1-cohesive if $|S|\geq \frac{n}{k}$ and $|\cap_{i\in S}A_i|\geq 1$.  The definition is extended to capture the idea that groups that agree on more candidates should be represented by more candidates: A rule satisfies \emph{extended justified representation (EJR)} \citep{Aziz2017} if for each election $E = ( N, C, k, l, A, L)$, for each $W\in \mathcal{R}(E)$ and each $\ell$-cohesive group of voters $S$ (for $\ell\leq k$), there is a voter $i\in S$ who is represented by at least $\ell$ members of $W$, i.e. $|W\cap A_i|\geq \ell$, where a group $S$ is $\ell$-cohesive if $|S|\geq \ell\cdot \frac{n}{k}$ and $|\cap_{i\in S}A_i|\geq \ell$.
 The notion of JR is generalized to a weaker condition than EJR, \emph{proportional justified representation (PJR)} \citep{sanchezfernandez2016}: A voting rule $\mathcal{R}$ satisfies PJR if for each election $E = ( N, C, k, l, A, L)$, for each $W\in \mathcal{R}(E)$ and each $\ell$-cohesive group of voters $S$, the collective group (instead of just one member of the group) has at least $\ell$ representatives: $|W\cap (\cup_{i\in S}A_i)|\geq \ell$.
 A rule $\mathcal{R}$ satisfies \emph{laminar proportionality} \citep{Peters2020limwelf} if, for every laminar election $E = ( N, C, k, l, A, L)$\footnote{See Appendix \ref{app:laminar} for a definition of broadcasted laminar elections.} it returns a committee that satisfies the following: (1) if $E$ is unanimous, then $\mathcal{R}(E)\subseteq C$; (2) if there is a candidate $c\in C$ such that $c\in A_i$ for all $i \in N$ and the election $E_{-c}$ is laminar, then $\mathcal{R}(E)=W'\cup\{c\}$, where $W'$ is laminar proportional for $E_{-c}$; and (3) if $E$ is the sum of $E_1$ and $E_2$, then $\mathcal{R}(E)=W_1\cup W_2$, where $W_1$ is laminar proportional for $E_1$ and $W_2$ is laminar proportional for $E_2$.
A rule $\mathcal{R}$ satisfies \emph{priceability} \citep{Peters2020limwelf} if for any election $E = ( N, C, k, l, A, L)$, $\mathcal{R}(E)$ is priceable: there exists a price system $\textbf{ps}=(p,\{p_i\}_{i\in N}$ that supports it, where $p>0$ is a price and $p_i$ are payment functions $p_i:C\rightarrow [0,1]$ such that: (1) if $p_i(c)>0$, then $c\in A_i$; (2) $\sum_{c\in C}p_i(c)\leq 1$; (3) for all $c\in \mathcal{R}(E)$, $\sum_{i\in N}p_i(c)=p$; (4) for all $c\notin \mathcal{R}(E)$, $\sum_{i\in N}p_i(c)=0$; and (5) for all $c\notin \mathcal{R}(E)$: $\sum_{i\in N:c\in A_i}\left(1-\sum_{c\in \mathcal{R}(E)}p_i(c')\right)\leq p$.

A rule $\mathcal{R}$ satisfies \emph{lower quota} if for each party-list election $E$ with parties $P_1, ..., P_g$, if every party $P_i$ gets at least  $\lfloor\frac{k\cdot n_i}{|N|}\rfloor$ seats \citep{Lackner2023, Peters2021}. 

\medskip
\noindent\textbf{Proposition \ref{prop:axioms}.} \emph{LV fails JR, EJR, PJR, 
priceability, and lower quota even on
broadcasted party-list elections.}

\begin{proof}
    We show that LV does not satisfy JR by constructing a counterexample. Take an election with $k=4$ and $l=3$, where we have 8 voters that vote as shown in Table \ref{tab:JR-counter}. 
    \begin{table}[b]
    \caption{Example used in the proof of Proposition \ref{prop:axioms}. $l=3$, $k=4$. Approvals in grey, limited votes indicated by x.}
    \label{tab:JR-counter}
    \centering
    \setlength{\tabcolsep}{3pt}
    \renewcommand{\arraystretch}{0.8}
    \begin{tabular}{c|c|c|c|c|c|c|c|c|c|}
        \multicolumn{1}{c}{}& \multicolumn{1}{c}{$a_1$} & \multicolumn{1}{c}{$a_2$} & \multicolumn{1}{c}{$a_3$} & \multicolumn{1}{c}{$w_1$} & \multicolumn{1}{c}{$w_2$} & \multicolumn{1}{c}{$w_3$} & \multicolumn{1}{c}{$w_4$} & \multicolumn{1}{c}{$w_5$} & \multicolumn{1}{c}{$w_6$} \\
                \hhline{~*{9}{-}} 
        $v_1$ & \cellcolor[HTML]{C1C1C1} x & \cellcolor[HTML]{C1C1C1} x & \cellcolor[HTML]{C1C1C1} x &  &  & & & & \\ \hhline{~*{9}{-}} 
        $v_2$ & \cellcolor[HTML]{C1C1C1} x & \cellcolor[HTML]{C1C1C1} x & \cellcolor[HTML]{C1C1C1} x &  &  &  & & &\\ \hhline{~*{9}{-}} 
        $v_3$ &  & &  & \cellcolor[HTML]{C1C1C1} x &\cellcolor[HTML]{C1C1C1} x   &\cellcolor[HTML]{C1C1C1} x & & &  \\ \hhline{~*{9}{-}} 
        $v_4$ &  & &  & \cellcolor[HTML]{C1C1C1} x &\cellcolor[HTML]{C1C1C1} x   &\cellcolor[HTML]{C1C1C1} x & & &  \\ \hhline{~*{9}{-}} 
        $v_5$ &  & &  & \cellcolor[HTML]{C1C1C1} x &\cellcolor[HTML]{C1C1C1} x   &\cellcolor[HTML]{C1C1C1} x & & &  \\ \hhline{~*{9}{-}} 
        $v_6$ &  & &  & & &  & \cellcolor[HTML]{C1C1C1} x &\cellcolor[HTML]{C1C1C1} x   &\cellcolor[HTML]{C1C1C1} x \\ \hhline{~*{9}{-}} 
        $v_7$ &  & &  & & &  & \cellcolor[HTML]{C1C1C1} x &\cellcolor[HTML]{C1C1C1} x   &\cellcolor[HTML]{C1C1C1} x \\ \hhline{~*{9}{-}} 
        $v_8$ &  & &  & & &  & \cellcolor[HTML]{C1C1C1} x &\cellcolor[HTML]{C1C1C1} x   &\cellcolor[HTML]{C1C1C1} x \\ \hhline{~*{9}{-}} 
    \end{tabular}
\end{table} 
The group $V = \{v_1, v_2\}$ is 1-cohesive, since $|V|=2\geq \frac{n}{k}=\frac{8}{4}$, $v_1$ and $v_2$ agree on all their votes. However, the candidates they vote for all get two votes, while all other candidates $w_1, ..., w_6$ get three votes. Hence, there is no voter in $V$ who is represented by at least one member in the winning committee. Note that in this example limited voting and approval voting are equivalent because all voters approve exactly 3 candidates, this is however not crucial for the counterexample to work. Since both EJR and priceability imply PJR and PJR implies JR \citep{Peters2020limwelf, sanchezfernandez2016}, and since lower quota implies JR on party-list profiles \citep{Peters2021} (and the counterexample is indeed a party-list profile), we can conclude that LV does not satisfy any of the above axioms. 
\end{proof}

We conclude by also giving an example of a broadcasted laminar election where the winning committee of LV is not laminar proportional.
See the profile in Table \ref{tab:lamprop-counter}. If $k=6$ and $l=4$, all voters vote exactly for the candidates they approve. Then $c_1, c_2, c_3, c_4, c_5$ all get two votes, and the other candidates get one vote. So the winning committee consists of $c_1, ..., c_5$, and one other candidate. An example of a winning committee is shown in 
blue in Table \ref{tab:lamprop-counter}. The instance is laminar, but all outcomes are not laminar proportional, because laminar proportionality would mean that both groups $\{v_1, v_2\}$ and $\{v_3, v_4\}$ have three candidates that they approve in the winning committee, but here the first group has only two and the second group has 4.

\begin{table}[t]
    \caption{Example to show that LV does not satisfy laminar proporitonality or priceability. $k=6$ and $l=4$, the committee in blue is a winning committee according to LV.
    }
    \label{tab:lamprop-counter}
    \centering
    \small
    \setlength{\tabcolsep}{3pt}
    \renewcommand{\arraystretch}{0.8}
    \begin{tabular}{c|c|c|c|c|c|c|c|c|c|c|c|}
       \multicolumn{1}{c}{}& \multicolumn{1}{c}{
       $c_1$} & \multicolumn{1}{c}{$c_2$} & \multicolumn{1}{c}{$c_3$} & \multicolumn{1}{c}{$c_4$} & \multicolumn{1}{c}{$c_5$} & \multicolumn{1}{c}{$c_6$} & \multicolumn{1}{c}{$c_7$} & \multicolumn{1}{c}{$c_8$}  & \multicolumn{1}{c}{$c_9$}  & \multicolumn{1}{c}{$c_{10}$}  & \multicolumn{1}{c}{$c_{11}$}   \\
                \hhline{~*{11}{-}}
        $v_1$ &\cellcolor[HTML]{C1C1C1}x & &  & &  &  & & &\cellcolor[HTML]{C1C1C1} x&\cellcolor[HTML]{C1C1C1} x&\cellcolor[HTML]{C1C1C1}x  \\ \hhline{~*{11}{-}}
        $v_2$ &\cellcolor[HTML]{C1C1C1} x& &  & &  & \cellcolor[HTML]{C1C1C1}x &\cellcolor[HTML]{C1C1C1} x&\cellcolor[HTML]{C1C1C1}x & & &  \\ \hhline{~*{11}{-}}
        $v_3$ & &\cellcolor[HTML]{C1C1C1}x &\cellcolor[HTML]{C1C1C1} x &\cellcolor[HTML]{C1C1C1}x &\cellcolor[HTML]{C1C1C1}x  &  & & & & &  \\ \hhline{~*{11}{-}}
        $v_4$ & &\cellcolor[HTML]{C1C1C1}x &\cellcolor[HTML]{C1C1C1}x  &\cellcolor[HTML]{C1C1C1}x &\cellcolor[HTML]{C1C1C1}x  &  & & & & &  \\ \hline \hline
        LV & \cellcolor[HTML]{1d53ab} & \cellcolor[HTML]{1d53ab} & \cellcolor[HTML]{1d53ab} & \cellcolor[HTML]{1d53ab} & \cellcolor[HTML]{1d53ab} & \cellcolor[HTML]{1d53ab} & & & & &  \\ \hhline{~*{11}{-}}       
    \end{tabular}
\end{table} 

\subsection{LV-games}

\medskip
\noindent\textbf{Corollary \ref{cor:lq}.}
\emph{Let an LV-game be given such that for all parties $P_i$, $l\leq \lfloor\frac{k\cdot n_i}{|N|}\rfloor \leq |P_i|$. If each party $P_i$ plays a strategy in $\mathcal{LQ}_i$, then all committees in the outcome of the game satisfy lower quota.}
\begin{proof}
     The total number of candidates that get a positive number of votes (the sum of the lower quotas of all parties) is $\sum_{i\leq g}\lfloor\frac{k\cdot n_i}{|N|}\rfloor \leq k$ and no candidates outside any of the $X_i$s get any votes, so all candidates in any of the $X_i$s get elected, and the remaining seats in the winning committees are filled with all possible combinations of other candidates (who are all tied).
\end{proof}

\noindent\textbf{Theorem \ref{prop:delta_NE_lower_quota}.}
Let an LV-game be given such that for all parties $P_i$, $l\leq \lfloor\frac{k\cdot n_i}{|N|}\rfloor \leq |P_i|$. If the excess of the sum of the lower quotas of all parties is $\epsilon = k-\sum_{i\leq g}\lfloor\frac{k\cdot n_i}{|N|}\rfloor$, a strategy profile where each party $P_i$ plays a strategy in $\mathcal{LQ}_i$ is an $\epsilon$-Nash equilibrium.
\begin{proof}
    It is clear from the proof of Proposition \ref{prop:NE_lower_quota} that any party can improve their score by $\delta$ by adding the number of remaining seats to the number of candidates they assign a positive number of votes. They cannot improve their score any more than $\delta$, since by spreading their votes over more candidates, they lower the number of votes per candidate. There might still be a committee in the outcome where they get a higher score, but this is only because of tie-breaking, and since we consider pessimistic agents, this does not change anything in the utility of the party.
\end{proof}

\newpage


\section{Detailed results from experiments}\label{app:experiment_results}

\subsection{Detailed plot of CC-improvement}
Figure 
\ref{fig:CC-improvement_all_phi} shows boxplots of the CC-improvement for all parameter combinations. Since for 
$\phi \geq 0.25$ the improvements are all close to 1, they are merged in the plot. 
\begin{figure*}[t]
    \centering
    \includegraphics[width = \textwidth]{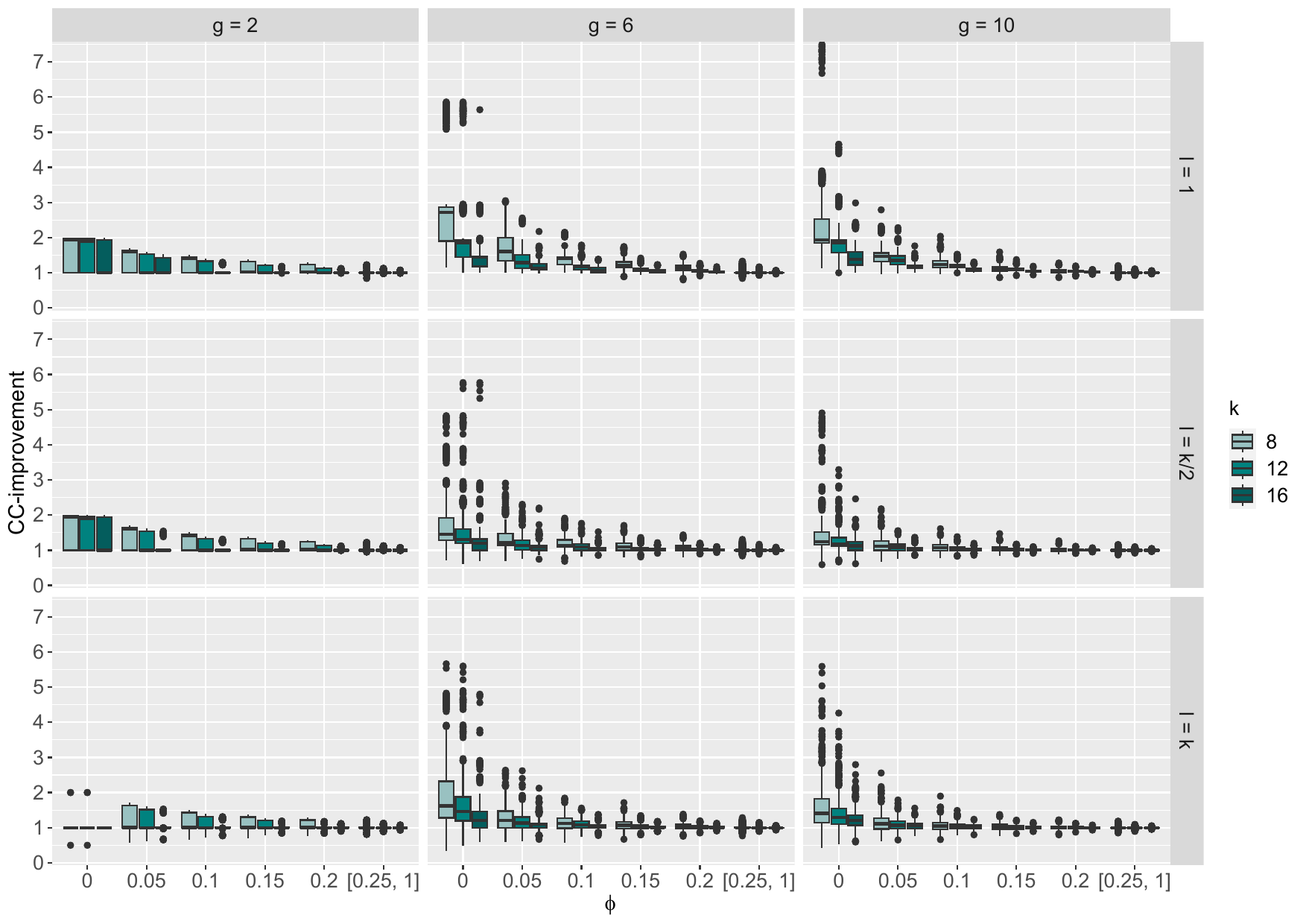}
    \caption{Boxplots of the CC-improvement for different values of $\phi$ (dispersion), $k$ (committee size), $g$ (number of parties), and $l$ (ballot limit). 
    \vspace{10pt}
    }
    \label{fig:CC-improvement_all_phi}
\end{figure*}

\subsection{PAV-improvement: effect of $\phi$}\label{app:PAV-improvement}

Figure \ref{fig:PAV-increase} shows boxplots of the PAV-improvement compared for varying values of $\phi$, $k$, $g$, and $l$.
The boxplots are rather similar to that of the CC-improvement (Figure \ref{fig:CC-improvement_all_phi}). For broadcasted party-list elections ($\phi=0$), LV has better proportionality than AV, but for $\phi\geq 0.25$, the PAV-improvement gets very close to 1 for all data points, so there is no real difference in score between AV and LV.  Just as the CC-improvement, we found that for $0< \phi < 0.25 $, the PAV-improvement decreases as $\phi$ increases, which shows that also our theoretical results on proportionality are robust: the closer an election is to a party-list election, the better LV performs compared to AV.
\begin{figure*}[t]
    \centering
    \includegraphics[width=\textwidth]{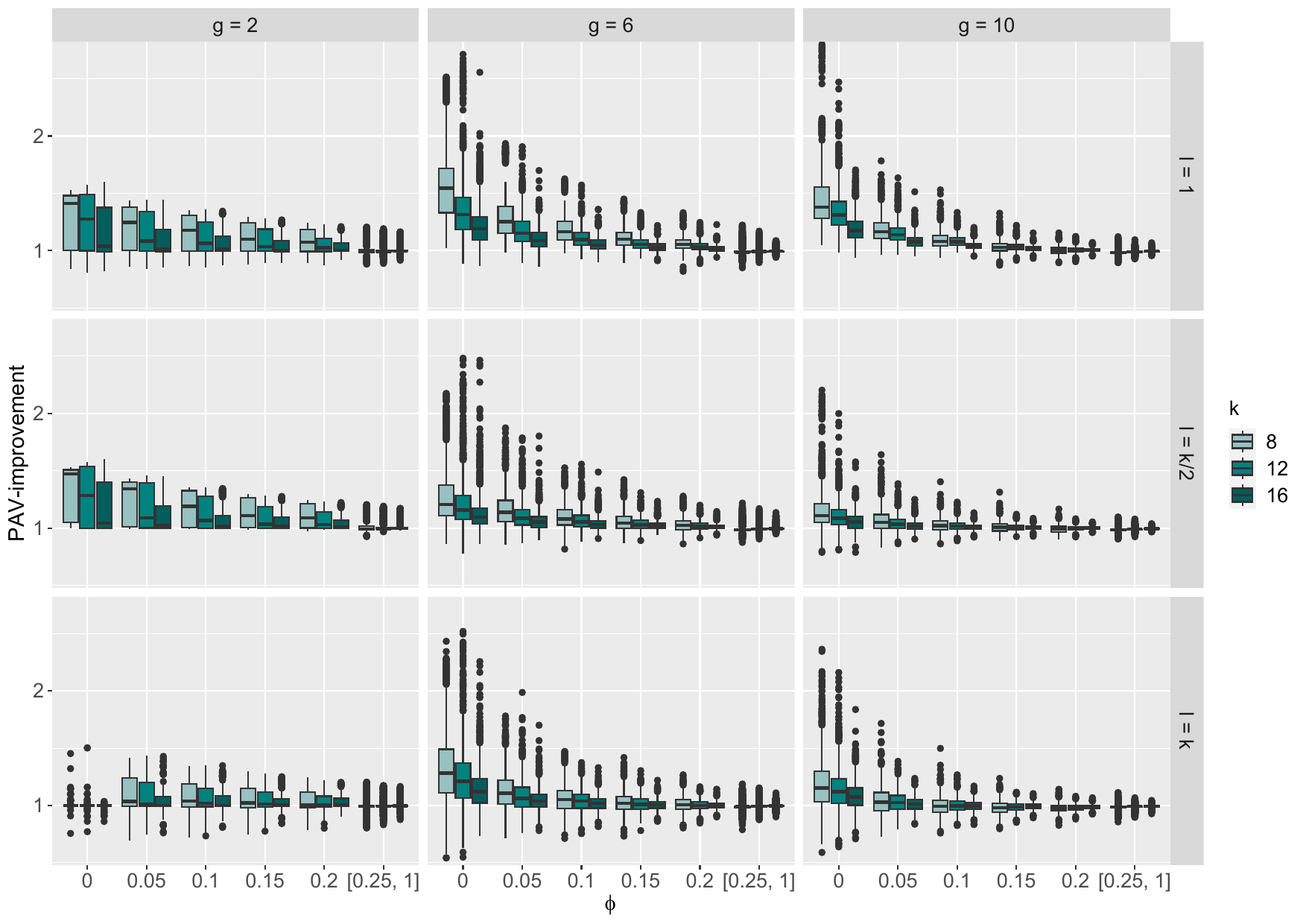}
    \caption{Boxplots of the PAV-
    improvement for different values of $\phi$, $k$, $g$, and $l$.
    }
    \label{fig:PAV-increase}
\end{figure*}


\subsection{Effect of variables other than $\phi$ on CC- and PAV-improvement}\label{app:variable_analysis}
To see what is the effect of the other variables on the relative increase in CC-score and PAV-score, we only consider the cases where $\phi=0$ (since 
in those we see the largest effect, and they are most interesting in practical considerations). 
Firstly, for $l=k$, we know from Theorem \ref{thm:CC-improvement_BP} and Proposition \ref{prop:PAV-improvement_BP} that the CC-/PAV-improvement in broadcasted party-list profiles is always 1, if the most popular party has at least $k$ candidates. We see that for $g=2$ both scores are indeed often 1, but for larger $g$ not anymore. This can be explained by the intuition that the larger $g$, the lower the probability that the most popular party has at least $k$ candidates.
Secondly, 
if $g=2$ the 
CC-improvement 
centers around 1 and around 2, see also the density plot in Figure \ref{fig:density_CC-improvement}
. 
\begin{figure}[b]
    \centering
    \includegraphics[width = \columnwidth]{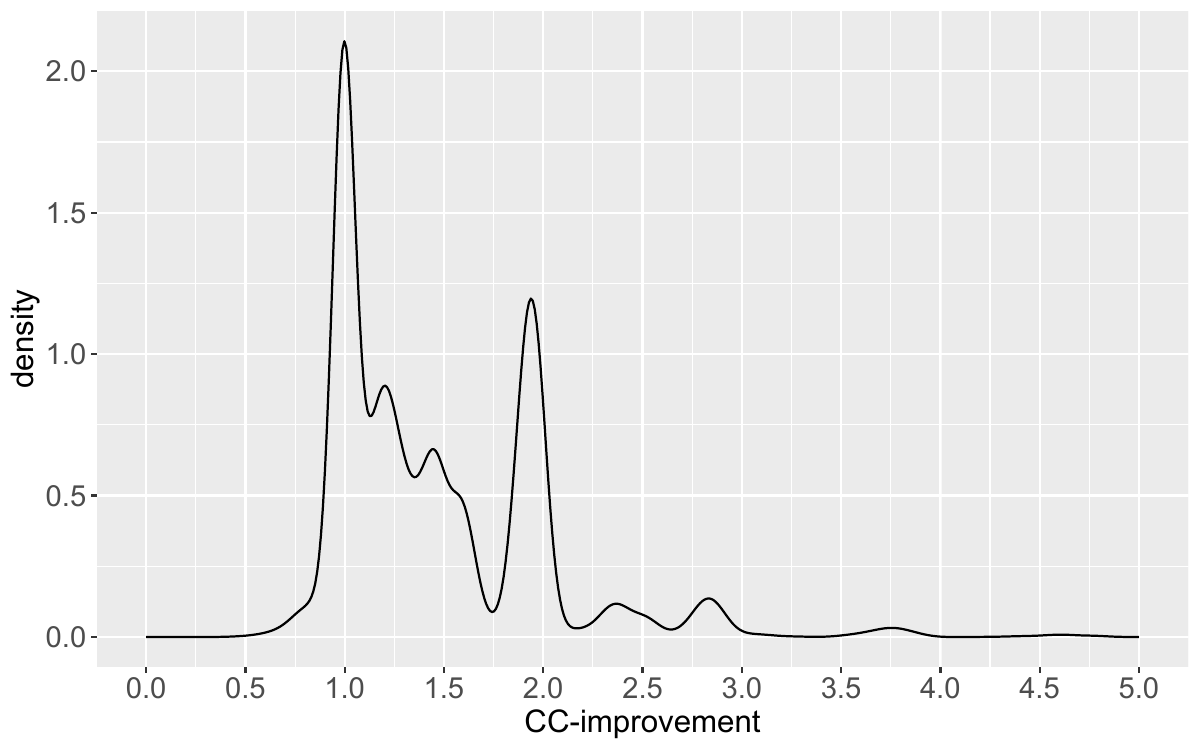}
    \caption{Probability density of CC-
    improvement for $g=2, \phi = 0$
    \vspace{10pt}
    }
    \label{fig:density_CC-improvement}
\end{figure}
This is because the two parties are likely to be similar in size, so if LV elects candidates from $P_2$ and AV doesn't, the CC-increase is around 2, otherwise it is 1.
For $g=6$ and $g=10$, the values of CC-improvement seem exponentially distributed, and indeed with a logarithmic transformation, visual inspection, or a quick qq-plot analysis, shows that they are roughly normally distributed. We perform an ANOVA on the logarithmic transformed data for $\phi = 0$ and $g=6$ or $g=10$ to study the influence of $k$, $l$, and $g$ on the CC-
improvement. It shows that all variables and their two-way interactions are relevant for the value of the CC-
improvement, therefore we study the effect sizes by the $\eta^2$ measure \citep{Adams2014}. 
Typically, partial $\eta^2$ above 0.01 is considered a small effect, above 0.06 is considered a medium effect and above 0.14 is considered a large effect.
\begin{table}[t]
     \caption{Effect sizes of 2-way ANOVA models of the logarithm of the CC-/PAV-improvements. 
     The first column shows the $\eta^2$ measure, the second the partial $\eta^2$ measure, which excludes variance from other independent variables. Values that indicate what is typically considered a large effect are highlighted.\\}
    \small
    \renewcommand{\arraystretch}{0.6}
    \begin{subtable}[h]{0.45\columnwidth}
        \caption{CC-improvement}
        \label{tab:effect_sizes_CC}
        \centering
        \begin{tabular}{rrr}
          \hline
         & $\eta^2$ & partial $\eta^2$ \\ 
          \hline
        g & 0.02 & 0.02 \\ 
          k & 0.12 & \textbf{0.15} \\ 
          l & \textbf{0.16} & \textbf{0.19} \\ 
          g:l & 0.00 & 0.00 \\ 
          g:k & 0.00 & 0.01 \\ 
          k:l & 0.02 & 0.03 \\ 
           \hline
        \end{tabular}
    \end{subtable}
    \hfill
    \begin{subtable}[h]{0.45\columnwidth}
    \caption{PAV-improvement}
    \label{tab:effect_sizes_PAV}
    \small
        \centering
        \begin{tabular}{rrr}
          \hline
         & $\eta^2$ & partial $\eta^2$ \\ 
          \hline
        g & 0.04 & 0.05 \\ 
          k & 0.07 & 0.09 \\ 
          l & 0.11 & \textbf{0.14} \\ 
          g:l & 0.07 & 0.09 \\ 
          g:k & 0.01 & 0.01 \\ 
          k:l & 0.01 & 0.01 \\ 
           \hline
        \end{tabular}
     \end{subtable}
\end{table}
In Table \ref{tab:effect_sizes_CC} we find that $k$ and $l$ explain large part of the variance (lower $k$ and $l$ cause higher CC-increase), which is what we would expect from the boxplots, but also $g$ (lower $g$ causes higher CC-increase) and the interaction between $k$ and $l$ (e.g. if $l=k$, the value of $k$ matters less than if $l=1$) explain a small part.

We do the same analysis for the PAV-
improvement. Again an ANOVA on the logarithmic transform of the PAV-
improvement for $\phi=0$ shows that all variables and their two-way interactions are significant. The $\eta^2$ measure gives the following (Table \ref{tab:effect_sizes_PAV}):
$l$ explains the largest part of the variance in the PAV-
improvement (with higher PAV-improvement for low $l$), then $k$ (also a higher score for low $k$) and the interaction between $g$ and $l$, and finally $g$ (
which mainly influences the variance).

We had expected that with smaller $l$ the welfare loss, measured as $\frac{s_{AV}(AV)}{s_{AV}(LV)}$, would increase, but an ANOVA test shows that there is only small effect of $l$ on the means of this value (partial $\eta^2 = 0.05$), and that for $g>2$, for both $l=1$ and $l=k$, this loss is higher than for $l=\frac{k}{2}$. A more fine-grained inspection of the `AV-improvement' (the inverse of the loss) with different values of $l$ shows that LV's welfare loss is lowest when $l$ is somewhere between 1 and $k$, see Figure \ref{fig:AV-loss}, except for the case where $g=2$, then $l=k$ gives no welfare loss. 
\begin{figure}[b]
    \centering
    \includegraphics[width=\columnwidth]{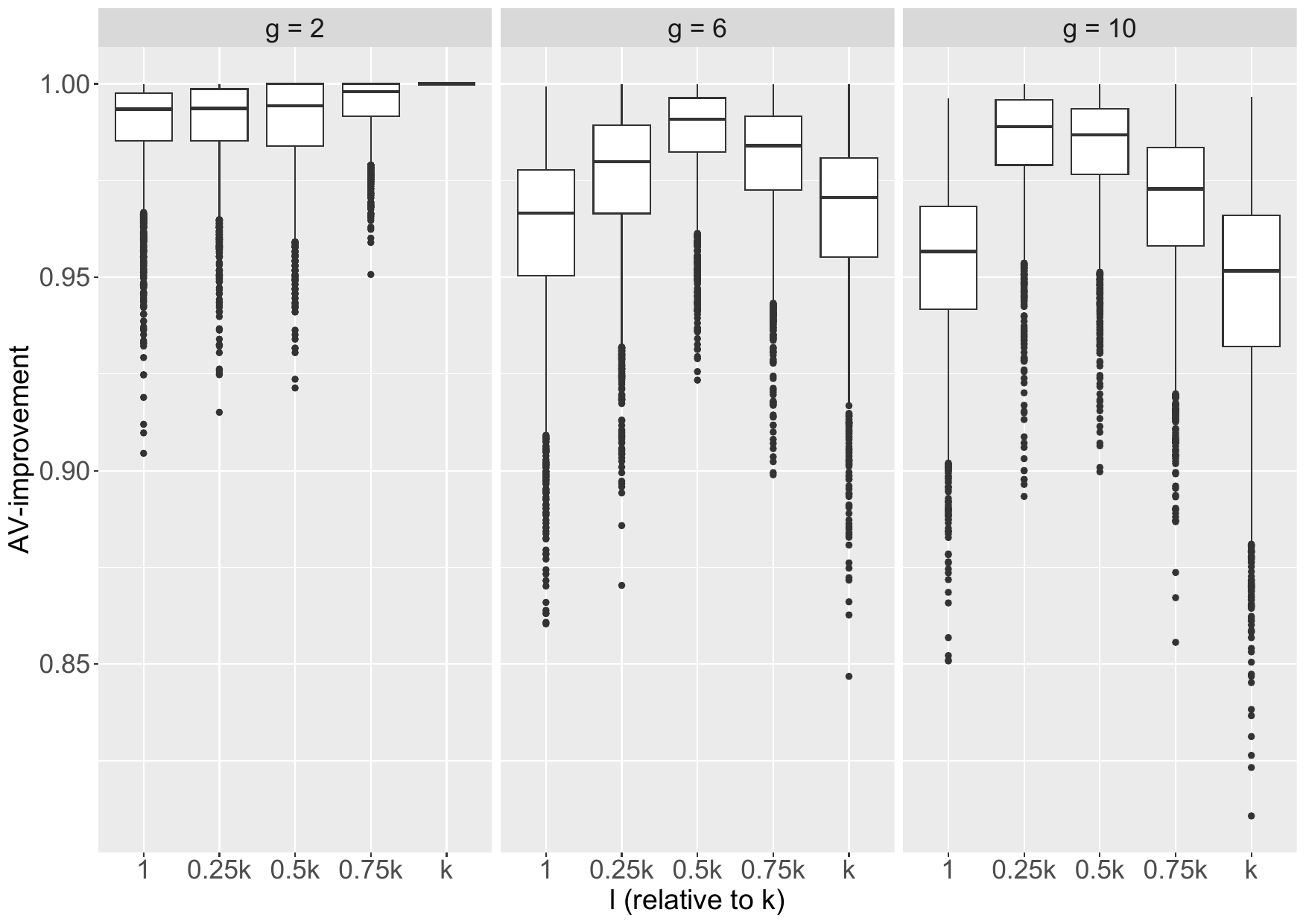}
    \caption{AV-improvement ($\frac{s_{AV}(LV)}{s_{AV}(AV)}$) for different values of $l$ (with $k\in \{8, 12, 16\}$, all values combined, $\phi=0$). Note that the `improvement' is always at most 1, because LV cannot have a higher AV-score than AV itself.}
    \label{fig:AV-loss}
\end{figure}

\subsection{Results for CC with voters partitioned randomly into parties.}
In Section \ref{sec:experiment} in the main paper, we used a model for generating profiles that made all parties have similar numbers of voters, by assigning a party uniformly at random to each voter. This may be considered a restrictive assumption on an election. We therefore ran the experiments with a slightly different model, that does not rely on this assumption. The model is almost the same to the disjoint model by \citet{szufa_2022} as explained above, with the difference that, instead of assigning a party randomly to each voter, it chooses a partition of the voters over the parties uniformly at random.

Since the differences between the sizes of the parties are now larger, we expect by Theorem \ref{thm:CC-improvement_BP} the CC-improvement for broadcasted party-list profiles to be smaller than in the case where the parties had more similar sizes. Figure \ref{fig:CC-improvement_partition} shows that this is indeed the case. It looks similar to the results of the simulation where parties had similar sizes (Figure \ref{fig:CC-improvement_all_phi}), but in general the CC-improvement is smaller, as one would expect with parties of more dissimilar sizes.
Table \ref{tab:effect_sizes_CC_random_partition} shows the $\eta^2$ effect sizes of an ANOVA model of the logarithm of the CC-improvement. 

\begin{table}[t]
     \caption{Effect sizes of 2-way ANOVA models of the logarithm of the CC-/PAV-improvements in elections where voters are distributed over parties by a random partition. 
     The first column shows the $\eta^2$ measure, the second the partial $\eta^2$ measure, which excludes variance from other independent variables. Values that indicate what is typically considered a large effect are highlighted.\\}
    \small
    \renewcommand{\arraystretch}{0.6}
    \begin{subtable}[h]{0.45\columnwidth}
    \caption{CC-improvement}
    \label{tab:effect_sizes_CC_random_partition}
        \centering
        \begin{tabular}{rrr}
          \hline
         & $\eta^2$ & partial $\eta^2$ \\ 
          \hline
        g & 0.03 & 0.04 \\ 
          k & 0.05 & 0.07 \\ 
          l & \textbf{0.16} & \textbf{0.18} \\ 
          g:l & 0.01 & 0.01 \\ 
          g:k & 0.00 & 0.00 \\ 
          k:l & 0.03 & 0.04 \\ 
           \hline
        \end{tabular}
    \end{subtable}
    \hfill
    \begin{subtable}[h]{0.45\columnwidth}
        \caption{PAV-improvement}
        \label{tab:effect_sizes_PAV_random_partition}
        \small
        \centering
        \begin{tabular}{rrr}
          \hline
         & $\eta^2$ & partial $\eta^2$ \\ 
          \hline
        g & 0.05 & 0.06 \\ 
          k & 0.00 & 0.00 \\ 
          l & \textbf{0.14} & \textbf{0.16} \\ 
          g:l & 0.02 & 0.02 \\ 
          g:k & 0.00 & 0.00 \\ 
          k:l & 0.01 & 0.01 \\ 
           \hline
        \end{tabular}
     \end{subtable}
\end{table}

We see in Table \ref{tab:effect_sizes_CC_random_partition} that $l$ is explaining most of the variance in the CC-improvement. Parameters $k$, $g$, and the interaction between $k$ and $l$ also explain part of the variance. These results are similar to the ones for the main experiment where parties had more similar sizes (Table \ref{tab:effect_sizes_CC}).

For the PAV-improvement, we expect different outcomes. From Proposition \ref{prop:PAV-improvement_BP} we know that if the relative difference between the sizes of the parties is large, LV will perform worse than AV on proportionality: the PAV-improvement will be smaller than 1. In Figure \ref{fig:PAV-improvement_partition}, we see that this is indeed the case: the improvement is often below 1. 
Table \ref{tab:effect_sizes_PAV_random_partition} shows the effect sizes of an ANOVA model of the logarithm of the PAV-improvement. 
We can read from Table \ref{tab:effect_sizes_PAV_random_partition} that $l$ is explaining almost all variance, but $g$ and the interaction between $g$ and $l$ are also important. Note that this differs from the results of the simulations with similar party sizes in the fact that $k$ does not explain the variance in PAV-improvement here.

With respect to the effect of the value of $l$ on the welfare loss, or `AV-improvement', this effect is similar to that in the previous setting (where voters are not distributed in a random partition over parties but rather choose a party uniformly at random), but stronger. See Figure \ref{fig:AV-loss_RP} (compare to Figure \ref{fig:AV-loss}).

\begin{figure}[t]
    \centering
    \includegraphics[width=\columnwidth]{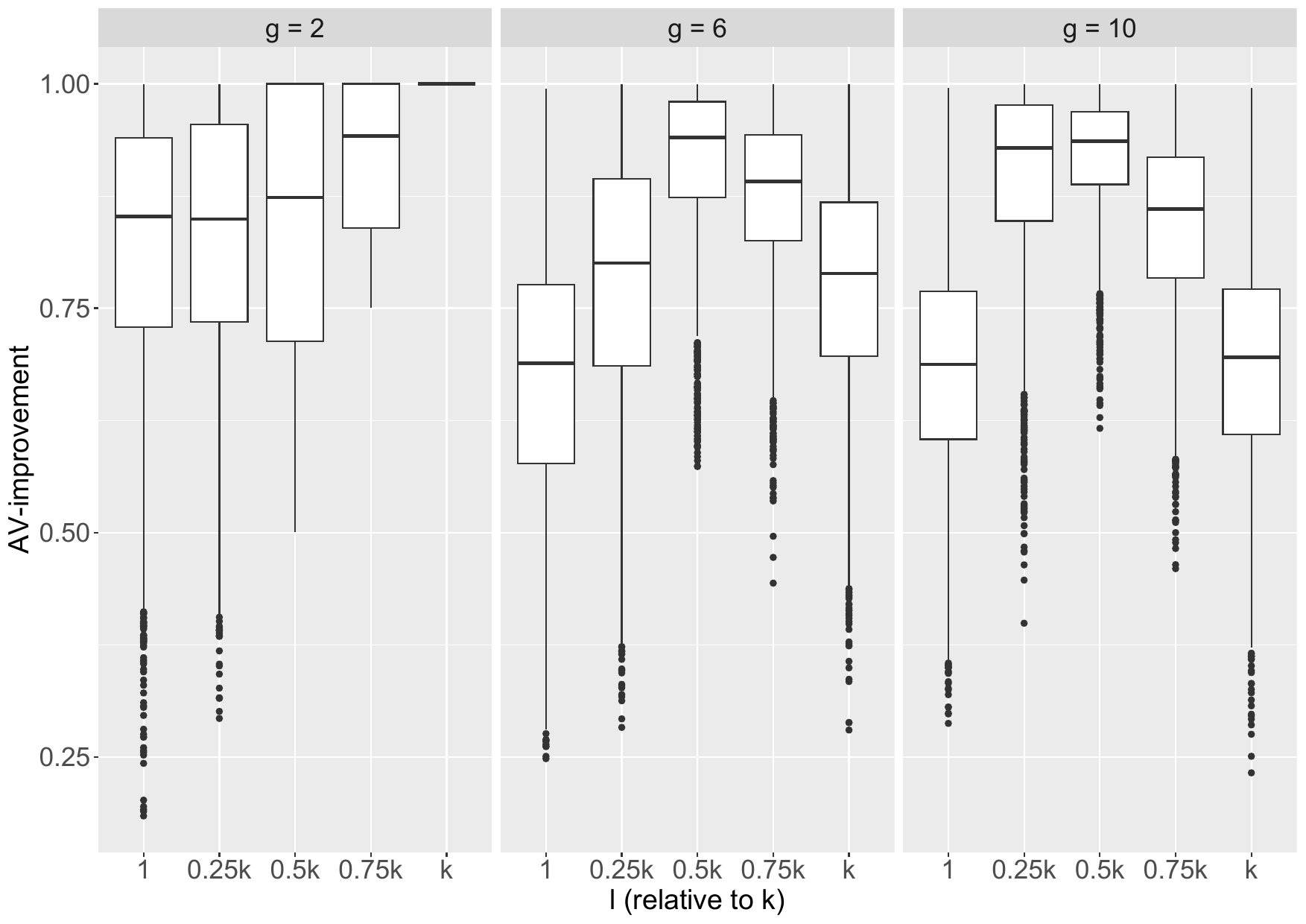}
    \caption{AV-improvement ($\frac{s_{AV}(LV)}{s_{AV}(AV)}$) for different values of $l$ (with $k\in \{8, 12, 16\}$, all values combined, $\phi=0$).  Voters are distributed over parties by a random partition. Note that the `improvement' is always at most 1, because LV cannot have a higher AV-score than AV itself.
    \vspace{15pt}
    }
    \label{fig:AV-loss_RP}
\end{figure}


\section{Utility vs. diversity}

\begin{figure}[t]
    \centering
    \includegraphics[width = 0.5\columnwidth]{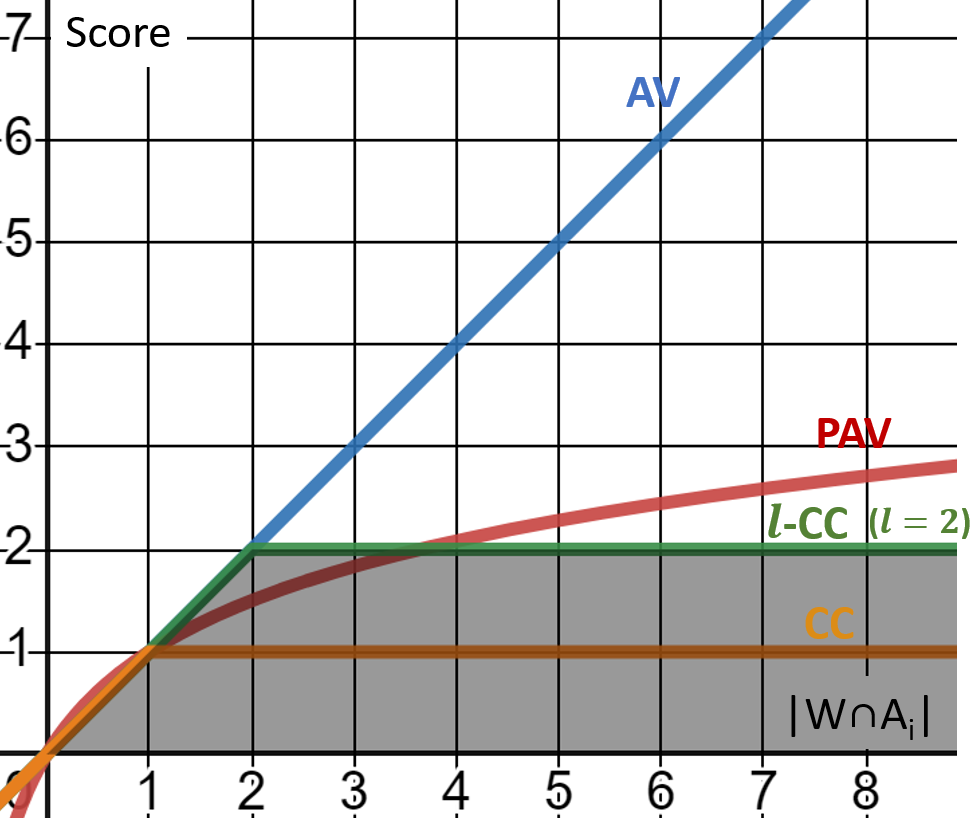}
    \caption{Scoring functions of approval based scoring rules
    \vspace{15pt}
    }
    \label{fig:scores}
\end{figure}

We can compare LV to approval based scoring rules and the utility and diversity that they provide. We can plot the score of a voter as a function of the number of approved candidates in the committee, like in \citet[Figure 2.2]{Lackner2023} or \citet[Figure 1]{Lackner2021}, see Figure \ref{fig:scores}. 
On the x-axis is the number of approved projects of a voter that is in the committee, on the y-axis the score that this number of approved projects provides to that voter. The rules AV, PAV, CC, and $l$-CC all choose the committee that gives the highest sum of scores over all voters. In general, rules that are higher in the graph provide more utility while rules that are lower provide more diversity. Since LV is not a scoring rule in this sense, we can not directly compare it to the given rules. However, we can think of LV choosing the committee that maximises the sum over voters of their score in terms of number of projects that a voter has in its ballot and that is in the committee $W$. Then we can draw this in the plot by considering the number of approved candidates of a voter that count towards this score. Maximally, this is $\min(l,|W\cup A_i|)$, the number of candidates in the voter's ballot that is elected, which is the same as the score for $l$-CC (the green line in the plot). However, it can also happen that approved candidates of a voter are only partially or not at all in her ballot, so they don't count for LV when it maximises the score. Therefore, the score of LV can be any value between 0 and the score of $l$-CC (the shaded area in the graph). Dependent on the value of $l$, this can give higher utility and lower diversity than PAV, but for any value of $l$ it can give lower utility and higher diversity than PAV (since always part of the shaded area is below the function for PAV).

\begin{figure*}[t]
    \centering
    \includegraphics[width=0.7\textwidth]{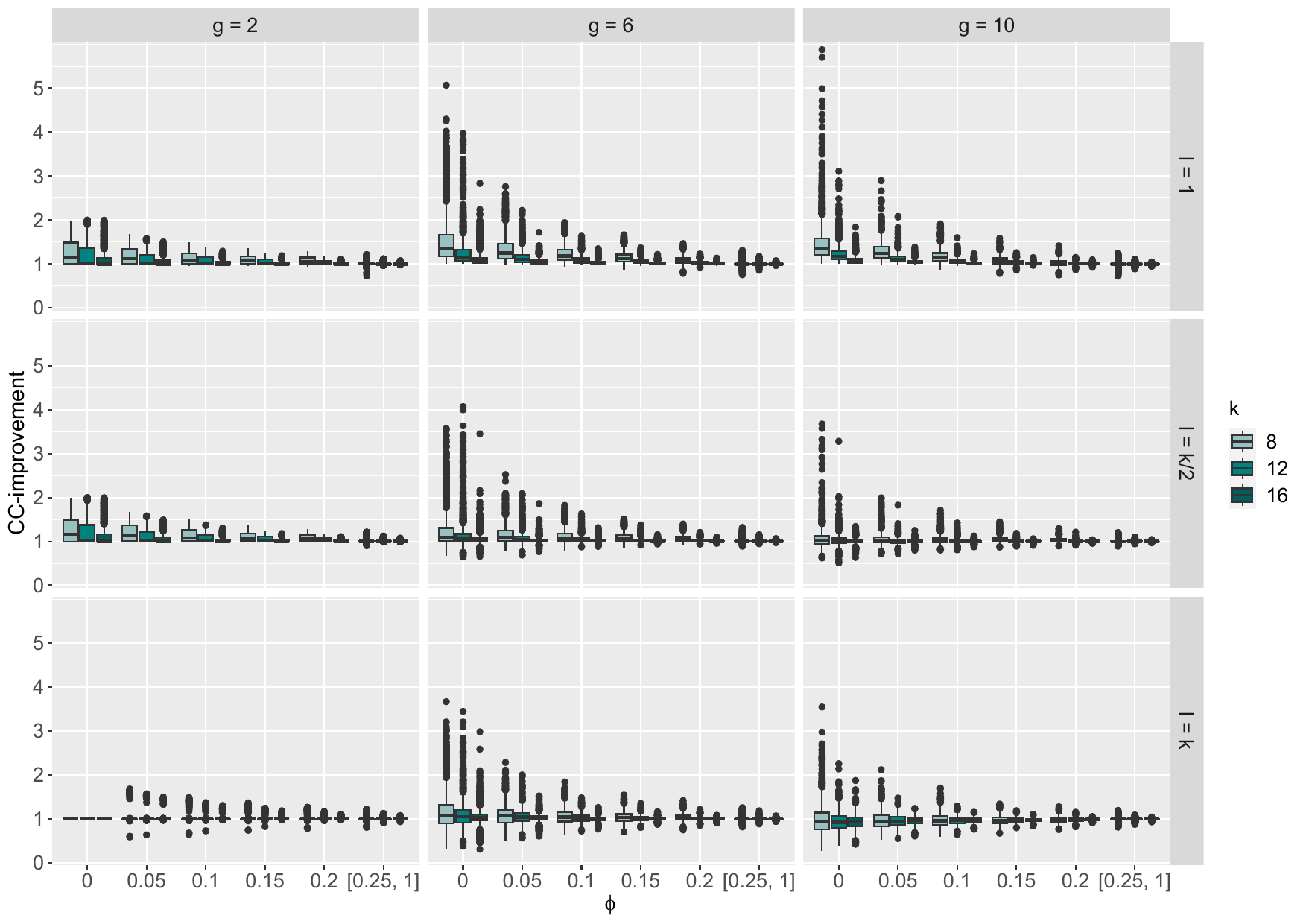}
    \caption{Boxplots of the CC-improvement for different values of $\phi$, $k$, $g$, and $l$. Voters are distributed over parties by a random partition.
    }
    \label{fig:CC-improvement_partition}
\end{figure*}

\begin{figure*}[t]
    \centering
    \includegraphics[width=0.7\textwidth]{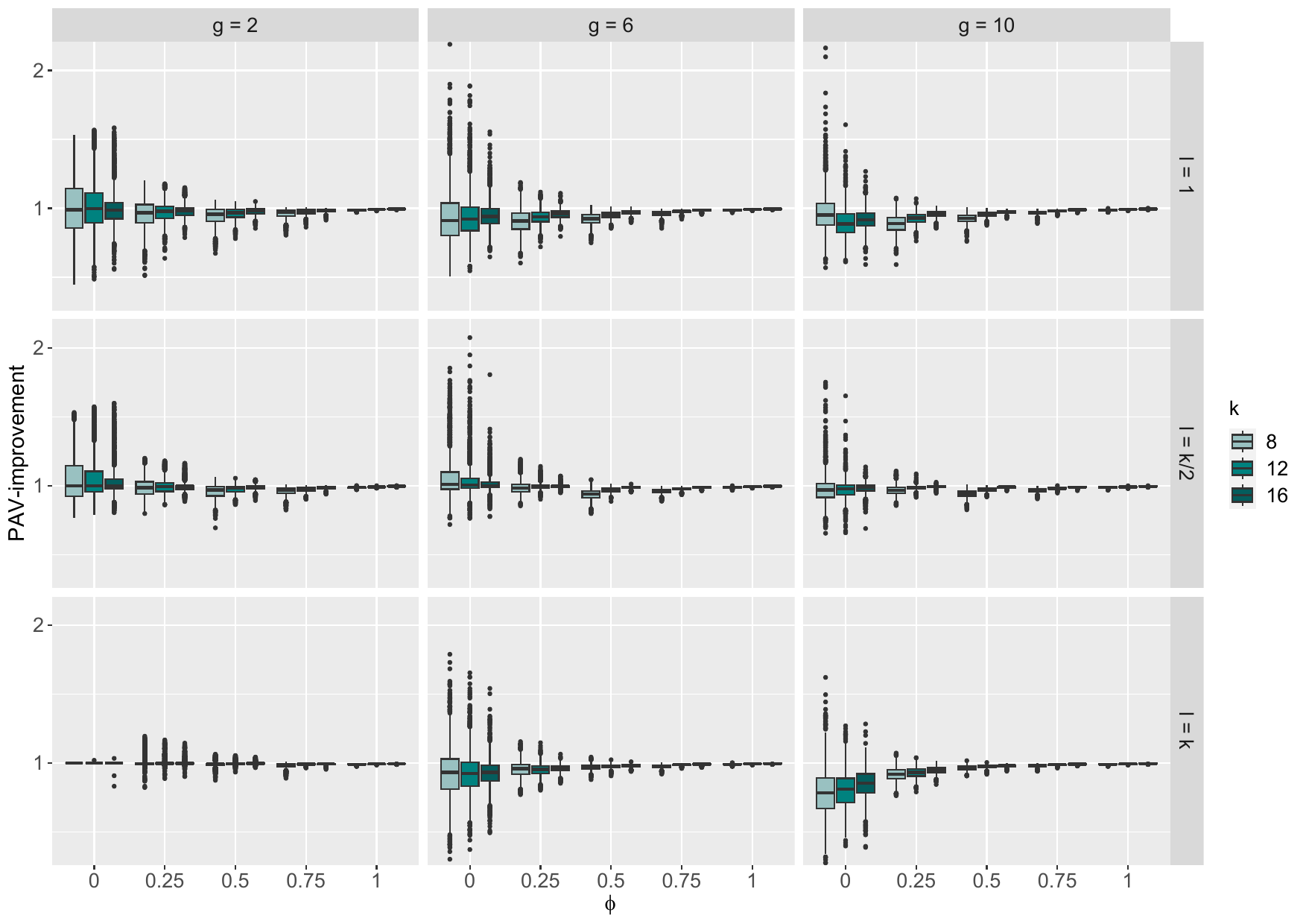}
    \caption{Boxplots of the PAV-improvement for different values of $\phi$, $k$, $g$, and $l$. Voters are distributed over parties by a random partition.
    }
    \label{fig:PAV-improvement_partition}
    \vspace{1cm}
\end{figure*}



\newpage

\begin{ack}
The authors wish to thank the anonymous reviewers of COMSOC'23, AAAI'24 and ECAI'24 for several very helpful suggestions on earlier versions of this work. The authors also wish to acknowledge Sophie van Schaik for initial contributions to the code for the experiments of Section \ref{sec:experiment}.
Zo\'{e} Christoff acknowledges support from the project Social Networks and Democracy (VENI project number Vl.Veni.201F.032) financed by the Netherlands
Organisation for Scientific Research (NWO). 
Davide Grossi acknowledges support by the 
Hybrid Intelligence Center, a 10-year program funded by the Dutch
Ministry of Education, Culture and Science through the Netherlands
Organisation for Scientific Research (NWO).
\end{ack}

\end{document}